\newcommand{\eps}{\varepsilon}
\newcommand{\E}{\mathbb{E}}
\newcommand{\Exp}{\mathop{\E}}
\newcommand{\Var}{\textbf{\textsf{Var}}}
\renewcommand{\Pr}{\textbf{\textsf{Pr}}}
\newcommand{\reals}{\ensuremath{\mathbb{R}}}
\newcommand{\otilde}{\widetilde{O}}
\newcommand{\akkest}{\textsc{Est}}
\newcommand{\vertexsample}{\textsc{vertex sample}\xspace}
\newcommand{\edgesample}{\textsc{edge sample}\xspace}
\newcommand{\npara}[1]{\vspace{0.05in} \noindent \textbf{ #1 }}
\newcommand{\maxcut}{\textsc{MaxCut}\xspace}
\newcommand{\suu}[1]{^{(#1)}}
\newtheorem{theorem}{Theorem}[section]
\newtheorem{corollary}{Corollary}[theorem]
\newtheorem{obs}[theorem]{Observation}
\newtheorem{lemma}[theorem]{Lemma}
\newtheorem{definition}[theorem]{Definition}
\newcommand{\cC}{\mathcal{C}}
\title{Sublinear Algorithms for MAXCUT and Correlation Clustering \thanks{This research was supported in part by National Science Foundation under grants IIS-1251049, IIS-1633724}}
\author[1]{Aditya Bhaskara}
\author[1,2]{Samira Daruki}
\author[1]{Suresh Venkatasubramanian}
\affil[1]{School of Computing, University of Utah}
\affil[2]{Expedia Research}
\date{}
\begin{document}

\maketitle

\begin{abstract}
We study sublinear algorithms for two fundamental graph problems, MAXCUT and correlation clustering. Our focus is on constructing core-sets as well as developing streaming algorithms for these problems. Constant space algorithms are known for {\em dense} graphs for these problems, while $\Omega(n)$ lower bounds exist (in the streaming setting) for sparse graphs. 

Our goal in this paper is to bridge the gap between these extremes. Our first result is to construct core-sets of size $\otilde(n^{1-\delta})$ for both the problems, on graphs with average degree $n^{\delta}$ (for any $\delta >0$). This turns out to be optimal, under the exponential time hypothesis (ETH).  Our core-set analysis is based on studying random-induced sub-problems of optimization problems. To the best of our knowledge, all the known results in our parameter range rely crucially on near-regularity assumptions. We avoid these by using a biased sampling approach, which we analyze using recent results on concentration of quadratic functions. We then show that our construction yields a 2-pass streaming $(1+\eps)$-approximation for both problems; the algorithm uses $\otilde(n^{1-\delta})$ space, for graphs of average degree $n^\delta$.

\end{abstract}

\section{Introduction}
\label{sec:intro}

Sublinear algorithms are a powerful tool for dealing with large data problems. The range of questions that can be answered accurately using sublinear (or even polylogarithmic) space or time is enormous, and the underlying techniques of sketching, streaming, sampling and core-sets have been proven to be a rich toolkit. 

When dealing with large graphs, the sublinear paradigm has yielded many powerful results. For many NP-hard problems on graphs, classic results from property testing~\cite{goldreich1998property, alon2009combinatorial} imply extremely efficient sublinear approximations. In the case of dense graphs, these results (and indeed older ones of~\cite{arora1995polynomial, frieze1996regularity}) provide constant time/space algorithms. More recently, graph sketching techniques have been used to obtain efficient approximation algorithms for cut problems on graphs~\cite{ahn2009graph, ahn2012graph} in a streaming setting. These algorithms use space that is nearly linear in $n$ (the number of vertices) and are sublinear in the number of edges as long as $|E| = \omega(n)$ (this is called the ``semi-streaming'' setting). 

By way of lower bounds, recent results have improved our understanding of the limits of sketching and streaming. In a sequence of results~\cite{kapralov2013better, kapralov2014approximating, kapralov20171}, it was shown that for problems like matching and \maxcut in a streaming setting, $\Omega(n)$ space is necessary in order to obtain any approximation better than a factor 2 in one round. (Note that a factor 2 is trivial by simply counting edges.) Furthermore, Andoni et al.~\cite{andoni2014sketching} showed that any sketch for all the cuts in a graph must have size $\Omega(n)$. 

While these lower bounds show that $O(n)$ space is the best possible for approximating problems like \maxcut in general, the constructions used in these bounds are quite specialized. In particular, the graphs involved are sparse, i.e., have $\Theta(n)$ edges.  Meanwhile, as we mentioned above, if a graph is {\em dense} ($\Omega(n^2)$ edges), random sampling is known to give $O(1)$ space and time algorithms. The question we study in this paper is if there is a middle ground: can we get truly sublinear (i.e., $o(n)$) algorithms for natural graph problems in between (easy) dense graphs and (hard) sparse graphs?

Our main contribution is to answer this in the affirmative. As long as a graph has average degree $n^{\delta}$ for some $\delta >0$, truly sub-linear space $(1+\epsilon)$ approximation algorithms are possible for problems such as \maxcut and correlation clustering. Note that we consider the max-agreement version of correlation clustering (see Section~\ref{sec:definitions})  Indeed, we show that a biased sample of vertices forms a ``core-set'' for these problems. A core-set for an optimization problem (see~\cite{agarwal2005geometric}), is a subset of the input with the property that a solution to the subset provides an approximation to the solution on the entire input.

Our arguments rely on understanding the following fundamental question: given a graph $G$, is the induced subgraph on a random subset of vertices a core-set for problems such as \maxcut? This question of sub-sampling and its effect on the value of an optimization problem is well studied. Results from property testing imply that a uniformly random sample of constant size suffices for many problems on {\em dense} graphs. \cite{frieze1996regularity, alon2003random} generalized these results to the case of arbitrary $k$-CSPs. More recently,~\cite{barak2011}, extending a result in~\cite{feige2002optimality}, studied the setting closest to ours. For graphs, their results imply that when the maximum and minimum degrees are both $\Theta(n^{\delta})$, then a random induced subgraph with $\otilde(n^{1-\delta})$ acts as a core-set for problems such as \maxcut. Moreover, they showed that for certain lifted relaxations, subsampling does {\em not} preserve the value of the objective. Finally, using more modern techniques,~\cite{rudelson2007} showed that the {\em cut norm} of a matrix (a quantity related to the \maxcut) is preserved up to a constant under random sampling, improving on~\cite{frieze1996regularity, alon2003random}. 
While powerful, we will see that these results are not general enough for our setting.  Thus we propose a new, conceptually simple technique to analyze sub-sampling, and present it in the context of \maxcut and correlation clustering. 

\subsection{Our Results}
\label{sec:our-results}

As outlined above, our main result is to show that there exist core-sets of size $\otilde(n^{1-\delta})$ for \maxcut and correlation clustering for graphs with $\Omega(n^{1+\delta})$ edges (where $0 < \delta \le 1$). This then leads to a two-pass streaming algorithm for \maxcut and correlation clustering on such graphs, that uses $\otilde(n^{1-\delta})$ space and produces a $1+\eps$ approximation.

This dependence of the core-set size on $\delta$ is optimal up to logarithmic factors, by a result of~\cite{fotakis2015sub}.  Specifically, \cite{fotakis2015sub} showed that any $(1+\eps)$ approximation algorithm for \maxcut on graphs of average degree $n^{\delta}$ must have running time $2^{\Omega(n^{1-\delta})}$, assuming the exponential time hypothesis (ETH).  Since a core-set of size $o(n^{1-\delta})$ would trivially allow such an algorithm (we can perform exhaustive search over the core-set), our construction is optimal up to a logarithmic factor, assuming ETH.

%
%
%

Our streaming algorithm for correlation clustering can be viewed as improving the semi-streaming (space $\tilde{O}(n)$) result of Ahn et al.~\cite{ahn2015correlation}, while using an additional pass over the data. Also, in the context of the lower bound of Andoni et al. \cite{andoni2014sketching}, our result for \maxcut can be interpreted as saying that while a sketch that approximately maintains \emph{all} cuts in a graph requires an $\Omega(n)$ size, one that preserves the \maxcut can be significantly smaller, when the graph has a polynomial average degree.

At a technical level, we analyze the effect of sampling on the value of the \maxcut and correlation clustering objectives. As outlined above, several techniques are known for such an analysis, but we give a new and conceptually simple framework that (a) allows one to analyze {\em non-uniform} sampling for the first time, and (b) gets over the assumptions of near-regularity (crucial for~\cite{feige2002optimality, barak2011}) and density (as in~\cite{frieze1996regularity, alon2003random}).
We expect the ideas from our analysis to be applicable to other settings as well, especially ones for which the `linearization' framework of~\cite{arora1995polynomial} is applicable.

The formal statement of results, an outline of our techniques and a comparison with earlier works are presented in Section~\ref{sec:technical-contribution}.

\subsection{Related Work}
\label{sec:prior-work}

\maxcut and correlation clustering are both extremely well-studied problems, and thus we will only mention the results most relevant to our work.

\npara{Dense graphs.} A graph is said to be dense if its average degree is $\Omega(n)$.  Starting with the work of Arora et al.~\cite{arora1995polynomial}, many NP hard optimization problems have been shown to admit a PTAS when the instances are dense. Indeed, a small random induced subgraph is known to be a core-set for problems such as \maxcut, and indeed all $k$-CSPs~\cite{goldreich1998property, alon2003random, frieze1996regularity, mathieu2008yet}. The work of~\cite{arora1995polynomial} relies on an elegant {\em linearization} procedure, while~\cite{frieze1996regularity, alon2003random} give a different (and more unified) approach based on ``cut approximations'' of a natural tensor associated with a CSP.

\npara{Polynomial density.} The focus of our work is on graphs that are {\em in between} sparse (constant average degree) and dense graphs. These are graphs whose density (i.e., average degree) is $n^{\delta}$, for some $0 < \delta < 1$.  Fotakis et al.~ \cite{fotakis2015sub} extended the approach of~\cite{arora1995polynomial} to this setting, and obtained $(1+\eps)$ approximation algorithms with run-time $\exp(\otilde(n^{1-\delta}))$.  They also showed that it was the best possible, under the exponential time hypothesis (ETH). By way of core-sets, in their celebrated work on the optimality of the Goemans-Williamson rounding, Feige and Schechtman~\cite{feige2002optimality} showed that a random sample of $\otilde(n^{1-\delta})$ is a core-set for \maxcut, if the graphs are {\em almost regular} and have an average degree $n^{\delta}$. This was extended to other CSPs by~\cite{barak2011}. These arguments seem to use near-regularity in a crucial way, and are based on restricting the number of possible `candidates' for the maximum cut.

\npara{Streaming algorithms and lower bounds.}  In the streaming setting, there are several algorithms \cite{ahn2009graph, kelner2013spectral, goel2010graph, ahn2012graph, goel2012single, kapralov2014single} that produce cut or spectral sparsifiers with $O(\frac{n}{\epsilon^2})$ edges using $\tilde{O}(\frac{n}{\epsilon^2})$ space.  Such algorithms preserves every cut within $(1+\epsilon)$-factor (and therefore also preserve the max cut).  Andoni et al. \cite{andoni2014sketching} showed that such a space complexity is essential; in fact,~\cite{andoni2014sketching} show that any sketch for all the cuts in a graph must have bit complexity $\Omega(\frac{n}{\epsilon^2})$ (not necessarily streaming ones). 
However, this does not rule out the possibility of being able to find a {\em maximum} cut in much smaller space.  

For \maxcut, Kapralov et al. \cite{kapralov2015streaming} and independently Kogan et al. \cite{bhm4} proved that any streaming algorithm that can approximate the \maxcut value to a factor better than $2$ requires $\tilde{O}(\sqrt{n})$ space, even if the edges are presented in random order.  For adversarial orders, they showed that for any  $\epsilon > 0$, a one-pass $(1+\epsilon)$-approximation to the max cut value must use $n^{1- O(\epsilon)}$ space.  Very recently, Kapralov et al. \cite{kapralov20171} went further, showing that there exists an $\epsilon^* > 0$ such that every randomized single-pass streaming algorithm that yields a $(1+\epsilon^*)$-approximation to the MAXCUT size must use $\Omega(n)$ space.

\npara{Correlation clustering.} 
Correlation clustering was formulated by Bansal et al. \cite{bansal2004correlation} and has been studied extensively.  There are two common variants of the problem -- maximizing agreement and minimizing disagreement. While these are equivalent for exact optimization (their sum is a constant), they look very different under an approximation lens. Maximizing agreement typically admits constant factor approximations, but minimizing disagreement is much harder. In this paper, we focus on the maximum-agreement variant of correlation clustering and in particular we focus on $(1+\epsilon)$-approximations. Here, Ailon and Karnin \cite{ailon2012note} presented an approximation scheme with sublinear query complexity (which also yields a semi-streaming algorithm) for dense instances of correlation clustering. Giotis and Guruswami \cite{giotis2006correlation} described a sampling based algorithm combined with a greedy strategy which guarantees a solution within $(\epsilon n^2)$ additive error.  (Their work is similar to the technique of Mathieu and Schudy \cite{mathieu2008yet}.) Most recently, Ahn et al. \cite{ahn2015correlation} gave a single-pass semi-streaming algorithm for max-agreement. For bounded weights, they provide an $(1+ \epsilon)$-approximation streaming algorithm and for graphs with arbitrary weights, they present a $0.766 (1-\epsilon)$-approximation algorithm. Both algorithms require $(n \epsilon^{-2})$ space. The key idea in their approach was to adapt multiplicative-weight-update methods for solving the natural SDPs for correlation clustering in a streaming setting using linear sketching techniques. 

\section{Definitions}
\label{sec:definitions}
\begin{definition}[\maxcut]
Let $G = (V, E, w)$ be a graph with weights $w : E \to \reals^+$. Let $(A,B)$ be a partition of $V$ and let $w(A, B)$ denote the sum of weights of edges between $A$ and $B$. Then  $\maxcut(G) = \max_{(A,B) \text{\ partition of\ }  V} w(A, B)$.
\end{definition}

For ease of exposition, we will assume that the input graph for \maxcut is unweighted. Our techniques apply as long as all the weights are $O(1)$.  Also, we denote by $\Delta$ the average degree, i.e., $\sum_{i,j} w_{ij}/|V|$. 


Moving now to correlation clustering, let $G = (V, E, c^+, c^-)$ be a graph with edge weights $c^{+}_{ij}$ and $c^{-}_{ij}$ where for every edge $ij$ we have $c^{+}_{ij}, c^{-}_{ij} \geq 0$ and only one of them is nonzero. For every edge $ij \in E$, we define $\eta_{ij} = c^{+}_{ij} - c^{-}_{ij}$ and for each vertex, $d_i = \sum_{i \in \Gamma(j)} |\eta_{ij}|$. We will also assume that all the weights are bounded by an absolute constant in magnitude (for simplicity, we assume it is $1$).  We define the ``average degree'' $\Delta$ (used in the statements that follow) of a correlation clustering instance to be $(\sum_i d_i  )/n$.

\begin{definition}[MAX-AGREE correlation clustering]
Given $G = (V, E, c^+, c^-)$ as above, consider a partition of $V$ into \emph{clusters} $C_1, C_2, \dots, C_k$, and let $\chi_{ij}$ be an indicator that is $1$ if $i$ an $j$ are in the same cluster and $0$ otherwise. The MAX-AGREE score of this clustering is given by
$\sum_{ij} c_{ij}^+ \chi_{ij} + \sum_{ij}  c_{ij}^- (1-\chi_{ij})$.
The goal is to find a partition \emph{maximizing} this score. The maximum value of the score over all partitions of $V$ will be denoted by $CC(G)$.
\end{definition}

Note that the objective value can be simplified to $\sum_{ij} c_{ij}^- + \eta_{ij} \chi_{ij} = C^- + \sum_{ij} \eta_{ij} \chi_{ij}$, where $C^-$ denotes the sum $\sum_{ij} c_{ij}^-$. 

We will also frequently use concentration bounds, which we state next.

\section{Preliminaries}
\label{sec:preliminaries}
We will frequently appeal to Bernstein's inequality for concentration of linear forms of random variables. For completeness, we state it here.
\begin{theorem}[Bernstein's inequality\cite{dubhashi2009concentration}] \label{bern}
Let the random variables $X_1, \cdots, X_n$ be independent with $|X_i - E[X_i]| \leq b$ for each $i \in [n]$. Let $X= \sum_i X_i$ and let $\sigma^2 = \sum_i \sigma_i^2$ be the variance of $X$. Then, for any $t>0$, 
\begin{align*}
\Pr[|X - \E[X] | > t] \leq \exp (- \frac{t^2}{2 \sigma^2 (1+ bt/3\sigma^2)})
\end{align*}
\end{theorem}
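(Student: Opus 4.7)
The plan is to prove Bernstein's inequality via the standard exponential moment method (a Chernoff-style argument), since the hypothesis of a uniform pointwise bound $|X_i - \E[X_i]| \le b$ together with variance control is exactly the setting in which the moment generating function can be controlled term-by-term. First I would reduce to centered variables: set $Y_i = X_i - \E[X_i]$, so $\E[Y_i]=0$, $|Y_i|\le b$, and $\text{Var}(Y_i)=\sigma_i^2$. Let $Y=\sum_i Y_i$, so the goal becomes a bound on $\Pr[|Y|>t]$. By symmetry it suffices to bound $\Pr[Y>t]$ and then apply the same argument to $-Y$, doubling the final bound (the factor of $2$ can be absorbed into constants, or one can just state a one-sided version).

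Next, for any $\lambda>0$, apply Markov's inequality to $e^{\lambda Y}$:
\[
\Pr[Y>t] \;\le\; e^{-\lambda t}\,\E[e^{\lambda Y}] \;=\; e^{-\lambda t}\prod_i \E[e^{\lambda Y_i}],
\]
using independence. The key step is to bound each factor. Expanding $e^{\lambda Y_i}=\sum_{k\ge 0}\lambda^k Y_i^k/k!$ and using $\E[Y_i]=0$ together with $|Y_i^k|\le b^{k-2}Y_i^2$ for $k\ge 2$ (which is where the uniform bound $|Y_i|\le b$ enters), one obtains
\[
\E[e^{\lambda Y_i}] \;\le\; 1 + \sigma_i^2\sum_{k\ge 2}\frac{\lambda^k b^{k-2}}{k!} \;=\; 1+\frac{\sigma_i^2}{b^2}\bigl(e^{\lambda b}-1-\lambda b\bigr) \;\le\; \exp\!\left(\frac{\sigma_i^2}{b^2}(e^{\lambda b}-1-\lambda b)\right),
\]
using $1+x\le e^x$. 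Multiplying across $i$ and noting $\sum_i \sigma_i^2=\sigma^2$ gives
\[
\Pr[Y>t] \;\le\; \exp\!\left(-\lambda t + \frac{\sigma^2}{b^2}(e^{\lambda b}-1-\lambda b)\right).
\]

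The final step is optimizing over $\lambda>0$. The main obstacle (really just a technical manipulation rather than a conceptual one) is converting this clean-looking bound into the specific form stated in the theorem. Two routes work: either choose $\lambda = \frac{1}{b}\ln(1+tb/\sigma^2)$, which is the exact minimizer and yields the classical Bennett bound, and then weaken it using the elementary inequality $(1+u)\ln(1+u)-u \ge \frac{u^2/2}{1+u/3}$ for $u\ge 0$ with $u=tb/\sigma^2$; or, more directly, use the inequality $e^{x}-1-x \le \frac{x^2/2}{1-x/3}$ valid for $0\le x<3$, applied to $x=\lambda b$, which turns the exponent into $-\lambda t + \frac{\lambda^2 \sigma^2/2}{1-\lambda b/3}$, and then choose $\lambda = t/(\sigma^2 + bt/3)$. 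Either way, after simplification one arrives at
\[
\Pr[Y>t] \;\le\; \exp\!\left(-\frac{t^2}{2\sigma^2(1+bt/(3\sigma^2))}\right),
\]
and combining with the analogous bound for $-Y$ gives the two-sided statement (up to the customary factor of $2$, often folded into the exponent for presentation purposes). The genuinely delicate part is the inequality $e^x-1-x \le (x^2/2)/(1-x/3)$; I would verify it by comparing Taylor series coefficients, since term-by-term $x^k/k! \le (x^k/2)\cdot(1/3)^{k-2}$ for $k\ge 2$.
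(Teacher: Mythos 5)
The paper does not prove this statement; it imports Bernstein's inequality as a black box from the Dubhashi--Panconesi textbook cited alongside it. Your proof is the standard exponential-moment (Chernoff) argument and is correct: the MGF bound $\E[e^{\lambda Y_i}] \le \exp\bigl(\tfrac{\sigma_i^2}{b^2}(e^{\lambda b}-1-\lambda b)\bigr)$, the elementary inequality $e^x - 1 - x \le \tfrac{x^2/2}{1-x/3}$ (which holds term-by-term since $k! \ge 2\cdot 3^{k-2}$ for $k\ge 2$), and the choice $\lambda = t/(\sigma^2 + bt/3)$ (note $\lambda b = 3bt/(3\sigma^2+bt) < 3$, so the inequality applies) all check out and reproduce the stated exponent. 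The one caveat you already flag: a literal union bound over $\{Y>t\}$ and $\{-Y>t\}$ gives the two-sided bound with a prefactor of $2$, which the paper's statement omits; this is a harmless presentational convention that does not affect any of the paper's downstream applications, all of which have ample slack in constants.
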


A slightly more non-standard concentration inequality we use is from Boucheron, Massart and Lugosi~\cite{boucheron2003concentration}. It can be viewed as an exponential version of the classic Efron-Stein lemma.

\begin{theorem}[\cite{boucheron2003concentration}]\label{concentration-main}
Assume that $Y_1, \cdots, Y_n$ are random variables, and $Y_1^n$ is the vector of these $n$ random variables. Let $Z = f(Y_1, \cdots, Y_n)$, where $f: \chi_n \rightarrow R$ is a measurable function. Define $Z^{(i )} = f(Y_1, \cdots, Y_{i-1}, Y_i', Y_{i+1}, \cdots, Y_n)$, where $Y_1, \cdots, Y_n'$ denote the independent copies of $Y_1, \cdots, Y_n$. Then, for all $\theta > 0$ and $\lambda \in (0, \frac{1}{\theta})$, 
\begin{align*}
\log \E[e^{\lambda (Z - \E[Z])}] \leq \frac{\lambda \theta}{1- \lambda \theta} \log \E[e^{\frac{\lambda L_{+}}{\theta}}],
\end{align*}
where $L_{+}$ is the random variable defined as
\begin{align*}
L_{+} = \E[\sum_{i = 1}^n {(Z - Z^{(i)})}^2 \mathbf{ 1}_{Z > Z^{(i)}} | Y_1^n].
\end{align*}
\end{theorem}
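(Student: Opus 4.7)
The plan is to apply the entropy method of Ledoux, as sharpened by Boucheron, Massart, and Lugosi: one controls the log moment generating function $\psi(\lambda) := \log \E[e^{\lambda(Z - \E[Z])}]$ via the entropy functional $\mathrm{Ent}(U) := \E[U \log U] - \E[U]\log \E[U]$, using the identity $\mathrm{Ent}(e^{\lambda Z}) = \E[e^{\lambda Z}]\,(\lambda \psi'(\lambda) - \psi(\lambda))$. This reduces the problem to bounding the ratio $\mathrm{Ent}(e^{\lambda Z})/\E[e^{\lambda Z}]$ in terms of the quantity $L_+$.

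First I would invoke the tensorization (subadditivity) property of entropy on product spaces: independence of the $Y_i$ yields $\mathrm{Ent}(e^{\lambda Z}) \leq \sum_i \E[\mathrm{Ent}^{(i)}(e^{\lambda Z})]$, where $\mathrm{Ent}^{(i)}$ is entropy taken over $Y_i$ alone with the other coordinates fixed. For each conditional entropy I would apply the dual variational bound $\mathrm{Ent}^{(i)}(e^{\lambda Z}) \leq \E^{(i)}[e^{\lambda Z}\,\tau(-\lambda(Z - Z^{(i)}))]$ with $\tau(x) = e^x - x - 1$, and then exploit the exchangeability of $(Y_i, Y_i')$: averaging this bound with its image under the swap $Z \leftrightarrow Z^{(i)}$ and using $e^{\lambda Z} e^{-\lambda(Z - Z^{(i)})} = e^{\lambda Z^{(i)}}$ to telescope the exponentials reduces the expression to $\tfrac{\lambda}{2}\,\E^{(i)}[(Z - Z^{(i)})(e^{\lambda Z} - e^{\lambda Z^{(i)}})]$. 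Restricting to $\{Z > Z^{(i)}\}$ (which by symmetry doubles the estimate) and applying the mean-value bound $e^{\lambda Z} - e^{\lambda Z^{(i)}} \leq \lambda(Z - Z^{(i)}) e^{\lambda Z}$ on this event gives, after summing over $i$ and using the tower property to recognize $L_+$,
$$\mathrm{Ent}(e^{\lambda Z}) \;\leq\; \lambda^2\,\E[L_+\,e^{\lambda Z}].$$

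The main obstacle, and where the auxiliary parameter $\theta$ must enter, is decoupling the two factors on the right. For this I would apply the dual variational inequality $\E[UV] \leq \mathrm{Ent}(U) + \E[U]\log \E[e^V]$ (valid for nonnegative $U$ and arbitrary $V$) with $U = e^{\lambda Z}$ and $V = \lambda L_+/\theta$; this yields $\E[L_+ e^{\lambda Z}] \leq (\theta/\lambda)\,\mathrm{Ent}(e^{\lambda Z}) + (\theta/\lambda)\,\E[e^{\lambda Z}]\,\log\E[e^{\lambda L_+/\theta}]$. Substituting back and absorbing the $\mathrm{Ent}$ term on the left (which requires precisely $\lambda\theta < 1$) produces the differential inequality
$$\lambda \psi'(\lambda) - \psi(\lambda) \;\leq\; \frac{\lambda\theta}{1 - \lambda\theta}\,\log \E[e^{\lambda L_+/\theta}].$$
Since $(\psi(\lambda)/\lambda)' = (\lambda \psi'(\lambda) - \psi(\lambda))/\lambda^2$ and $\psi(0) = \psi'(0) = 0$, I would integrate from $0$; convexity of $t \mapsto \log \E[e^{t L_+/\theta}]$ with value $0$ at the origin makes $t \mapsto \log \E[e^{tL_+/\theta}]/t$ non-decreasing, and combined with the monotonicity of $1/(1 - t\theta)$ this shows the integrand is non-decreasing on $(0, 1/\theta)$, so bounding it by its endpoint value at $\lambda$ recovers exactly the claimed prefactor $\lambda\theta/(1-\lambda\theta)$. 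The most delicate single step is the exchangeability-based symmetrization that produces the tight one-sided modified log-Sobolev inequality with constant $\lambda^2$ (rather than $\lambda^2/2$); without this sharpening, the $\theta$-trick would fall short of the precise prefactor claimed in the statement.
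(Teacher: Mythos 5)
The paper does not prove this statement; it is imported verbatim (up to notation) from Boucheron, Massart and Lugosi~\cite{boucheron2003concentration} as a black-box tool, so there is no ``paper proof'' to compare against. Your reconstruction is, however, a faithful and correct account of the argument in that reference: tensorization of entropy over the product measure, a symmetrized modified log-Sobolev bound in each coordinate yielding $\mathrm{Ent}(e^{\lambda Z}) \le \lambda^2\,\E[L_+ e^{\lambda Z}]$ (your restriction to $\{Z > Z^{(i)}\}$ and the mean-value bound $e^{\lambda Z} - e^{\lambda Z^{(i)}} \le \lambda(Z - Z^{(i)})e^{\lambda Z}$ are exactly the steps that produce the one-sided constant $\lambda^2$), then decoupling of $\E[L_+ e^{\lambda Z}]$ via the dual variational form of entropy with the auxiliary scale $\theta$, absorbing the resulting $\mathrm{Ent}$ term on the left (which is where $\lambda\theta<1$ is needed), and finally integrating the Herbst-type differential inequality for $\psi(\lambda)/\lambda$. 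Your final monotonicity observation — that $t \mapsto \tfrac{1}{t}\log\E[e^{tL_+/\theta}]$ is non-decreasing by convexity (chord slope from the origin) and that $t \mapsto (1-t\theta)^{-1}$ is increasing on $(0,1/\theta)$, so the integrand is dominated by its endpoint value — is precisely what converts the integrated bound into the stated prefactor $\lambda\theta/(1-\lambda\theta)$ with no loss. This is the same route as the cited source.

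Two minor points of care worth flagging for a written version. First, the variational bound you invoke in the form $\mathrm{Ent}^{(i)}(e^{\lambda Z}) \le \E^{(i)}\bigl[e^{\lambda Z}\,\tau(-\lambda(Z - Z^{(i)}))\bigr]$ with $\tau(x)=e^x-x-1$ should be stated together with the fact that $Y_i'$ is an independent copy integrated inside $\E^{(i)}$, since the inequality relies on $\E^{(i)}[e^{\lambda Z^{(i)}}] = \E^{(i)}[e^{\lambda Z}]$; without that identification the exchange/telescoping step that yields $\tfrac{\lambda}{2}\,\E^{(i)}[(Z-Z^{(i)})(e^{\lambda Z}-e^{\lambda Z^{(i)}})]$ does not go through. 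Second, the decoupling inequality $\E[UV] \le \mathrm{Ent}(U) + \E[U]\log\E[e^V]$ requires $U \ge 0$ (satisfied here) and is typically stated after normalizing $U$ to have mean $1$; in the unnormalized form you should double-check the bookkeeping so that the $\E[e^{\lambda Z}]$ factor ends up exactly where you need it before dividing through. Neither issue affects the correctness of the outline.
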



\section{Technical overview}
\label{sec:technical-contribution}

We now present an outline of our main ideas. Suppose we have a graph $G = (V, E)$.  First, we define a procedure \vertexsample.  This takes as input probabilities $p_i$ for every vertex, and produces a random weighted induced subgraph. 

\npara{Procedure  \vertexsample$(\{p_i\}_{i \in V})$.}
Sample a set $S'$ of vertices by selecting each vertex $v_i$ with probability $p_i$ independently. Define $H$ to be the induced subgraph of $G$ on the vertex set $S'$. For $i, j \in S'$, define $w_{ij} = 
\frac{1}{p_i p_j \Delta^2}$.\footnote{In correlation clustering, we have edge weights to start with, so the weight in $H$ will be $w_{ij} \cdot c_{ij}^+$ (or $c_{ij}^-$).}

Intuitively, the edge weights are chosen so that the total number of edges remains the same, in expectation.  Next, we define the notion of an importance score for vertices. Let $d_i$ denote the degree of vertex $i$. 
\begin{definition}\label{score}
The \emph{importance score} $h_i$ of a vertex $i$ is defined as $h_i = \min \{1, \frac{\max \{d_i , \epsilon \Delta \}}{\Delta^2 \alpha_{\epsilon}} \}$,  where $\alpha_{\epsilon}$ is an appropriately chosen parameter (for $\maxcut$, we set it to $\frac{\epsilon^4}{C \log n}$, and for correlation clustering, we set it to $\frac{\epsilon^8}{C \log n}$, where $C$ is an absolute constant).
\end{definition}




The main result is now the following:
\begin{theorem}[Core-set] \label{sampling}
Let $G = (V, E)$ have an average degree $\Delta$. Suppose we apply \vertexsample with probabilities $p_i \in [h_i, 2h_i]$ to obtain a weighted graph $H$.  Then $H$ has $\otilde(\frac{n}{\Delta})$ vertices and the quantities $\maxcut(H)$ and $CC(H)$ are within a $(1+ \epsilon)$ factor of the corresponding quantities $\maxcut(G)$ and $CC(G)$, w.p. at least $1- \frac{1}{n^2}$.
\end{theorem}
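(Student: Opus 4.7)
The plan is to verify the three claims---vertex count, $\maxcut$ value, and $CC$ value---separately, with the $\maxcut$ bound doing the main technical work and the correlation-clustering bound being an easy variation.

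The size claim is a direct expectation-plus-Chernoff argument. I would write $\E[|S'|] = \sum_i p_i \le 2\sum_i h_i$, split the sum by whether $d_i \ge \eps\Delta$, and use $\sum_i d_i = n\Delta$ to get $\sum_i h_i = O(n/(\Delta\alpha_\eps))$, which is $\tilde O(n/\Delta)$ once $\alpha_\eps$ is substituted. A Chernoff bound on the independent $X_i = \ind[i \in S']$ then upgrades this to a high-probability statement.

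For $\maxcut$, the crux is two-sided concentration of cut values under the biased sampling. For any fixed $\pi \in \{\pm 1\}^V$, the quantity $\text{cut}_H(\pi|_{S'})$ is a quadratic form in the indicators $X_i$, and a short calculation gives $\E[\text{cut}_H(\pi|_{S'})] = \text{cut}_G(\pi)/\Delta^2$. The lower bound $\maxcut(H) \gtrsim (1-\eps)\maxcut(G)/\Delta^2$ reduces to concentration of a \emph{single} random variable---the cut of the $G$-optimum restricted to $S'$---so a $1/\mathrm{poly}(n)$ tail via Bernstein (after conditioning on one side of $\pi^*$ to linearize) suffices. The upper bound is where the difficulty lies: it needs $\text{cut}_H(\pi|_{S'}) - \text{cut}_G(\pi)/\Delta^2 \le \eps \maxcut(G)/\Delta^2$ simultaneously for every $\pi$, i.e., a tail bound strong enough to absorb $2^n$ partitions. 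For this I would apply the Boucheron--Massart--Lugosi inequality (Theorem~\ref{concentration-main}) to the quadratic form. Resampling $X_i$ changes the cut by an amount bounded by $(1/p_i)\sum_{j\sim i, \pi_j\ne \pi_i} X_j/(p_j\Delta^2)$, and $L_+$ becomes a sum of such squared deviations. The pivotal estimate is that the choice $h_i \asymp \max\{d_i,\eps\Delta\}/(\Delta^2\alpha_\eps)$ forces $L_+ \le O(\alpha_\eps)\cdot[\,\text{cut}_G(\pi) + \eps\,n\Delta\,]/\Delta^4$; plugging this into BML with $\lambda,\theta = \Theta(\alpha_\eps)$ yields a sub-Gaussian tail at scale $\exp(-\Omega(\eps^2 n/\alpha_\eps))$, which for $\alpha_\eps = \eps^4/(C\log n)$ is $\exp(-\Omega(n\log n))$, comfortably absorbing the $2^n$ union bound.

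Correlation clustering is handled in exactly the same template: the objective $C^- + \sum_{ij}\eta_{ij}\chi_{ij}$ is a signed quadratic form in the $X_i$'s for any fixed clustering of $V$, and after a standard reduction to clusterings with $O(1/\eps)$ parts the union bound runs over at most $k^n$ configurations; the enlarged budget $\alpha_\eps = \eps^8/(C\log n)$ accommodates this, with the same $L_+$ calculation supplying the tail bound. The main obstacle throughout is the $L_+$ estimate. Naive McDiarmid-style Lipschitz bounds are catastrophically loose on irregular graphs, because the per-vertex change scales with $d_i/p_i$, which can be enormous for heavy-tailed degree distributions. The whole point of the biased probabilities $h_i$ is to make the effective per-term contribution to $L_+$ scale as $\alpha_\eps$ times a \emph{degree-weighted} quantity; verifying this cancellation is where I expect nearly all of the technical effort to go, and is also what distinguishes the argument from previous approaches that required near-regularity.
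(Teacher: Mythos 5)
Your decomposition into size, lower bound, and upper bound is sound, and the size and lower-bound arguments match the paper's. The upper bound, however, contains the exact trap the paper is designed to avoid, and the proposal does not escape it.

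The issue is the union bound over $2^n$ partitions. You propose to show, for each fixed $\pi\in\{\pm1\}^V$, a tail bound on $\text{cut}_H(\pi|_{S'})-\text{cut}_G(\pi)/\Delta^2$ strong enough to be multiplied by $2^n$. You claim $L_+ \le O(\alpha_\eps)\cdot[\text{cut}_G(\pi)+\eps n\Delta]/\Delta^4 = O(\alpha_\eps n/\Delta^3)$ and a resulting tail of $\exp(-\Omega(\eps^2 n/\alpha_\eps)) = \exp(-\Omega(n\log n))$. Neither figure is correct. Resampling $X_i$ changes the cut by roughly $(d_i+\eps\Delta)/(p_i\Delta^2)$ (after the ``good'' conditioning), so $L_+$ is, in order of magnitude, $\sum_i (d_i+\eps\Delta)^2/(p_i\Delta^4)$. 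With $p_i \sim \max\{d_i,\eps\Delta\}/(\Delta^2\alpha_\eps)$ each term is $\Theta(\alpha_\eps(d_i+\eps\Delta)/\Delta^2)$, giving $L_+ = \Theta(\alpha_\eps n/\Delta)$ --- a factor $\Delta^2$ larger than your estimate. Even the most optimistic variance-based estimate is $\Var[\text{cut}_H] = \Theta(\alpha_\eps^2 n/\Delta)$. Setting the target deviation $t=\eps n/\Delta$ gives, in either case, an exponent $t^2/L_+$ or $t^2/\Var$ of order $n\log n/(\eps^{O(1)}\Delta)$: the $\Delta$ in the denominator survives. The resulting tail $\exp(-\tilde O(n/\Delta))$ absorbs $2^{\tilde O(n/\Delta)}$ events, not $2^n$. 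To force the exponent up to $\Omega(n)$ you would need $\alpha_\eps \lesssim 1/\sqrt{\Delta}$, which inflates $\sum_i p_i$ to $\Omega(n/\sqrt{\Delta})$ and destroys the claimed core-set size $\tilde O(n/\Delta)$. This is exactly the obstruction the paper flags in Section~\ref{sec:paper-outline} (``the natural approach of showing that every cut in $G$ is preserved does not work as $2^n$ cuts is too many for the purposes of a union bound''), and it is why near-regularity was essential in the earlier combinatorial arguments of \cite{feige2002optimality,barak2011}.

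What the paper does instead is a double-sampling reduction that never asks for a union bound over $2^n$ partitions of $V$. It introduces the LP estimator $\akkest$ of \cite{arora1995polynomial}, shows via Lemma~\ref{strategy1-2} that the seed set $S$ (of size $\tilde O(n/\Delta)$) can be coupled between the runs of $\akkest$ on $G$ and on $H$, and then proves Theorem~\ref{thm:main-lp-sample}: that the \emph{dual} of the sampled LP is close to the dual of the full LP, for every partition of $S$. The concentration bound via Theorem~\ref{concentration-main} is applied to a quadratic function $f(Y)$ arising in the dual comparison (Section~\ref{sec:blm}), not to the cut function itself, and the union bound is over the $2^{|S|}$ partitions of $S$, which matches the available tail $\exp(-\Omega(|S|))$. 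This dual/double-sampling machinery is the missing ingredient in your proposal; the BML tool, the ``good'' conditioning, and the role of the biased $h_i$ that you correctly identified all appear in the paper, but in the service of controlling this much smaller family of events.
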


While the number of vertices output by the vertex sample procedure is small, we would like a core-set of small ``total size''. This is ensured by the following.

\npara{Procedure \edgesample $(H)$.} Given a weighted graph $H$ with total edge weight $W$, sample each edge $e \in E(H)$ independently with probability $p_{e} :=  \min(1, \frac{8 |S'| w_{e}}{\eps^2 W})$, to obtain a graph $H'$.  Now, assign a weight $w_e/p_e$ to the edge $e$ in $H'$.

The procedure samples roughly $|S'|/\eps^2$ edges, with probability proportional to the edge weights.  The graph is then re-weighted in order to preserve the total edge weight in expectation, yielding:

\begin{theorem}[Sparse core-set] \label{coreset}
Let $G$ be a graph $n$ vertices and average degree $\Delta = n^{\delta}$.  Let $H'$ be the graph obtained by first applying \vertexsample and then applying \edgesample. Then $H'$ is a $\epsilon$-core-set for $\maxcut$ and $CC$, having size $\widetilde{O}(\frac{n}{\Delta}) = \widetilde{O}(n^{1 - \delta})$.
\end{theorem}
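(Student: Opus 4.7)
The plan is to combine Theorem~\ref{sampling} with an independent analysis of the edge-sampling step and compose the two approximation losses. The size bound is immediate: Theorem~\ref{sampling} gives $|V(H)| = \widetilde{O}(n/\Delta)$ vertices, and by construction $\E[|E(H')|] = \sum_e p_e \leq 8|S'|/\eps^2 = \widetilde{O}(n/\Delta)$; a Chernoff bound on the independent Bernoulli trials of \edgesample makes this hold with high probability. After invoking Theorem~\ref{sampling} at accuracy $\eps/3$ (so that $\maxcut(H)$ and $CC(H)$ are within $(1\pm\eps/3)$ of their $G$-counterparts), it suffices to show that \edgesample preserves $\maxcut(H)$ and $CC(H)$ to within another $(1+\eps/3)$.

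For $\maxcut$ I would argue by Bernstein on each cut. Fix any cut $(A,B)$ of $H$ and write $w_{H'}(A,B) = \sum_{e \in E(A,B)} (w_e/p_e)\,\ind_{[e\text{ sampled}]}$, so $\E[w_{H'}(A,B)] = w_H(A,B)$. Let $W$ denote the total edge weight of $H$. Since $w_e/p_e = \eps^2 W/(8|S'|)$ whenever $p_e<1$, the total variance obeys $\sigma^2 \leq w_H(A,B) \cdot \eps^2 W/(8|S'|)$ with per-edge range $b = \eps^2 W/(8|S'|)$. Theorem~\ref{bern} with $t = \Theta(\eps W)$ then yields tail probability $\exp(-\Omega(|S'|))$, where the hidden constant is driven by the absolute constant in the definition of $p_e$. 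Union-bounding over the $\leq 2^{|S'|}$ cuts of $H$---feasible once the Bernstein exponent exceeds $|S'|\ln 2 + O(\log n)$, which only requires the constant in $p_e$ to be a sufficiently large absolute constant---yields $|w_{H'}(A,B) - w_H(A,B)| = O(\eps W)$ for every cut simultaneously. Using $\maxcut(H) \geq W/2$, this additive bound translates to the required $(1+\eps/3)$ multiplicative bound on $\maxcut(H')$.

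The correlation-clustering analysis runs along the same template, replacing cuts by partitions $P$ of $S'$ and writing the objective as $C^- + \sum_{ij}\eta_{ij}\chi_{ij}^P$. After \edgesample, both pieces become sums of independent bounded random variables over sampled edges; the bounded-weight assumption $|\eta_{ij}| \leq c^{+}_{ij}+c^{-}_{ij}$ ensures $\sum|\eta_{ij}| \leq W$, so the variance computation above goes through verbatim with $w_H(A,B)$ replaced by $CC_H(P)$. The union bound is now over at most $|S'|^{|S'|}$ partitions, forcing the Bernstein exponent to exceed $|S'|\log|S'| + O(\log n)$; this is achieved by boosting the constant in $p_e$ from $O(1)$ to $\Theta(\log|S'|)$, a logarithmic factor absorbed into the $\widetilde{O}$ bound on $|E(H')|$. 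The invariant $CC(H) \geq W/2$ again converts additive to multiplicative, and composing the $(1\pm\eps/3)$ losses from \vertexsample and \edgesample yields the claimed $(1+\eps)$-core-set. The main obstacle across both problems is precisely this calibration---the edge-sampling rate must outrun the combinatorial entropy of the family over which we union-bound ($2^{|S'|}$ cuts, or the Bell number of $|S'|$ partitions)---and once the constants are set, the remaining Bernstein arithmetic is routine.
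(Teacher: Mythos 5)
Your proposal is correct and follows essentially the same route as the paper: Bernstein's inequality on a fixed cut/partition, followed by a union bound, with the edge-sampling probabilities engineered to make the tail $\exp(-\Omega(|S'|))$ beat the entropy of the family. The paper's Lemma~\ref{edge-sample} and its proof cover exactly your \maxcut argument (all $2^{|S'|}$ cuts preserved additively, then $\maxcut(H)\ge W/2$ converts to multiplicative), and the proof of Theorem~\ref{coreset} in the paper adds only the expected edge-count bound that you also handle.

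One point worth flagging: for correlation clustering the paper does not actually spell out the edge-sampling step (Lemma~\ref{edge-sample} is stated for cuts only), so you are filling in a genuine omission. Your route---union-bounding over all $B_{|S'|}\le |S'|^{|S'|}$ partitions and compensating by boosting the constant in $p_e$ to $\Theta(\log|S'|)$---is valid and the $\log$ is absorbed into the $\widetilde{O}$, but it is looser than what the paper's own machinery affords. Lemma~\ref{lem:corr-smallk} (proved later, in Section~\ref{sec:correlation}) lets one restrict to partitions into at most $1/\eps$ clusters, so the union bound only needs to cover $(1/\eps)^{|S'|}$ partitions and the boost to the constant need only be $\Theta(\log(1/\eps))$, independent of $n$. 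Using that lemma keeps the edge budget tied to $1/\eps$ alone, which matches the paper's stated $\edgesample$ parameters more closely. Either way the conclusion holds; your version is simply slightly more wasteful in the constant and slightly more self-contained (it does not rely on the small-$k$ reduction).
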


We then show how to implement the above procedures in a streaming setting. This gives: 
\begin{theorem}[Streaming algorithm]\label{streaming}
Let $G$ be a graph on $n$ vertices and average degree $\Delta = n^{\delta}$, whose edges arrive in a streaming fashion in adversarial order. There is a two-pass streaming algorithm with space complexity $\otilde(\frac{n}{\Delta}) = \otilde(n^{1 - \delta})$ for computing a $(1+ \epsilon)$-approximation to $\maxcut(G)$ and $CC(G)$.
\end{theorem}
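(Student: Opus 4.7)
The strategy is to implement the core-set construction of Theorem~\ref{coreset} in two streaming passes and then compute $\maxcut$ or $\cc$ on the resulting sparse weighted core-set $H'$ by exhaustive enumeration of partitions. Since $|V(H')| = \otilde(n^{1-\delta})$, brute force takes $2^{\otilde(n^{1-\delta})}$ time but only $\mathrm{poly}(|V(H')|)$ space, which fits the budget. Pass~1 determines $\Delta$ and the vertex sample $S$; pass~2 collects the edges induced on $S$ and sub-samples them via \edgesample.

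The main technical step is producing $S$ in $\otilde(n/\Delta)$ space while achieving $p_v \in [h_v, 2h_v]$ for every vertex, even though $h_v$ depends on the unknown degree $d_v$. I would replace direct degree-proportional sampling by an edge-based procedure that matches the same distribution. Fix pairwise-independent hashes $r_v \in [0,1]$ on vertices and $r_e \in [0,1]$ on edges. Include $v$ in $S_0$ iff $r_v \leq q_{\min} := \epsilon/(\Delta \alpha_\epsilon)$; call an edge \emph{voting} if $r_e \leq q_e := \Theta(1/(\Delta^2 \alpha_\epsilon))$, and add both endpoints of every voting edge to $S_1$. With $S = S_0 \cup S_1$, the effective inclusion probability $1-(1-q_{\min})(1-q_e)^{d_v}$ lies in $[h_v, 2h_v]$ for an appropriate choice of constants, so Theorem~\ref{sampling} applies. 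Bernstein (Theorem~\ref{bern}) then gives $|S| \leq n q_{\min} + 2 q_e m = \otilde(n/\Delta)$ with high probability. In pass~2, membership in $S$ can be tested online from the vertex hashes and from the voting edges stored during pass~1, so we re-stream the edges, keep only those with both endpoints in $S$, and interleave \edgesample's importance sampling to cap the retained edge count at $\otilde(|S|/\epsilon^2)$.

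The main obstacle is pipelining the determination of $\Delta$ with the identification of the voting edges in pass~1, since $q_e$ itself depends on $\Delta$. I would handle this by maintaining, at each time $t$, the edges with the smallest $r_e$ values using a priority queue whose size is bounded by the current lower-bound estimate on $\Delta$; once the final $\Delta$ is known at the end of pass~1, trim the queue to exactly $\{e : r_e \leq q_e\}$. Because this set is determined by the hashes and $\Delta$ alone, it is independent of the arrival order, so the distribution required by Theorem~\ref{sampling} is preserved. Given this, the space bound, the two-pass guarantee, and the $(1+\epsilon)$-approximation all follow from Theorems~\ref{coreset} and~\ref{sampling} together with standard concentration for $|S|$ and for the number of retained edges.
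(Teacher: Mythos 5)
Your two-pass skeleton (Pass~1 builds the vertex sample $S$, Pass~2 collects the induced edges and sparsifies via \edgesample) matches the paper's, but your Pass-1 mechanism is genuinely different and, as proposed, has gaps. The paper uses a bank of $\ell_1$-samplers (Lemma~\ref{lemma:jst}) to capture medium-degree vertices with probability proportional to degree, a uniform draw for low-degree vertices, and a CountMin sketch (Lemma~\ref{lemma:cm}) to deterministically include every heavy hitter with $d_i \ge \alpha_\eps\Delta^2$; crucially, the $\ell_1$-sampler also reports an approximation $v_i$ of the inclusion probability, which Pass~2 needs to set $w_{ij} = 1/(v_iv_j\Delta^2)$.

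Several steps in your edge-voting replacement do not go through. (i) Theorem~\ref{sampling}, through \vertexsample, requires the vertex indicators $Y_i$ to be \emph{independent}, and the whole argument of Section~\ref{sec:dual} (decoupling, the conditional MGF bounds via Theorem~\ref{concentration-main}) is built on that independence. A voting edge forces both of its endpoints into $S$, so inclusions of adjacent vertices are strongly \emph{positively} correlated; this is not a regime the concentration proof tolerates, and you cannot simply assert ``so Theorem~\ref{sampling} applies.'' (ii) For a heavy vertex with $d_v \ge \Delta^2\alpha_\eps$ the scheme needs $p_v = h_v = 1$, but $1-(1-q_{\min})(1-q_e)^{d_v} < 1$, so $p_v < h_v$ precisely where the paper forces inclusion via the CountMin set $S_h$. (iii) The marginal $1-(1-q_e)^{d_v}$ is a product over the $d_v$ incident edge-hashes and requires full (or at least high-order) independence of those hashes; pairwise independence controls only first and second moments and does not give this product form. (iv) The priority-queue fix for unknown $\Delta$ does not respect the space bound: with running estimate $\Delta_t = 2m_t/n$, the working threshold $q_{e,t}=\Theta(1/(\Delta_t^2\alpha_\eps))$ makes you retain about $\min(m_t,\ n^2/(m_t\alpha_\eps))$ edges, which peaks at $\Theta(n/\sqrt{\alpha_\eps}) = \otilde(n)$ near $m_t \approx n/\sqrt{\alpha_\eps}$, far above the $\otilde(n/\Delta)$ budget. (The paper sidesteps this last issue by taking $\Delta$ as an input to Algorithm~\ref{alg:pass1}.) Finally, Pass~2 must reweight by $1/(p_ip_j\Delta^2)$, which in your scheme requires recovering each $d_i$ for $i \in S$; you would need an additional mechanism for this, whereas the paper's samplers hand back $v_i$ directly.
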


Of these, Theorem~\ref{sampling} is technically the most challenging. Theorem~\ref{coreset} follows via standard edge sampling methods akin to those in \cite{ahn2009graph} (which show that w.h.p., every cut size is preserved). It is presented in Section~\ref{sec:edge-sample}, for completeness. The streaming algorithm, and a proof of Theorem~\ref{streaming}, are presented in Section~\ref{sec:2-pass-streaming}. In the following section, we give an outline of the proof of Theorem~\ref{sampling}.

\subsection{Proof of the sampling result (theorem~\ref{sampling}): an outline}
\label{sec:paper-outline}

In this outline we will restrict ourselves to the case of \maxcut as it illustrates our main ideas.  Let $G$ be a graph as in the statement of the theorem, and let $H$ be the output of the procedure \vertexsample.

Showing that $\maxcut(H)$ is at least $\maxcut(G)$ up to an $\eps n \Delta$ additive term is  easy. We simply look at the projection of the maximum cut in $G$ to $H$ (see, for instance,~\cite{feige2002optimality}). Thus, the challenge is to show that a sub-sample cannot have a significantly larger cut, w.h.p. The natural approach of showing that every cut in $G$ is preserved does not work as $2^n$ cuts is too many for the purposes of a union bound.

There are two known ways to overcome this.  
The first approach is the one used in~\cite{goldreich1998property, feige2002optimality} and~\cite{barak2011}.  These works essentially show that in a graph of average degree $\Delta$, we need to consider only roughly $2^{n/\Delta}$ cuts for the union bound.  If all the degrees are roughly $\Delta$, then one can show that all these cuts are indeed preserved, w.h.p.  There are two limitations of this argument. First, for non-regular graphs, the variance (roughly $\sum_i p d_i^2$, where $p$ is the sampling probability) can be large, and we cannot take a union bound over $\exp(n/\Delta)$ cuts. Second, the argument is combinatorial, and it seems difficult to generalize this to analyze non-uniform sampling.

The second approach is via {\em cut decompositions}, developed in~\cite{frieze1996regularity, alon2003random}.  Here, the adjacency matrix $A$ is decomposed into $\text{poly}(1/\eps)$ rank-1 matrices, plus a matrix that has a small {\em cut norm}.  It turns out that solving many quadratic optimization problems (including \maxcut) on $A$ is equivalent (up to an additive $\eps n\Delta$) to solving them over the sum of rank-1 terms (call this $A'$). Now, the adjacency matrix of $H$ is an induced square sub-matrix of $A$, and since we care only about $A'$ (which has a simple structure),~\cite{alon2003random} could show that $\maxcut(H) \le \maxcut(G) + \eps n^2$, w.h.p. To the best of our knowledge, such a result is not known in the ``polynomial density'' regime (though the cut decomposition still exists). 

\npara{Our technique.}  We consider a new approach.  While inspired by ideas from the works above, it also allows us to reason about non-uniform sampling in the polynomial density regime.  Our starting point is the result of Arora et al.~\cite{arora1995polynomial}, which gives a method to estimate the \maxcut using a collection of linear programs (which are, in turn, derived using a sample of size $n/\Delta$).  Now, by a double sampling trick (which is also used in the approaches above), it turns out that showing a sharp concentration bound for the value of an {\em induced sub-program} of an LP as above, implies Theorem~\ref{sampling}.  As it goes via a linear programming and not a combinatorial argument, analyzing non-uniform sampling turns out to be quite direct.  Let us now elaborate on this high level plan.

\npara{Induced sub-programs.}  First, we point out that an analysis of induced sub-programs is also an essential idea in the work of~\cite{alon2003random}. The main difference is that in their setting, only the variables are sub-sampled (and the number of constraints remains the same). In our LPs, the constraints correspond to the vertices, and thus there are fewer constraints in the sampled LP.  This makes it harder to control the value of the objective.  At a technical level, while a duality-based argument using Chernoff bounds for linear forms suffices in the setting of~\cite{alon2003random}, we need the more recent machinery on concentration of quadratic functions.

We start by discussing the estimation technique of~\cite{arora1995polynomial}.


%
\npara{Estimation with Linear Programs.}
The rough idea is to start with the natural quadratic program for \maxcut: $\max \sum_{(i,j) \in E} x_i (1-x_j)$, subject to $x_i \in \{0,1\}$.\footnote{This is a valid formulation, because for every $x_i \ne x_j$ that is an edge contributes $1$ to the objective, and $x_i = x_j$ contribute $0$.}  This is then ``linearized'' using 
a seed set of vertices sampled from $G$.  We refer to  Section~\ref{sec:gener-estim-meth} for details.  For now, $\textsc{Est}(G)$ is a procedure that takes a graph $G$ and a set of probabilities $\{ \gamma_i \}_{i \in V(G)}$, samples a seed set using $\gamma$, and produces an estimate of \maxcut$(G)$. 

Now, suppose we have a graph $G$ and a sample $H$.  We can imagine running $\textsc{Est}(G)$ and $\textsc{Est}(H)$ to obtain good estimates of the respective \maxcut values.  But now suppose that in both cases, we could use precisely the same seed set.  Then, it turns out that the LPs used in $\textsc{Est}(H)$ would be `induced' sub-programs (in a sense we will detail in Section~\ref{sec:dual}) of those used in $\textsc{Est}(H)$, and thus proving Theorem~\ref{sampling} reduces to showing a sufficiently strong concentration inequality for sub-programs.

The key step above was the ability to use same seed set in the \textsc{Est} procedures.  This can be formalized as follows. 

\npara{Double sampling.} Consider the following two strategies for sampling a \emph{pair} of subsets $(S, S')$ of a universe $[n]$ (here, $q_v \le p_v$ for all $v$):
\begin{itemize}
\item Strategy A:  choose $S' \subseteq [n]$, by including every $v$ w.p. $p_v$, independently; then for $v \in S'$, include them in $S$ w.p. $q_v/p_v$, independently.
\item Strategy B:  pick $S \subseteq [n]$, by including every $v$ w.p. $q_v$; then iterate over $[n]$ once again, placing $v \in S'$ with a probability equal to $1$ if $v\in S$, and $p_v^\ast$ if $v \not\in S$. 
\end{itemize}

\begin{lemma} \label{strategy1-2}
Suppose $p^\ast_v = p_v (1 - \frac{q_v}{p_v})/(1 - p_v)$. Then the distribution on pairs $(S, S')$ obtained by strategies A and B are identical.
\end{lemma}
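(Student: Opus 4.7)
The plan is to check the equality of the two distributions one vertex at a time. Under both Strategies A and B, the decisions made at distinct vertices are mutually independent, so the full joint distribution on $(S,S')$ factorizes as a product over $v \in [n]$. It therefore suffices to show that for each fixed $v$ the joint distribution of the indicator pair $(\mathbf{1}[v \in S],\, \mathbf{1}[v \in S'])$ matches across the two strategies; tensoring over $v$ then yields the lemma.

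For a single vertex, this indicator pair takes four possible values, but the outcome $v \in S$ with $v \notin S'$ has probability zero under both strategies: under Strategy A because $S$ is constructed as a subset of $S'$, and under Strategy B because $v \in S$ forces $v$ into $S'$ with probability one. Three cases therefore remain to match.

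Next I would write down those three probabilities under each strategy directly from the definitions. Under Strategy A, conditioning on whether $v$ is placed in $S'$ gives the outcomes $(1,1),(0,1),(0,0)$ probabilities $q_v$, $p_v - q_v$, and $1 - p_v$ respectively. Under Strategy B, conditioning on whether $v$ is placed in $S$ gives the same three outcomes probabilities $q_v$, $(1-q_v)\,p^\ast_v$, and $(1-q_v)(1-p^\ast_v)$. The $(1,1)$ probabilities already agree, so the whole lemma reduces to a single algebraic identity: that the prescribed value of $p^\ast_v$ makes $(1-q_v)\,p^\ast_v$ equal to $p_v - q_v$. Once this holds, the $(0,0)$ entries match for free because the three probabilities must sum to one in each case.

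I do not anticipate any real obstacle: the argument is a direct decomposition of conditional probabilities, followed by a one-line substitution to verify the stated formula for $p^\ast_v$. The only subtlety worth being explicit about is independence across vertices, which is what licenses the reduction from matching the full joint distribution on $(S,S')$ to matching the marginal joint distribution of $(\mathbf{1}[v \in S],\, \mathbf{1}[v \in S'])$ at each $v$ separately.
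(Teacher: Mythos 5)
Your per-vertex decomposition is correct and is essentially the same argument the paper gives (it also reduces to matching the joint law of $\bigl(\mathbf{1}[v\in S],\,\mathbf{1}[v\in S']\bigr)$ vertex by vertex, using independence). The reduction you state is also right: the three non-trivial cells give $q_v$, $p_v-q_v$, $1-p_v$ under Strategy A, and $q_v$, $(1-q_v)p^\ast_v$, $(1-q_v)(1-p^\ast_v)$ under Strategy B, so everything hinges on the single identity $(1-q_v)p^\ast_v = p_v - q_v$.

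However, the ``one-line substitution'' you defer to is exactly where the argument breaks: plugging in the lemma's value $p^\ast_v = (p_v-q_v)/(1-p_v)$ turns the required identity into $(1-q_v)(p_v-q_v)/(1-p_v) = p_v-q_v$, i.e.\ $1-q_v = 1-p_v$, which fails whenever $p_v \neq q_v$. Your case analysis is the correct one; what it reveals is that the lemma's formula has a typo. The matching value is $p^\ast_v = (p_v-q_v)/(1-q_v)$, with $1-q_v$ (not $1-p_v$) in the denominator, since in both strategies $\Pr[v\notin S]=1-q_v$. The paper's own proof makes the same slip: it writes
\[
\Pr_{\text{A}}[\,v\in S' \mid v\notin S\,] \;=\; \frac{p_v\bigl(1-\tfrac{q_v}{p_v}\bigr)}{1-p_v},
\]
but the denominator should be $\Pr[v\notin S]=1-q_v$, not $1-p_v$ (the paper even notes one sentence earlier that $\Pr[v\in S]=q_v$). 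You should not have waved off the final check as obvious --- carrying it out is what would have caught the discrepancy.
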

The proof is by a direct calculation, which we state it here.
\begin{proof}
Let us examine strategy A. It is clear that the distribution over $S$ is precisely the same as the one obtained by strategy B, since in both the cases, every $v$ is included in $S$ independently of the other $v$, with probability precisely $q_v$.  Now, to understand the joint distribution $(S, S')$, we need to consider the conditional distribution of $S'$ given $S$. Firstly, note that in both strategies, $S \subseteq S'$, i.e., $\Pr[ v \in S' | v \in S] =1$. Next, we can write $\Pr_{\text{strategy A}} [ v \in S' | v \not\in S]$ as
\[ \frac{\Pr_{\text{strategy A}} [ v \in S' \wedge v \not\in S ]}{\Pr_{\text{strategy A}} [ v \not\in S]} = \frac{p_v (1-\frac{q_v}{p_v})}{1- p_v}. \]

Noting that $\Pr_{\text{strategy} B} [ v \in S' | v \not\in S] = p^\ast_v$ (by definition) concludes the proof.
\end{proof}

\npara{Proof of Theorem~\ref{sampling}. }   To show the theorem, we use $p_v$ as in the statement of the theorem, and set $q$ to be the uniform distribution $q_v = \frac{16 \log n}{\eps^2 \Delta}$.  The proof now proceeds as follows.  Let $S'$ be a set sampled using the probabilities $p_v$. These form the vertex set of $H$.  Now, the procedure $\akkest$ on $H$ (with sampling probabilities $q_v/p_v$) samples the set $S$ (as in strategy A).  By the guarantee of the estimation procedure (Corollary~\ref{condition-H}), we have $\maxcut(H) \approx \akkest(H)$, w.h.p.   Next, consider the procedure \akkest{} on $G$ with sampling probabilities $q_v$.  Again, by the guarantee of the estimation procedure (Corollary~\ref{condition-G}), we have $\maxcut(G) \approx \akkest(G)$, w.h.p.  

Now, we wish to show that $\akkest(G) \approx \akkest(H)$.  By the equivalence of the sampling strategies, we can now take the strategy B view above.  This allows us to assume that the $\akkest$ procedures use the same $S$, and that we pick $S'$ {\em after} picking $S$. This reduces our goal to one of analyzing the value of a random induced sub-program of an LP, as mentioned earlier.  The details of this step are technically the most involved, and are presented in Section~\ref{sec:dual}.  This completes the proof of the theorem. (Note that the statement also includes a bound on the number of vertices of $H$.  This follows immediately from the choice of $p_v$.) \qed

\section{Estimation via linear programming}
\label{sec:gener-estim-meth}

We now present the estimation procedure $\akkest$ used in our proof.  It is an extension of~\cite{arora1995polynomial} to the case of weighted graphs and non-uniform sampling probabilities. 

Let $H = (V, E, w)$ be a weighted, undirected graph with edge weights $w_{ij}$, and let $\gamma: V \rightarrow [0,1]$ denote sampling probabilities.  The starting point is the quadratic program for \maxcut: $\max ~\sum_{ij \in E} w_{ij} x_i (1-x_j)$, subject to $x_i \in \{0,1\}$.  The objective can be re-written as $\sum_{i \in V} x_i (d_i - \sum_{j \in \Gamma(i)} w_{ij} x_j)$, where $d_i$ is the weighted degree, $\sum_{j \in \Gamma(i)} w_{ij}$.  The key idea now is to ``guess'' the value of $\rho_i := \sum_{j \in \Gamma(i)} w_{ij} x_j$, by using a seed set of vertices.  Given a guess, the idea is to solve the following linear program, which we denote by $LP_\rho(V)$.
\begin{align*}\label{eq:akk}
	\text{maximize} &\quad \sum_i x_i (d_i - \rho_i) - s_i -
	t_i \\
	\text{subject to} &\quad \rho_i - t_i \le \sum_{j \in \Gamma(i)} w_{ij} x_j \le
	\rho_i + s_i\\ 
	\qquad 0 \le x_i \le 1,& \qquad s_i, t_i \ge 0.
\end{align*}
	
The variables are $x_i, s_i, t_i$.  Note that if we fix the $x_i$, the optimal $s_i, t_i$ will satisfy $s_i + t_i = |\rho_i - \sum_{j \in \Gamma(i)} w_{ij} x_j|$.   Also, note that if we have a perfect guess for $\rho_i$'s (coming from the \maxcut), the objective can be made $\ge \maxcut(H)$.

\npara{Estimation procedure.}  The procedure $\akkest$ is the following:  first sample a set $S \subseteq V$ where each $i \in V$ is included w.p. $\gamma_i$ independently.  For every partition $(A, S\setminus A)$ of $S$, set $\rho_i = \sum_{j \in \Gamma(i) \cap A} \frac{w_{ij}}{\gamma_j}$, and solve $LP_{\rho} (V)$  (in what follows, we denote this LP by $LP_{A, S\setminus A}^{\gamma}(V)$, as this makes the partition and the sampling probabilities clear). Return the maximum of the objective values. 


%
%

Our result here is a sufficient condition for having $\akkest(H) \approx \maxcut(H)$.

\begin{theorem} \label{estimation}
Let $H$ be a weighted graph on $n$ vertices, with edge weights $w_{ij}$ that add up to $W$.  Suppose the sampling probabilities $\gamma_i$ satisfy the condition
\begin{equation}
\label{eq:w-condition}w_{ij} \le \frac{W \eps^2}{8 \log n} \frac{\gamma_i \gamma_j}{\sum_u \gamma_u} \quad \text{for all $i,j$.}
\end{equation}
Then, we have $\akkest(H, \gamma) \in \maxcut (H) \pm \eps W$, with probability at least $1 - 1/n^2$ (where the probability is over the random choice of $S$).
\end{theorem}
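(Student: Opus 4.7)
The plan is to split the theorem into a deterministic upper bound $\akkest(H, \gamma) \le \maxcut(H)$ and a high-probability lower bound $\akkest(H, \gamma) \ge \maxcut(H) - \eps W$. The concentration step in the lower bound, which is where the hypothesis on $w_{ij}$ is actually used, is the main technical obstacle.

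For the upper bound, I would fix any realization of $S$, any partition $(A, S\setminus A)$, and any feasible $(x, s, t)$ of $LP_\rho(V)$. Using the fact that $s_i + t_i \ge |\rho_i - \sum_j w_{ij} x_j|$ at optimum, I would rewrite the objective as
\[
\sum_i x_i\Big(d_i - \sum_j w_{ij} x_j\Big) \;+\; \sum_i\Big[x_i\big(\textstyle\sum_j w_{ij} x_j - \rho_i\big) - (s_i+t_i)\Big].
\]
Since $x_i \in [0,1]$, each summand in the second group has the form $x_i y_i - |y_i| \le 0$, so the LP value is bounded above by $\sum_i x_i(d_i - \sum_j w_{ij} x_j) = \sum_{\{i,j\} \in E} w_{ij}(x_i + x_j - 2 x_i x_j)$. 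The latter expression is affine in each coordinate, hence maximized at a $0/1$ vertex of $[0,1]^V$, where it equals a genuine cut value $\le \maxcut(H)$. This bounds every LP value (and hence $\akkest(H,\gamma)$) by $\maxcut(H)$, deterministically, regardless of $S$.

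For the lower bound, let $x^* \in \{0,1\}^V$ be an optimal cut and consider the specific partition $A^* := S \cap \{j : x^*_j = 1\}$ generated by the realized seed set. The associated guess $\rho_i = \sum_{j \in \Gamma(i) \cap A^*} w_{ij}/\gamma_j$ is an unbiased estimator of $\rho^*_i := \sum_j w_{ij} x^*_j$, since $\E[\mathbf{1}[j \in S]/\gamma_j] = 1$. Substituting $x = x^*$ (with $s_i + t_i = |\rho_i - \rho^*_i|$) into $LP_{A^*, S \setminus A^*}^\gamma(V)$ yields an objective value of at least
\[
\maxcut(H) + \sum_i x^*_i(\rho^*_i - \rho_i) - \sum_i |\rho_i - \rho^*_i| \;\ge\; \maxcut(H) - 2\sum_i |\rho_i - \rho^*_i|.
\]
Since $\akkest(H,\gamma)$ is the maximum over all partitions, the problem reduces to showing $\sum_i |\rho_i - \rho^*_i| \le \eps W/2$ with probability $\ge 1 - n^{-2}$.

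This concentration step is the main technical obstacle and the place where the hypothesis is essential. I would write $\rho_i - \rho^*_i = \sum_j Y_{ij}$ with $Y_{ij} := (w_{ij} x^*_j/\gamma_j)(\mathbf{1}[j\in S] - \gamma_j)$ independent, mean zero, and, by the hypothesis on $w_{ij}$, bounded in magnitude by $B_i := W \eps^2 \gamma_i/(8 \log n \sum_u \gamma_u)$; a direct calculation further gives $\Var(\rho_i - \rho^*_i) \le B_i d_i$. Bernstein's inequality (Theorem~\ref{bern}) then yields $|\rho_i - \rho^*_i| \le C\sqrt{B_i d_i \log n} + C B_i \log n$ with failure probability $\le n^{-3}$. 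Summing over $i$ and applying Cauchy--Schwarz in the form $\sum_i \sqrt{B_i d_i} \le \sqrt{(\sum_i B_i)(\sum_i d_i)}$, together with the identities $\sum_i B_i = W\eps^2/(8\log n)$ and $\sum_i d_i = 2W$, collapses the total deviation to $O(\eps W)$. After absorbing the absolute constant (equivalently, rescaling $\eps$), a union bound over the $n$ vertices delivers $\sum_i|\rho_i - \rho^*_i| \le \eps W/2$ with probability $\ge 1 - n^{-2}$, completing the proof. The conceptual point is that the weighting $\gamma_i\gamma_j/\sum_u \gamma_u$ in the hypothesis is precisely calibrated so that both $\sum_i B_i$ and $\sum_i \sqrt{B_i d_i}$ scale correctly with $W$ independent of how non-uniform the $\gamma_i$ are; this is what makes the argument go through for the biased sampling setting, unlike earlier uniform-sampling analyses.
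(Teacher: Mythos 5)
Your proposal is correct and takes essentially the same approach as the paper: the same split into a deterministic upper bound (the quadratic form $\sum_i x_i(d_i-\sum_j w_{ij}x_j)$ dominates the LP objective and attains its maximum at a $\{0,1\}$ vertex of the cube) and a Bernstein-based lower bound obtained by projecting the optimal cut onto the seed set and controlling the per-vertex deviations $|\rho_i-\rho_i^*|$. The only cosmetic difference is in the final aggregation: you bound each $|\rho_i-\rho_i^*|$ by a Bernstein-type quantity and sum via Cauchy--Schwarz, whereas the paper sets the per-vertex threshold directly to $\eps(d_i+f_i)$ and sums using $\sum_i(d_i+f_i)=O(W)$ --- two algebraically equivalent routes to the same $O(\eps W)$ total deviation, after which one rescales $\eps$.
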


The proof of the Theorem consists of claims showing the upper and lower bound separately. 

\npara{Claim 1.} The estimate is not too small. I.e., w.h.p. over the choice of $S$, there exists a cut $(A, S\setminus A)$ of $S$ such that $LP_{A, S\setminus A}(H) \ge \maxcut(H) - \eps W$.

\npara{Claim 2.} The estimate is not much larger than an optimal cut. Formally, for any feasible solution to the LP (and indeed any values $\rho_i$), there is a cut in $H$ of value at least the LP objective.

\begin{proof}[Proof of Claim 1.]
Let $(A_H, V\setminus A_H)$ be the max cut in the full graph $H$. Now consider a sample $S$, and let $(A, S \setminus A)$ be its projection onto $S$. For any vertex $i$, recall that $\rho_i = \sum_{j \in \Gamma(i) \cap A} \frac{w_{ij}}{\gamma_j} = \sum_{j \in \Gamma(i) \cap A_H} Y_j \frac{w_{ij}}{\gamma_j}$, where $Y_j$ is the indicator for $j \in S$. Thus
\[ \E [\rho_i] = \sum_{j \in \Gamma(i) \cap A_H} \gamma_j \frac{w_{ij}}{\gamma_j} = \sum_{j \in \Gamma(i) \cap A_H} w_{ij}.\]

We will use Bernstein's inequality to bound the deviation in $\rho_i$ from its mean. To this end, note that the variance can be bounded as
\[ \Var[\rho_i] = \sum_{j \in \Gamma(i) \cap A_H} \gamma_j (1-\gamma_j) \frac{w_{ij}^2}{\gamma_j^2} \le \sum_{j \in \Gamma(i)} \frac{ (1-\gamma_j) w_{ij}^2}{\gamma_j}. \]

In what follows, let us write $d_i = \sum_{j \in \Gamma(i)} w_{ij}$ and $f_i = \frac{W \gamma_i}{\sum_u \gamma_u}$. Then, for every $j$, our assumption on the $w_{ij}$ implies that $\frac{w_{ij}}{\gamma_j} \le \frac{\eps^2}{8\log n} f_i$.  Thus, summing over $j$, we can bound the variance by $\frac{\eps^2 d_i f_i}{8 \log n}$. Now, using Bernstein's inequality (Theorem \ref{bern}), 
\begin{equation}
\Pr[|\rho_i - \E[\rho_i]| > t]  \leq \exp \left(- \frac{t^2}{\frac{\eps^2 d_i f_i}{4 \log n} + \frac{2t}{3} \frac{\eps^2 f_i}{8 \log n}} \right). 
\end{equation}
Setting $t = \epsilon (d_i + f_i)$, and simplifying, we have that the probability above is $< \exp(-4 \log n) = \frac{1}{n^4}$.
Thus, we can take a union bound over all $i \in V$, and conclude that w.p. $\ge 1- \frac{1}{n^3}$,
\begin{equation}\label{eq:stbound}
\left| \rho_i - \sum_{j \in \Gamma(i) \cap A_H} w_{ij} \right| \le \eps(d_i + f_i) \quad \text{for all $i \in V$.}
\end{equation}
For any $S$ that satisfies the above, consider the solution $x$ that sets $x_i = 1$ for $i \in A_H$ and $0$ otherwise. We can choose $s_i + t_i = |\rho_i - \sum_{j \in \Gamma(i)} w_{ij} x_j| \le \eps(d_i + f_i)$, by the above reasoning (eq.~\eqref{eq:stbound}). Thus the LP objective can be lower bounded as
\[ \sum_i x_i (d_i - \rho_i) - \eps (d_i + f_i) \ge \sum_i x_i (d_i - \sum_{j \in \Gamma(i)} w_{ij} x_j) - 2\eps(d_i + f_i).\]
This is precisely $\maxcut(G) - 2 \eps \sum_i (d_i + f_i) \ge \maxcut(G) - 4 \eps W$. This completes the proof of the claim.
\end{proof}

\begin{proof}[Proof of Claim 2.]
Suppose we have a feasible solution $x$ to the LP, of objective value $\sum_i x_i (d_i - \rho_i) - \sum_{i} |\rho_i - \sum_{j \in \Gamma(i)} w_{ij} x_j|$, and we wish to move to a cut of at least this value. To this end, define the quadratic form
\[ Q(x) := \sum_i x_i \big(d_i - \sum_{j \in \Gamma(i)} w_{ij} x_j \big). \]

The first observation is that for any $x \in [0,1]^n$, and any real numbers $\rho_i$, we have
\[ Q(x) \ge \sum_i x_i (d_i - \rho_i) - \sum_{i} |\rho_i - \sum_{j \in \Gamma(i)} w_{ij} x_j|.\]
This is true simply because $Q(x) = \sum_i x_i \big( d_i - \rho_i \big) + x_i \big( \rho_i - \sum_{j \in \Gamma(i)} w_{ij} x_j \big)$, and the fact that the second term is at least $- |\rho_i - \sum_{j \in \Gamma(i)} w_{ij} x_j|$, as $x_i \in [0,1]$. 

Next, note that the maximum of the form $Q(x)$ over $[0,1]^n$ has to occur at a boundary point, since for any fixing of variables other than a given $x_i$, the form reduces to a linear function of $x_i$, which attains maximum at one of the boundaries. Using this observation repeatedly lets us conclude that there is a $y \in \{0,1\}^n$ such that $Q(y) \ge Q(x)$.  Since any such $y$ corresponds to a cut, and $Q(y)$ corresponds to the cut value, the claim follows.\footnote{We note that the proof in~\cite{arora1995polynomial} used randomized rounding to conclude this claim, but this argument is simpler; also, later papers such as~\cite{fotakis2015sub} used such arguments for derandomization.}
\end{proof}

Finally, to show Theorem~\ref{sampling} (as outlined in Section~\ref{sec:paper-outline}), we need to apply Theorem~\ref{estimation} with specific values for $\gamma$ and $w_{ij}$. 
Here we state two related corollaries to Theorem~\ref{estimation} that imply good
estimates for the \maxcut.
\begin{corollary}\label{condition-G}
	Let $H$ in the framework be the original graph $G$, and let $\gamma_i = \frac{16
		\log n}{\eps^2 \Delta}$ for all $i$. Then the condition $w_{ij} \le
	\frac{\eps^2}{8 \log n} \cdot \frac{W \gamma_i \gamma_j}{\sum_u \gamma_u}$ holds
	for all $i, j$, and therefore $\akkest(G, \gamma) \in \maxcut(G) \pm \epsilon W$, w.p. $\ge 1-n^{-2}$.
\end{corollary}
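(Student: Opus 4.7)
The plan is to observe that this corollary is essentially a direct instantiation of Theorem~\ref{estimation}, so the entire content reduces to verifying the hypothesis~\eqref{eq:w-condition} for the specific choice of uniform sampling probabilities $\gamma_i = 16\log n/(\eps^2 \Delta)$ on the unweighted graph $G$. First I would plug in these values and simplify the right-hand side of the condition: since $\gamma_i$ is constant in $i$, we get
\[
\frac{\gamma_i \gamma_j}{\sum_u \gamma_u} \;=\; \frac{\gamma_i}{n} \;=\; \frac{16\log n}{\eps^2 \Delta n},
\]
so the required bound becomes
\[
w_{ij} \;\le\; \frac{\eps^2}{8\log n} \cdot W \cdot \frac{16 \log n}{\eps^2 \Delta n} \;=\; \frac{2W}{\Delta n}.
\]

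Next I would substitute $W$. Since $G$ is unweighted and $\Delta = \sum_{i,j} w_{ij}/n$ is its average degree, we have $W = n\Delta/2$ (each edge contributes once to $W$), so the right-hand side simplifies to $1$. The condition then reads $w_{ij} \le 1$, which is immediate because $G$ is unweighted (each $w_{ij} \in \{0,1\}$). Thus hypothesis~\eqref{eq:w-condition} is satisfied with room to spare.

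Having verified the hypothesis, I would simply invoke Theorem~\ref{estimation} applied to $H = G$ with the stated $\gamma$. The theorem yields $\akkest(G,\gamma) \in \maxcut(G) \pm \eps W$ with probability at least $1 - 1/n^2$ over the choice of the seed set $S$, which is exactly the conclusion of the corollary.

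There is no real obstacle to overcome here: the only mild care-point is being consistent about the convention for $W$ (sum over unordered edges versus over ordered pairs) so that the factor of $2$ is tracked correctly; regardless of convention the resulting bound on $w_{ij}$ is a constant of order one, and the unweighted assumption on $G$ trivially satisfies it. The corollary is therefore essentially a one-line calculation on top of Theorem~\ref{estimation}.
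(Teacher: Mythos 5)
Your proposal is correct and follows the same route as the paper: plug the uniform $\gamma$ into condition~\eqref{eq:w-condition}, simplify, and observe it reduces to a trivial bound on $w_{ij}\in\{0,1\}$. The one small discrepancy is the convention for $W$: the paper treats $W=\sum_{i,j}w_{ij}=n\Delta$ (ordered pairs, consistent with how $\Delta$ is defined and with the bound $\sum_i(d_i+f_i)\le 2W$ used in the proof of Theorem~\ref{estimation}), so the resulting bound on $w_{ij}$ is $2$ rather than your $1$, which is why the paper remarks the condition holds ``with a slack of $2$''; as you already note, either convention gives a bound $\ge 1$ and the conclusion is unaffected.
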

The proof is immediate (with a slack of $2$), as $w_{ij}=1$, $W = n \Delta$, and all $\gamma_u$ are equal. 
\begin{corollary}\label{condition-H}
	Let $H$ be the weighted sampled graph obtained from \vertexsample, and let
	$\gamma_i = \frac{16 \log n}{\eps^2 \Delta} \frac{1}{p_i}$. Then the
	condition~\eqref{eq:w-condition} holds w.p. $\ge 1-n^{-3}$, and therefore $\akkest(H, \gamma) \in \maxcut(H) \pm \epsilon W$ w.p. $\ge 1-n^{-2}$.
\end{corollary}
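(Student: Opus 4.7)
The plan is to invoke Theorem~\ref{estimation} after showing that hypothesis~\eqref{eq:w-condition} is satisfied with probability at least $1-n^{-3}$ over the randomness of \vertexsample. First, substitute the specific values $w_{ij} = 1/(p_i p_j \Delta^2)$ and $\gamma_i = 16\log n/(\eps^2 \Delta p_i)$ into~\eqref{eq:w-condition}. The factors of $\log n$, $\eps^2$, $\Delta$ and $p_i p_j$ cancel pairwise, and after simplification the condition collapses to the single edge-independent inequality $\sum_{u \in S'} 1/p_u \le 2 W \Delta$, where $W = \sum_i d_i^H$ is the total edge weight of $H$. Thus the whole task reduces to verifying this one scalar inequality with high probability.

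Note that both sides have matching expectations up to a factor of two: $\E[\sum_{u \in S'} 1/p_u] = \sum_u p_u \cdot (1/p_u) = n$, while $\E[W] = \sum_{(i,j)\in E(G)} 2 p_i p_j/(p_i p_j \Delta^2) = n/\Delta$, so $\E[2W\Delta] = 2n$. Concentration of the left-hand side is straightforward: $\sum_u X_u/p_u$ (with $X_u$ the indicator of $u \in S'$) is a linear form in independent Bernoullis with each term of size at most $\Delta\alpha_\eps/\eps$ by Definition~\ref{score}, so Bernstein's inequality (Theorem~\ref{bern}) gives $\sum_{u \in S'} 1/p_u \le \tfrac{5}{4} n$ with probability $\ge 1 - n^{-4}$ in the polynomial-density regime $\Delta = n^\delta$. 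For the right-hand side we need only the one-sided bound $W \ge \tfrac{3n}{4\Delta}$; here $W = \sum_{(i,j)\in E(G)} X_i X_j/(p_i p_j \Delta^2)$ is a bilinear form in the $X_i$'s, so Bernstein does not apply directly. Instead I would apply Theorem~\ref{concentration-main} to $-W$: resampling a single coordinate $X_i$ changes $W$ by at most the total weight of $H$-edges incident to $i$, so the functional $L_+$ is controlled by $\sum_i (d_i^H)^2 \le W \cdot \max_i d_i^H$, which by the definition of $p_i$ and a separate bound on the maximum weighted degree of $H$ is small enough to yield the lower tail $W \ge \tfrac{3n}{4\Delta}$ with probability $\ge 1-n^{-4}$.

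A union bound then gives $\sum_u 1/p_u \le \tfrac{5}{4} n \le \tfrac{3n}{2\Delta}\cdot \Delta \le 2W\Delta$ with probability $\ge 1 - n^{-3}$, verifying the hypothesis of Theorem~\ref{estimation}. Applying that theorem on top of this event, and charging its own $n^{-2}$ failure probability (over the subsequent choice of the seed set $S$), yields $\akkest(H, \gamma) \in \maxcut(H) \pm \eps W$ with probability at least $1 - n^{-2}$, as claimed. The main obstacle is the lower-tail bound on the bilinear form $W$: the linear form $\sum_u 1/p_u$ is handled by a routine Bernstein application, but concentration of $W$ requires quadratic-form machinery, which is exactly what Theorem~\ref{concentration-main} supplies; the delicate step is bounding $L_+$ in terms of the random degree sequence of $H$ tightly enough that the resulting tail beats $n^{-4}$.
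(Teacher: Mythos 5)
Your overall approach mirrors the paper's: substitute the specific $w_{ij}$ and $\gamma_i$ into condition~\eqref{eq:w-condition}, observe that everything cancels down to $\sum_{u \in S'} 1/p_u \le 2W\Delta$, compute the expectations ($n$ and $2n$ respectively, giving a slack factor of $2$), and verify the inequality by concentration, then invoke Theorem~\ref{estimation}. The paper's own proof is exactly this, except it disposes of the concentration step in one sentence, asserting that ``a straightforward application of Bernstein's inequality'' yields $W \ge n/(2\Delta)$ and $\sum_u 1/p_u \le 2n$.

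You correctly flag something the paper papers over. First, the concentration of $W = \sum_{ij} X_i X_j w_{ij}$ is a bilinear form in the vertex-inclusion indicators, so Bernstein (Theorem~\ref{bern}, which is for sums of independent variables) does not apply to it directly; one needs either a decoupling argument or a quadratic concentration tool such as Theorem~\ref{concentration-main}. Your plan to bound $L_+$ via $\sum_r (d_r^H)^2 \le W \cdot \max_r d_r^H$ and use a high-probability bound on $\max_r d_r^H$ (which Lemma~\ref{lem:allgood} essentially supplies) is the right kind of argument, though you leave the conditioning step implicit: you would need to condition on the ``small degree'' event before applying the inequality, along the lines of Section~\ref{sec:blm}. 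Second, your constants $\sum_u 1/p_u \le \tfrac54 n$ and $W \ge \tfrac{3n}{4\Delta}$ are actually necessary: the paper's stated bounds $\sum_u 1/p_u \le 2n$ and $W \ge n/(2\Delta)$ give $2W\Delta \ge n$, which does \emph{not} imply $\sum_u 1/p_u \le 2W\Delta$. So your version is a genuine repair of a constant-factor slip. What you lose relative to the paper is concision; what you gain is that the proof actually closes. One remaining loose end on your side: the bound $L_+ \lesssim W \cdot \max_r d_r^H$ involves two random quantities, so the chain ``bound $\max_r d_r^H$ w.h.p., bound $W$ from above w.h.p., then apply Theorem~\ref{concentration-main} conditionally'' needs to be spelled out and checked to still beat $n^{-4}$ after the conditioning, but the pieces are all available in the paper and the calculation goes through.
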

\begin{proof}
	In this case, we have $w_{ij} = \frac{1}{p_i p_j \Delta^2}$. Thus, simplifying the condition, we need to show that
	\[ \frac{1}{p_i p_j \Delta^2} \le \frac{2 W}{p_i p_j \Delta} \frac{1}{\sum_{u \in H} \frac{1}{p_u}}. \]
	
	Now, for $H$ sampled via probabilities $p_i$, we have (in expectation) $W = \frac{n}{\Delta}$, and $\sum_{u \in H} \frac{1}{p_u} = n$. A straightforward application of Bernstein's inequality yields that $W \ge \frac{n}{2\Delta}$ and $\sum_{u \in H} \frac{1}{p_u} \le 2n$, w.p. at least $1-n^{-3}$.  This completes the proof.
\end{proof}


\newcommand{\rhotil}{\widetilde{\rho}}
\newcommand{\dtil}{\widetilde{d}}
\newcommand{\dual}{\textsf{Dual}}

\section{Random induced linear programs}
\label{sec:dual}

We will now show that the $\akkest$ on $H$ has approximately the same value as the estimate on $G$ (with appropriate $\gamma$ values). First, note that $\akkest(G)$ is $\max_{A \subseteq S} LP_{A, S \setminus A}^{\gamma}(G)$, where $\gamma_i = q_i$. To write the LP, we need the constants $\rho_i$, defined by the partition $(A, S\setminus A)$ as $\rho_i := \sum_{j \in \Gamma(i) \cap A} \frac{1}{q_j}$. For the graph $H$, the estimation procedure uses an identical program, but the sampling probabilities are now $\alpha_i := q_i/p_i$, and the estimates $\rho$, which we now denote by $\rhotil_i$, are defined by $\rhotil_i := \sum_{j \in \Gamma(i) \cap A} \frac{p_j w_{ij}}{q_j}$. 
Also, by the way we defined $w_{ij}$, $\rhotil_i = \frac{\rho_i}{p_i \Delta^2}$.  The degrees are now $\dtil_i := \sum_{j \in \Gamma(i) \cap S'} w_{ij} = \sum_{j \in \Gamma(i) \cap S'} \frac{1}{p_i p_j \Delta^2}$. The two LPs are shown in Figure~\ref{fig:primal-H}.

\begin{figure}[htbp]
	  \vspace{-.2in}
  \begin{subfigure}[b]{0.4\textwidth}
\[    \begin{aligned}
    \text{max} \quad \sum_{i \in G} [x_i (d_i - \rho_i) -  (s_i &+ t_i)] \notag\\
    \text{s.t.} \quad \sum_{j \in \Gamma(i)} x_{j} \leq \rho_i + s_i,  \quad &\forall i \in [n] \notag\\
    - \sum_{j \in \Gamma(i)} x_{j} \leq - \rho_i + t_i, \quad &\forall i \in [n] \notag\\
    0 \leq x_i \leq 1 \quad &\forall i \in [n] \notag
  \end{aligned} \]
\caption{The LP on the full graph}
\end{subfigure} \qquad
\begin{subfigure}[b]{0.4\textwidth}
  \[     \begin{aligned}
\text{max} \quad \sum_{i \in S'} [x_i (\dtil_i - \rhotil_i) - (\tilde{s}_i +& \tilde{t}_i)]\\
\text{s.t.} \quad \sum_{j \in \Gamma(i) \cap S'} w_{ij}x_{j} \leq \rhotil_i + \tilde{s}_i,  \quad &\forall i \in S'\\
   - \sum_{j \in \Gamma(i) \cap S'} w_{ij} x_{j} \leq - \rhotil_i + \tilde{t}_i, \quad &\forall i \in S'\\  0 \leq x_i \leq 1, \quad \tilde{s}_i, \tilde{t}_i \geq 0  \quad &\forall i \in S'
 \end{aligned}  \]
\caption{The sampled LP}
\end{subfigure}
\caption{The two LPs.\label{fig:primal-H}}
\end{figure}

Our aim in this section is to show the following:
\begin{theorem}\label{thm:main-lp-sample}
Let $G$ be an input graph, and let $(S, S')$ be sampled as described in Section~\ref{sec:paper-outline}. Then, with probability $\ge 1-\frac{1}{n^2}$, we have 
\[ \max_{A \subseteq S} LP_{A, S \setminus A}^\gamma (G) \ge \Delta^2 \cdot \max_{A \subseteq S} LP_{A, S \setminus A}^\alpha (H) - \eps n\Delta. \]
\end{theorem}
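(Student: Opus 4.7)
The plan is to fix a subset $A \subseteq S$ and show that
\[
\Delta^2 \cdot LP^{\alpha}_{A, S \setminus A}(H) \leq LP^{\gamma}_{A, S \setminus A}(G) + \eps' n \Delta
\]
holds with failure probability $\exp(-\Omega(n/\Delta))$ over the random choice of $S' \supseteq S$, for a suitable $\eps' = \eps / \mathrm{polylog}(n)$. Since $|S| = \widetilde{O}(n/\Delta)$ w.h.p.\ by the choice of $q$, a union bound over the $2^{|S|} = \exp(\widetilde{O}(n/\Delta))$ possible partitions $A$ then yields the theorem for the maxima.

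For the fixed $A$, I would use LP duality to upper-bound $LP^{\alpha}_{A, S\setminus A}(H)$. Let $(a, b, c)$ be an optimal dual solution of $LP^{\gamma}_{A, S\setminus A}(G)$, so $a_i, b_i \in [0,1]$, $c_i \geq 0$, and $\sum_i [\rho_i(a_i - b_i) + c_i] = LP^{\gamma}_{A, S\setminus A}(G)$. I lift this to a candidate dual for $H$ by setting, for $i \in S'$, $\tilde a_i = a_i$, $\tilde b_i = b_i$, and $\tilde c_i = c_i / (p_i \Delta^2)$, plus a small corrective term described below. Using the identities $\tilde\rho_i = \rho_i/(p_i \Delta^2)$ and $w_{ij} = 1/(p_i p_j \Delta^2)$, a direct calculation shows the $H$-dual objective of this lift equals $\frac{1}{\Delta^2} \sum_{i \in V} Y_i\, [\rho_i(a_i - b_i) + c_i]/p_i$ with $Y_i = \mathbf{1}[i \in S']$; its expectation over $S'$ is exactly $LP^{\gamma}_{A, S\setminus A}(G)/\Delta^2$. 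Concentration around this expectation is a standard Bernstein application (Theorem~\ref{bern}), as the importance-score lower bound $p_i \geq h_i$ keeps each summand bounded.

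The crux is verifying approximate feasibility of the lifted dual on $H$. After simplification, the $H$-dual constraint at $i \in S'$ reduces to
\[
\sum_{k \in \Gamma(i) \cap S'} \frac{1 - (a_k - b_k)}{p_k} \leq \rho_i + c_i,
\]
whose LHS is an unbiased estimator of $\sum_{k \in \Gamma(i)}(1 - (a_k - b_k))$, which by dual feasibility of $(a, b, c)$ on $G$ is at most $\rho_i + c_i$. Any shortfall is absorbed by raising $\tilde c_i$ by a nonnegative correction $\delta_i$; the total cost after scaling by $\Delta^2$ must satisfy $\Delta^2 \sum_{i \in S'} \delta_i \leq \eps' n \Delta$.

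The main obstacle is bounding this total $\sum_i \delta_i$: the individual Bernstein deviations are highly correlated through the shared randomness $Y$, so a naive sum of tail bounds over $i \in V$ is too weak. To handle this I would invoke the exponential Efron--Stein bound (Theorem~\ref{concentration-main}) applied to the functional $\sum_i \delta_i(Y)$: toggling $Y_j$ affects only the $\delta_i$ with $i \in \Gamma(j)$, and by at most $O(1/p_j)$ each, so the jump quantity $L_+$ is controlled in terms of sums like $\sum_j d_j/p_j^2$. The importance-score lower bound $p_i \geq h_i \gtrsim \max\{d_i, \eps \Delta\}/(\Delta^2 \alpha_\eps)$ is precisely what keeps this sum small enough to yield the tail $\exp(-\Omega(n/\Delta))$ needed for the union bound. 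Combining the two concentration arguments establishes the per-$A$ bound, and the union bound over $A \subseteq S$ finishes the proof.
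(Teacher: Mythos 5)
Your high-level plan matches the paper's: pass to the LP duals, restrict/lift the optimal dual of $LP^\gamma(G)$ down to $H$, argue concentration, and union-bound over the $2^{|S|}$ partitions. The paper parametrizes the dual as $(z,u)$ and, rather than lifting $u_i$ directly and patching feasibility, simply takes $\tilde z = z|_{S'}$ and lets $\tilde u_i$ be the induced optimum $\max\{0,\cdot\}$; your $(a,b,c)$ lift with corrections $\delta_i$ is an equivalent reformulation, since $\delta_i = \max\{0, \tilde d_i - \tilde\rho_i - \sum_j w_{ij}z_j - c_i/(p_i\Delta^2)\}$ is bounded by the same absolute-value expression the paper works with. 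So the structural idea is right, and invoking Theorem~\ref{concentration-main} is the correct tool. However, there are three genuine gaps.

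First and most importantly, $\delta_i \ge 0$ has strictly positive expectation, and Theorem~\ref{concentration-main} only controls the deviation of $\sum_i Y_i\delta_i$ \emph{around its mean}. You still need $\E[\sum_i Y_i\delta_i] \le \frac{\eps' n}{2\Delta}$, and this is not a byproduct of the Efron--Stein machinery: the paper handles it via a separate Cauchy--Schwarz argument bounding $\E|R_i|$ by $\sqrt{\E R_i^2}$, which is exactly where the delicate choice $\alpha_\eps = \eps^4/(C\log n)$ enters (so that $\sqrt{\alpha_\eps/\eps} \lesssim \eps$). Your proposal does not address this at all; without it the argument does not close, even if the tail bound around the mean is perfect.

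Second, bounding $L_+$ requires more than the importance-score lower bound on $p_i$. The bounded-difference $|Z - Z^{(r)}|$ contains two pieces: the outer term $\delta_r$ (which appears when $Y_r$ toggles) and the neighborhood terms $\sum_{i\in\Gamma(r)}Y_i w_{ir}$. Both are random, and without further control they are only bounded in expectation, not almost surely. The paper handles this by conditioning on the ``good'' event $\sum_{j\in\Gamma(i)}w^*_{ij}Y_j \le (\eps\Delta + 2d_i)/(p^*_i\Delta^2)$ for all $i$ (Lemma~\ref{lem:allgood}), and uses Lemma~\ref{lem:condition} to transfer MGF bounds across the conditioning. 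Your proposal omits this conditioning entirely, and the importance-score bound alone does not give deterministic control of $L_+$. Moreover, the paper first \emph{decouples} the quadratic function (introducing Rademacher $\delta_i$ and splitting $V$ into $V^+,V^-$) so that the variable in the outer sum and the variables in the inner sum are disjoint; only then is Theorem~\ref{concentration-main} applied to a function of the $V^-$ variables alone, with fixed coefficients $p^*_i$ on the outside. A direct application of Efron--Stein to $g(Y) = \sum_i Y_i\delta_i(Y)$ makes $L_+$ itself a quadratic in $Y$, which is not obviously controllable and is the reason the paper decouples. Finally, under the ``strategy B'' view the indicators $Y_i$ for $i\notin S$ are Bernoulli$(p^*_i)$, not $p_i$, so your unbiasedness claim and Bernstein step need the $p$-versus-$p^*$ correction that the paper isolates in Lemma~\ref{lem:dual-error}.
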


\noindent {\em Proof outline.} To prove the theorem, the idea is to take the ``strategy B'' viewpoint of sampling $(S, S')$, i.e., fix $S$, and sample $S'$ using the probabilities $p^\ast$.  Then, we only need to understand the behavior of an ``induced sub-program'' sampled with the probabilities $p^\ast$. This is done by considering the duals of the LPs, and constructing a feasible solution to the induced dual whose cost is not much larger than the dual of the full program, w.h.p. This implies the result, by linear programming duality.

Let us thus start by understanding the dual of $LP_{A, S\setminus A}^\gamma (G)$ given $A$, shown in Figure~\ref{eq:dualG}. We note that for any given $z$, the optimal choice of $u_i$ is $\max\{ 0, d_i - \rho_i - \sum_{j \in \Gamma(i)} z_j \}$; thus we can think of the dual solution as being the vector $z$.  The optimal $u_i$ may thus be bounded by $2d_i$, a fact that we will use later. Next, we write down the dual of the induced program, $LP_{A, S\setminus A}^\alpha (H)$, as shown in Figure~\ref{eq:dualH}. 


\begin{figure}[htbp]
  \vspace{-.2in}
  \centering
  \begin{subfigure}{0.4\textwidth}
\begin{align*}
\text{minimize} \quad \sum_{i \in G} u_i + \rho_i z_i  \quad \text{ s.t.~~~~}\\
\quad u_i + \sum_{j \in \Gamma(i)} z_j \geq d_i - \rho_i  \quad \forall i \in V \notag\\
u_i \geq 0, \quad -1 \le z_i \leq 1 \quad \forall i \in V \notag
\end{align*}
\caption{The dual of $LP_{A, S\setminus A}^\gamma (G)$ \label{eq:dualG}}
\end{subfigure}\qquad
\begin{subfigure}{0.4\textwidth}
\begin{align*}
\text{minimize} \quad \sum_{i \in S'} [\tilde{u}_i + \rhotil_i \tilde{z}_i] \quad \text{ s.t.~~~~}\\
\quad \tilde{u}_i + \sum_{j \in \Gamma(i) \cap S'} w_{ij} \tilde{z}_j \geq \dtil_i - \rhotil_i  \quad \forall i \in S' \notag \\
\tilde{u}_i \ge 0, \quad  -1 \le \tilde{z}_i \le 1 \quad \forall i \in S'.\notag
\end{align*}
\caption{The dual of the induced program $LP_{A, S\setminus A}^\alpha (H)$. \label{eq:dualH}}
\end{subfigure}
  \caption{The dual LPs}
  \label{fig:duals}
\end{figure}

\newcommand{\gs}{^{\ast} }
Following the outline above, we will construct a feasible solution to LP~\eqref{eq:dualH}, whose cost is close to the optimal dual solution to LP~\eqref{eq:dualG}. The construction we consider is very simple: if $z$ is the optimal dual solution to~\eqref{eq:dualG}, we set $\tilde{z}_i = z_i$ for $i \in S'$ as the candidate solution to~\eqref{eq:dualH}. This is clearly feasible, and thus we only need to compare the solution costs. The dual objective values are as follows
\begin{align}
\dual_G &= \sum_{i \in V} \rho_i z_i + \max\{ 0, ~ d_i - \rho_i - \sum_{j \in \Gamma(i)} z_j \}  \label{eq:dual-g} \\
\dual_H &\le \sum_{i \in S'} \rhotil_i z_i + \max\{ 0, ~ \dtil_i - \rhotil_i - \sum_{j \in \Gamma(i) \cap S'} w_{ij} z_j \}  \label{eq:dual-h}
\end{align}
Note that there is a $\le$ in~\eqref{eq:dual-h}, as $\tilde{z}_i = z_i$ is simply one feasible solution to the dual (which is a minimization program). Next, our goal is to prove that w.p. at least $1-\frac{1}{n^2}$,
\[ \max_{A \subseteq S} \dual_H \le \frac{1}{\Delta^2} \cdot \max_{A \subseteq S} \dual_G + \frac{\eps n}{\Delta}. \]

Note that here, the probability is over the choice of $S'$ {\em given} $S$ (as we are taking view-B of the sampling). The first step in proving the above is to move to a slight variant of the quantity $\dual_H$, which is motivated by the fact that $\Pr[Y_i = 1]$ is not quite $p_i$, but $p_i\gs$ (as we have conditioned on $S$). Let us define $\rhotil_i\gs := \frac{\rho_i}{p_i\gs \Delta^2}$ (recall that $\rhotil_i$ is $\frac{\rho_i}{p_i \Delta^2}$), and $w_{ij}\gs := \frac{1}{p_i\gs p_j\gs \Delta^2}$. So also, let $d_i\gs := \sum_{j \in \Gamma(i)} Y_j w_{ij}\gs$.  Then, define
\begin{equation}
\label{eq:dual-h-gs}
\dual_H\gs := \sum_{i \in S'} \rhotil_i\gs z_i + \max\{ 0, ~ \dtil_i\gs - \rhotil_i\gs - \sum_{j \in \Gamma(i) \cap S'} w_{ij}\gs z_j \} .
\end{equation}

A straightforward lemma, which we use here, is the following. Here we bound the difference between the ``corrected'' dual we used to analyze, and the value we need for the main theorem. Specifically, we bound  $| \dual_H - \dual_H\gs | \le \frac{\eps n}{2\Delta}$. 
\begin{lemma}\label{lem:dual-error}
Let $(S, S')$ be sampled as in Section~\ref{sec:paper-outline}. Then w.p. at least $1-\frac{1}{n^4}$, we have that for all $z \in [-1, 1]^n$ and for all partitions $(A, S\setminus A)$ of $S$,\footnote{Note that the partition defines the $\rho_i$.} 
$ | \dual_H - \dual_H\gs | \le \frac{\eps n}{2\Delta}$.
\end{lemma}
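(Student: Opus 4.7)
The plan is to exploit that the two sampling probabilities are very close: a direct calculation gives $p_i^{\ast} - p_i = (q_i - p_i^2)/(1-p_i)$, hence $|p_i - p_i^{\ast}| = O(q_i + p_i^2)$, which is much smaller than $p_i$ in the regime of the theorem (since $p_i \ge h_i$ and $q_i = \Theta(\log n /(\eps^2\Delta))$). Consequently each of $\rhotil_i^{\ast}$, $w_{ij}^{\ast}$, $\dtil_i^{\ast}$ differs only slightly from its un-starred counterpart, and the two dual objectives should be close.

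First I would bound $|\dual_H - \dual_H^{\ast}|$ pointwise, i.e.\ for a fixed $z \in [-1,1]^n$, a fixed partition $(A, S\setminus A)$, and a fixed sample $S'$. Using the $1$-Lipschitzness of $a\mapsto \max\{0,a\}$ together with the triangle inequality one obtains
\[ |\dual_H - \dual_H^{\ast}| \;\le\; \sum_{i \in S'}\Bigl[\, 2\,|\rhotil_i - \rhotil_i^{\ast}| + |\dtil_i - \dtil_i^{\ast}| + \sum_{j\in \Gamma(i)\cap S'} |w_{ij}-w_{ij}^{\ast}|\;|z_j|\,\Bigr]. \]
Since $|z_j|\le 1$ the right-hand side is already uniform in $z$, and the dependence on $A$ enters only through $\rho_i$; hence a single high-probability bound on this expression will handle every $z$ and every partition simultaneously.

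Next I would substitute $|\rhotil_i - \rhotil_i^{\ast}| = (\rho_i/\Delta^2)\,|p_i - p_i^{\ast}|/(p_i p_i^{\ast}) = O\bigl(\rho_i(q_i + p_i^2)/(p_i^2 \Delta^2)\bigr)$, with analogous expressions for $|w_{ij}-w_{ij}^{\ast}|$ and $|\dtil_i-\dtil_i^{\ast}|$. To convert the resulting probabilistic expression into a deterministic bound I would (i) apply Bernstein's inequality to bound $|\Gamma(i)\cap S|$ by $O(d_i q_i)$ and the $\Gamma(i)\cap S'$-sums by their expectations (which are $O(d_i)$-type quantities), taking a union bound over $i\in V$; and (ii) plug in the explicit values $p_i \in [h_i, 2h_i]$ and $q_i = 16\log n/(\eps^2\Delta)$. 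Using $\sum_i d_i = n\Delta$, the sum collapses to an expression of the form $\mathrm{poly}(\eps,1/\log n)\cdot n/\Delta$, which can be driven below $\eps n/(2\Delta)$ by choosing the constant $C$ in $\alpha_\eps$ large enough.

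The main obstacle I anticipate is bookkeeping the $\eps$-dependence: the relative error $(p_i^{\ast}-p_i)/p_i$ can be of order one for high-degree vertices where $p_i$ itself approaches $1$, so one must verify that the $\eps^4$ in $\alpha_\eps$ (respectively $\eps^8$ for correlation clustering) absorbs the $1/\eps$-factors coming from Bernstein and from the definition of $h_i$. A final union bound over the $O(1)$ concentration events (each with failure $\le n^{-5}$) then delivers the claimed overall failure probability of $n^{-4}$.
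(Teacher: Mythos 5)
Your overall route matches the paper's: reduce $|\dual_H - \dual_H\gs|$ to a sum of coordinate-wise discrepancies $|\rhotil_i - \rhotil_i\gs|$ and $|w_{ij} - w_{ij}\gs|$ (using $|z_j|\le 1$, so the bound is uniform in $z$ and $A$), then control those via Bernstein's inequality and a union bound. The gap is in the central premise. You write that $p_i^\ast - p_i = (q_i - p_i^2)/(1-p_i)$, hence $|p_i - p_i^\ast| = O(q_i + p_i^2)$, so ``each of $\rhotil_i^\ast$, $w_{ij}^\ast$, $\dtil_i^\ast$ differs only slightly from its un-starred counterpart.'' That formula for $p_i^\ast$ is the Strategy~B inclusion probability \emph{only for $i \notin S$}. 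For $i \in S$, conditioning on $S$ forces $i \in S'$, so $p_i^\ast = 1$, and $|p_i - p_i^\ast| = 1 - p_i$ can be $\Omega(1)$: for such $i$ (or for edges $ij$ with $i$ or $j$ in $S$), the starred and unstarred quantities are \emph{not} close. The paper's proof splits the sums accordingly: for $i \notin S$ it uses $|1/p_i - 1/p_i\gs| \le \eps^2/p_i\gs$ (the regime where your estimate is valid), while for $i \in S$ it uses a different argument, namely that $|S|$ is small enough that $\sum_{i} X_i(d_i+\eps\Delta)/p_i = O(\eps^2 n\Delta)$ w.h.p.\ (observation part~3), which makes the $S$-contribution $O(\eps^2 n/\Delta)$ even though each individual term can be large. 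Your sketch never distinguishes these two cases, and the uniform ``small discrepancy'' claim underlying it is false on $S$.

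A secondary miscue: you flag the difficulty as high-degree vertices with $p_i \approx 1$. That is a red herring --- when $p_i = 1$ the vertex is deterministically in $S'$, $p_i^\ast = 1$ as well, and the discrepancy is zero. The actual obstruction is low-degree $i \in S$, where $p_i$ is small but $p_i^\ast = 1$. You would need to identify and handle that regime explicitly (exactly as the paper's observation and its split into $i\in S$ / $i\notin S$ do) for the bookkeeping step you describe to go through.
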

\begin{proof}
To prove the lemma, it suffices to prove that w.p. $\ge 1-\frac{1}{n^4}$,
\begin{equation}\label{eq:to-prove-error}
\sum_i Y_i | \rhotil_i - \rhotil_i\gs | + \sum_i Y_i \sum_{j \in \Gamma(i)} Y_j | w_{ij} - w_{ij}\gs | \le \frac{\eps n}{2 \Delta }.
\end{equation}
This is simply by using the fact that $z_i$ are always in $[-1,1]$. 
Before showing this, we introduce some notation and make some simple observations. First, denote by $Y$ the indicator vector for $S'$ and by $X$ the indicator for $S$. 
\begin{obs}
With probability $\ge 1- \frac{1}{n^4}$ over the choice of $(S, S')$, we have:
\begin{enumerate}
\item For all $i \in V$, $\sum_{j \in \Gamma(i)} \frac{X_j}{q_j} \le 2(d_i + \eps \Delta)$. 
\item For all $i \in V$, $\sum_{j \in \Gamma(i),  j \not\in S} \frac{Y_j}{p_j\gs} \le 2(d_i +\eps \Delta)$.
\item $\sum_i \frac{X_i (d_i +\eps \Delta)}{p_i} \le 2 \eps^2 n\Delta$.
\item $\sum_{i \not\in S} \frac{Y_i (d_i + \eps \Delta)}{p_i\gs} \le 2 n\Delta$.
\end{enumerate}
\end{obs}

All the inequalities are simple consequences of Bernstein's inequality (and our choice of parameters $p_i$, $p_i\gs$, $q_i$), and we thus skip the proofs. Next, note that as an immediate consequence of part-1, we have
\begin{equation}\label{eq:rho-bound-noise}
\rho_i \le 2(d_i +\eps \Delta), \qquad \text{for all partitions $(A, S\setminus A)$ of $S$}.
\end{equation}
Also, note that from the definitions of the quantities (and the fact $q_i/p_i \le \eps^2$), we have
\begin{equation} \label{eq:ppstar}
\forall i \not\in S,~~ \left| \frac{1}{p_i} - \frac{1}{p_i\gs} \right| \le \frac{\eps^2}{p_i\gs}
\end{equation}

Now, we are ready to show~\eqref{eq:to-prove-error}.  The first term can be bounded as follows:
\begin{equation}
\sum_i Y_i |\rhotil_i - \rhotil_i\gs| = \sum_i  \frac{Y_i \rho_i}{\Delta^2} \left| \frac{1}{p_i} - \frac{1}{p_i\gs} \right| = \sum_{i \in S} \frac{\rho_i}{\Delta^2} \left| \frac{1}{p_i} - 1 \right| + \sum_{i \not\in S} \frac{Y_i \rho_i}{\Delta^2} \left| \frac{1}{p_i} - \frac{1}{p_i\gs} \right|.\label{eq:noise2}
\end{equation}

Using~\eqref{eq:rho-bound-noise} and part-3 of the observation, the first term can be bounded by $O(\eps^2 n/\Delta)$.  For the second term, using~\eqref{eq:ppstar} together with part-4 of the observation gives a bound of $O(\eps^2 n/\Delta)$.  Thus the RHS above is at most $\frac{\eps n}{16 \Delta}$, as we may assume $\eps $ is small enough.

Now, consider the second term in~\eqref{eq:to-prove-error}.  When $i \not\in S$ and $j \not\in S$, we have $|w_{ij} - w_{ij}\gs|$ being ``small''. We can easily bound by $2 \eps^2 w_{ij}\gs$, using 
\[ \left| \frac{1}{p_i p_j} - \frac{1}{p_i\gs p_j\gs} \right|  \le \left| \frac{1}{p_i p_j} - \frac{1}{p_i\gs p_j} \right| + \left| \frac{1}{p_i\gs p_j} - \frac{1}{p_i\gs p_j\gs} \right| \le \frac{\eps^2}{p_i\gs p_j} + \frac{\eps^2}{p_i\gs p_j\gs}  \le \frac{2 \eps^2}{p_i\gs p_j\gs} . \]
In the last steps, we used~\eqref{eq:ppstar} and the fact that $p_i\gs \le p_i$ for $i \not\in S$.

For $i \in S$ or $j \in S$, we can simply bound $|w_{ij} - w_{ij}\gs|$ by $2 w_{ij}$.  Thus we can bound the second term in~\eqref{eq:to-prove-error} as
\[ 4 \sum_{i \in S} \frac{1}{p_i \Delta^2}\sum_{j \in \Gamma(i)} \frac{Y_j}{p_j} + \sum_{i \not\in S} \frac{Y_i}{p_i\gs \Delta^2} \sum_{j \in \Gamma(i) \setminus S} \frac{\eps^2 Y_j}{p_j}. \]
The second term has only a sum over $j$ not in $S$ -- this is why have an extra 2 factor for the first term. Now, consider the first term. The inner summation can be written as $\sum_{j \in \Gamma(i) \cap S} \frac{1}{p_j} + \sum_{j \in \Gamma(i) \setminus S} \frac{Y_j}{p_j} $. Using parts 1 and 2 of the observation, together with $p_j \ge q_j/\alpha_\eps$, and $p_j \ge p_j\gs$ for $j \not\in S$, we have $\sum_{j \in \Gamma(i)} \frac{Y_j}{p_j} \le  4 \alpha_\eps (d_i + \eps \Delta)$.  Then, using part-3 gives the desired bound on the first term.

Let us thus consider the second term.  Again using part 2 along with $p_j \ge p_j\gs$ for $j \not\in S$, we can bound the inner sum by $2\eps^2 (d_i +\eps \Delta)$.  Then we can appeal to part-4 of the observation to obtain the final claim.

This ends up bounding the second term of~\eqref{eq:to-prove-error}, thus completing the proof of the lemma.
\end{proof}

Thus our goal is to show the following:
\begin{lemma}\label{lem:helper-maxcut}
Let $S$ satisfy the conditions (a) $|S| \le \frac{20 n\log n}{\eps^2 \Delta}$, and (b) for all $i \in V$, $\sum_{j \in \Gamma(i) \cap S} \frac{1}{q_j} \le 2(d_i + \eps \Delta)$. Then, w.p. $\ge 1-\frac{1}{n^4}$ over the choice of $S'$ given $S$, we have
\[ \max_{A \subseteq S} \dual_H\gs \le \frac{1}{\Delta^2} \cdot \max_{A \subseteq S} \dual_G + \frac{\eps n}{2\Delta}. \]
\end{lemma}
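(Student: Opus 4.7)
\textbf{Proof proposal for Lemma~\ref{lem:helper-maxcut}.}

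The plan is to use weak linear programming duality combined with concentration of measure, then union bound over the (exponentially many) partitions of $S$. Fix $A \subseteq S$, and let $z^\ast = z^\ast(A) \in [-1,1]^V$ denote an optimal dual vector achieving $\dual_G(A)$, with $\rho_i$ determined by $A$ as in Section~\ref{sec:gener-estim-meth}. I would define the candidate dual solution for the induced program $LP_{A,S\setminus A}^\alpha(H)$ by restricting $\tilde z_i := z_i^\ast$ for $i \in S'$ and taking the best $\tilde u_i$ given this choice. Since the restriction is trivially feasible, weak duality gives
\[
\dual_H\gs(A) \;\le\; F(A,Y) \;:=\; \sum_{i \in S'} \rhotil_i\gs z_i^\ast + \sum_{i \in S'} \max\!\Bigl\{0,\; \dtil_i\gs - \rhotil_i\gs - \sum_{j\in \Gamma(i)\cap S'} w_{ij}\gs z_j^\ast \Bigr\}.
\]
So it suffices to show that $F(A,Y) \le \tfrac{1}{\Delta^2}\dual_G(A) + \tfrac{\eps n}{2\Delta}$ with probability $\ge 1 - n^{-5}$ (say), uniformly in $A$, conditional on $S$ satisfying the assumptions (a) and (b).

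My first step would be to compute $\E_{Y|S}[F(A,Y)]$. Writing each $\max\{0,X_i\}$ as $\max\{0,\E[X_i]\}+(\max\{0,X_i\}-\max\{0,\E[X_i]\})$ and using the $1$-Lipschitz property $|\max\{0,X\}-\max\{0,Y\}|\le |X-Y|$, one gets a clean decomposition
\[
F(A,Y) \;=\; L(A,Y) \;+\; R(A,Y),
\]
where $L(A,Y)$ is a linear function in $Y$ with $\E_{Y|S}[L(A,Y)] = \tfrac{1}{\Delta^2}\dual_G(A)$ (a direct calculation using $\E[Y_i]=p_i\gs$, $\rhotil_i\gs = \rho_i/(p_i\gs\Delta^2)$ and $w_{ij}\gs = 1/(p_i\gs p_j\gs\Delta^2)$), and $R(A,Y)$ is a non-negative quadratic-type remainder bounded by $\sum_{i\in S'}|X_i - \E[X_i\mid Y_i{=}1]|$, where $X_i$ is the quantity inside the $i$-th max. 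The linear part $L(A,Y)$ is handled by a direct Bernstein application (Theorem~\ref{bern}) together with the variance estimates that already appear in the proof of Lemma~\ref{lem:dual-error}, using conditions (a) and (b) on $S$ to control $\sum_i (d_i+\eps\Delta)$-type quantities.

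The core difficulty is the quadratic remainder $R(A,Y)$, because the inner deviations $X_i-\E[X_i\mid Y_i{=}1]$ are linear forms in $\{Y_j\}_{j \in \Gamma(i)}$ with coefficients $w_{ij}\gs$, so $Y_iR_i$ is quadratic in $Y$; moreover the coefficients can be as large as $1/(p_i\gs p_j\gs \Delta^2)\simeq \alpha_\eps^{-2}/\Delta^2$ for low-degree vertices, so sub-Gaussian tails based on elementary bounds are too weak. This is exactly where I would invoke the Boucheron--Massart--Lugosi inequality (Theorem~\ref{concentration-main}). The self-bounding quantity $L_+ = \E[\sum_i (R - R^{(i)})^2 \mathbf{1}_{R > R^{(i)}}\mid Y]$ has to be estimated combinatorially: flipping $Y_i$ changes only the terms indexed by $i$ and its neighbors in $S'$, and each such change is at most $O(d_i\gs/\Delta^2)$ in magnitude; summing over $i$ and using condition (b) together with $p_j \ge q_j/\alpha_\eps$ yields $L_+ = \otilde(n/(\alpha_\eps\Delta^3))$. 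Plugging into Theorem~\ref{concentration-main} and choosing $\lambda,\theta$ appropriately gives a tail bound of the form $\Pr[R(A,Y) > \tfrac{\eps n}{4\Delta}] \le \exp(-\Omega(n\log^2 n/(\eps^2\Delta)))$.

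The final step is the union bound. Since $|S|\le 20 n\log n/(\eps^2\Delta)$, there are at most $2^{|S|} = \exp(\otilde(n/\Delta))$ choices of $A\subseteq S$; the concentration bound above is designed to beat this by a polylogarithmic factor, so we can union bound over all $A$ and conclude with probability at least $1-n^{-4}$. Combining with the Bernstein bound on $L(A,Y)$ and $\E[L(A,Y)] = \dual_G(A)/\Delta^2$ gives $F(A,Y) \le \tfrac{1}{\Delta^2}\dual_G(A) + \tfrac{\eps n}{2\Delta}$ for every $A \subseteq S$ simultaneously, which, via the weak-duality inequality for $\dual_H\gs(A)$, proves the lemma. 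The main obstacle throughout is producing a concentration rate strong enough to absorb the $2^{|S|}$ union bound in the presence of the $\max\{0,\cdot\}$ non-linearity and the potentially large weights $w_{ij}\gs$; this is why the more refined Boucheron--Massart--Lugosi machinery (rather than Bernstein alone) is required.
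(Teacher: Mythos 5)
Your high-level plan matches the paper's: restrict the optimal dual vector $z^\ast$ of $\dual_G$ to $S'$, use weak duality, split the difference $\dual_H\gs - \tfrac{1}{\Delta^2}\dual_G$ into a linear piece (handled by Bernstein) and a quadratic remainder, bound the remainder via the Boucheron--Massart--Lugosi inequality, and union bound over the $2^{|S|}$ partitions. So the skeleton is correct.

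The gap is in the concentration step for the quadratic remainder. You propose to apply Theorem~\ref{concentration-main} directly to $R(A,Y)=\sum_{i\in S'} Y_i R_i$ and claim a deterministic bound $L_+ = \otilde(n/(\alpha_\eps\Delta^3))$ from ``flipping $Y_r$ changes only the $r$-th term and its neighbors, each by $O(d_i\gs/\Delta^2)$.'' But $d_i\gs = \sum_{j\in\Gamma(i)} Y_j w_{ij}\gs$ is itself random, and when $Y_r$ flips, the $r$-th summand $Y_r R_r$ jumps by as much as $\sum_{j\in\Gamma(r)} w_{rj}\gs\,O(1)$, which on bad realizations of $Y$ (e.g.\ when many low-degree neighbors $j$ with $p_j\gs$ small all have $Y_j=1$) is far larger than your claimed bound. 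Consequently $L_+$ is a random variable whose moment generating function you cannot control well enough, and as the paper explicitly notes, \emph{without further conditioning} the tail bound for this function cannot beat $\exp(-n/\Delta^2)$ --- not strong enough to absorb a union bound of size $2^{|S|}=\exp(\Theta(n\log n/(\eps^2\Delta)))$ when $\Delta\gg 1$. The paper closes this with two ingredients you omit: (i) conditioning on the ``good'' event (every vertex has small weighted degree into $S'$, Lemma~\ref{lem:allgood}), which makes each $R_i \le O(\alpha_\eps)$ \emph{pointwise} so that the per-coordinate perturbation becomes deterministic and small; and (ii) a decoupling step that replaces the quadratic by a bilinear form over a random bipartition $(V^+,V^-)$ of the vertices, so that integrating out $\{Y_i:i\in V^+\}$ leaves a function $g$ of $Y^-$ alone, to which Theorem~\ref{concentration-main} is then applied with a clean $L_+$ estimate. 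Your argument would need both of these (or a substitute) to make the tail bound survive the union bound over $A$; as written, the estimate of $L_+$ is optimistic and the rate $\exp(-\Omega(n\log^2 n/(\eps^2\Delta)))$ is not substantiated.
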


The condition (b) on $S$ is a technical one that lets us bound $\rho_i$ in the proofs.  

Given Lemma \ref{lem:dual-error} and \ref{lem:helper-maxcut}, it is easy to prove the Theorem~\ref{thm:main-lp-sample} as follows:
\begin{proof}[Proof of Theorem~\ref{thm:main-lp-sample}]
The conditions we assumed on $S$ in Lemma~\ref{lem:helper-maxcut} hold w.p. at least $1-\frac{1}{n^4}$ (via a simple application of Bernstein's inequality). Thus the conclusion of the lemma holds w.p. at least $1-\frac{2}{n^4}$.  Combining this with Lemma~\ref{lem:dual-error}, we have that 
$\max_A \dual_H \le \frac{1}{\Delta^2}\max_A \dual_G + \frac{\eps n}{\Delta}$ w.p. at least $1-\frac{3}{n^4}$. The theorem then follows via LP duality.
\end{proof}

It thus suffices to prove Lemma~\ref{lem:helper-maxcut}.  The main step is to show a concentration bound on a quadratic function that is not quite a quadratic {\em form}.  This turns out to be quite technical, and we discuss it in the following sections.

\subsection{Proof of Lemma~\ref{lem:helper-maxcut}}\label{app:lem-helper-maxcut}

Let $Y_i = \mathbf{1}_{i \in S'}$. For convenience, let us denote the $\max\{\}$ terms in equations~\eqref{eq:dual-g} and~\eqref{eq:dual-h-gs} by $u_i$ and $\tilde{u}_i\gs$, respectively. Now, 
\begin{equation}
\label{eq:dual-diff}
\dual_H\gs - \frac{1}{\Delta^2} \dual_G = \sum_i \left( Y_i \rhotil_i\gs z_i - \frac{1}{\Delta^2} \cdot \rho_i z_i \right) + \left( Y_i \tilde{u}_i\gs - \frac{1}{\Delta^2} \cdot u_i \right).
\end{equation} 
We view the RHS as two summations (shown by the parentheses), and bound them separately. 

The first is relatively easy. Recall that by definition,  $\rhotil_i\gs = \frac{\rho_i}{p_i\gs \Delta^2}$.
Thus the first term is equal to $\sum_i \frac{\rho_i z_i}{p_i\gs \Delta^2} \big( Y_i  - p_i\gs \big)$. The expectation of this quantity is $0$.  We will apply Bernstein's inequality to bound its magnitude. For this, note that the variance is at most (using $|z_i| \le 1$)
\[ \sum_i \frac{\rho_i^2}{(p_i\gs)^2 \Delta^4} p_i\gs (1-p_i\gs) \le \sum_i \frac{4 (d_i +\eps \Delta)^2 (1-p_i\gs)}{p_i\gs \Delta^4}. \]
The condition on $S$ gives the bound on $\rho_i$ that was used above. Next, we note that unless $p_i\gs = 1$, we have $p_i\gs \ge \frac{(d_i + \eps \Delta)}{\alpha_\eps \Delta^2}$. Thus the variance is bounded by $\sum_i \frac{4\alpha_\eps \cdot (d_i + \eps \Delta)}{\Delta^2} \le \frac{8 \alpha_\eps n}{\Delta}$. Next, 
\[ \max_i \left| \frac{\rho_i z_i}{p_i\gs \Delta^2} \right| \le \frac{2(d_i +\eps \Delta)}{p_i\gs \Delta^2} \le 2\alpha_\eps .\]
(Again, this is because we can ignore terms with $p_i\gs = 1$, and for the rest, we have a lower bound.) Thus, by Bernstein's inequality, 
\[ \Pr[ \left|\sum_i \frac{\rho_i z_i}{p_i\gs \Delta^2} \big( Y_i  - p_i\gs \big)\right| \ge t ] \le  \exp \left( -\frac{ t^2}{\frac{16 \alpha_\eps n}{\Delta} + 2t\alpha_\eps} \right).\]
Setting $t = \eps n/4\Delta$, the bound simplifies to $\exp( - \frac{\eps^2 n}{C\Delta \alpha_\eps})$, for a constant $C$. Thus, by our choice of $\alpha_\eps$ and our size bound on $|S|$, this is $< \exp(-|S|)/n^4$.

The second term of~\eqref{eq:dual-diff} requires most of the work. We start with the trick (which turns out to be important) of splitting it into two terms by adding a ``hybrid'' term, as follows:
\[ \sum_i Y_i \tilde{u}_i\gs - \frac{1}{\Delta^2} \cdot u_i = \sum_i \left( Y_i \tilde{u}_i\gs - Y_i \frac{u_i}{p_i\gs \Delta^2}\right) + \sum_i \left( Y_i \frac{u_i}{p_i\gs \Delta^2} - \frac{1}{\Delta^2} \cdot u_i\right). \]

The second term will again be bounded using Bernstein's inequality (in which we use our earlier observation that $u_i = O(d_i)$). This gives an upper bound of $\eps n/8\Delta$, with probability $1- \exp(-|S|)/n^4$.  We omit the easy details.  

Let us focus on the first term. We now use the simple observation that $\max\{0, A\} - \max\{0, B\} \le |A- B|$, to bound it by
\[ \sum_i Y_i \left| \dtil_i\gs - \rhotil_i\gs -\sum_{j \in \Gamma(i) \cap S'} w_{ij}\gs z_j - \frac{1}{p_i\gs \Delta^2} \big( d_i - \rho_i - \sum_{j \in \Gamma(i)} {z_j} \big) \right|. \]

By the definition of $\rhotil_i\gs$, it cancels out.  Now, writing $c_j = 1-z_j$ (which now $\in [0,2]$) and using the definition of $\dtil_i\gs$, we can bound the above by
\[ \sum_i Y_i \left| \sum_{j \in \Gamma(i)} Y_j w_{ij}\gs c_j - \frac{1}{p_i\gs \Delta^2} c_j \right| = \sum_i Y_i \left| \sum_{j \in \Gamma(i)} w_{ij}\gs c_j (Y_j - p_j\gs) \right| \qquad \text{(using $w_{ij}\gs = \frac{1}{p_i\gs p_j\gs \Delta^2}$)} \]

Showing a concentration bound for such a quadratic function will be subject of the rest of the section. Let us define
\begin{equation}\label{eq:def-f}
f(Y) := f(Y_1, \dots, Y_n) := \sum_i Y_i \left| \sum_{j \in \Gamma(i)} w_{ij}\gs c_j (Y_j - p_j\gs) \right|.
\end{equation}

We wish to show that $\Pr[ f > \frac{\eps n}{\Delta} ] \le \exp(-|S|)$. Unfortunately, this is not true -- there are counter-examples (in which the neighborhoods of vertices have significant overlaps) for which it is not possible to obtain a tail bound better than $\exp(-n/\Delta^2)$, roughly speaking. To remedy this, we resort to a trick developed in~\cite{goldreich1998property, feige2002optimality}. The key idea is to {\em condition} on the event that vertices have a small weighted degree into the set $S'$, and obtain a stronger tail bound.

\npara{``Good'' conditioning.} We say that a choice of $Y$'s is {\em good} if for all $i \in V$, we have 
\[ \sum_{j \in \Gamma(i)} w_{ij}\gs Y_j \le \frac{\eps\Delta + 2d_i}{p_i\gs \Delta^2}. \]

The first lemma is the following.

\begin{lemma}\label{lem:allgood}
	Let $H$ be the weighted graph on $S'$ obtained by our algorithm. For any vertex $i \in V$, we have 
	\[ \Pr \left[ \sum_{j \in \Gamma(i)} w_{ij}\gs Y_j > \frac{\eps\Delta + 2d_i}{p_i\gs \Delta^2} \right] < \frac{1}{n^6}. \]
\end{lemma}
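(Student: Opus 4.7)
The quantity to bound factors as
\[
  \sum_{j \in \Gamma(i)} w_{ij}\gs Y_j \;=\; \frac{1}{p_i\gs \Delta^2} \sum_{j \in \Gamma(i)} \frac{Y_j}{p_j\gs} \;=:\; \frac{Z}{p_i\gs \Delta^2},
\]
using $w_{ij}\gs = 1/(p_i\gs p_j\gs \Delta^2)$. Under the strategy-B view (conditional on $S$), the $Y_j$'s are independent Bernoullis; adopting the convention $p_j\gs = 1$ whenever $j \in S$ (so that $Y_j/p_j\gs$ is the constant $1$ for such $j$), we have $\E[Y_j] = p_j\gs$ in all cases and hence $\E[Z] = d_i$. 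The event in the lemma translates to $Z - \E[Z] > t$ for $t := d_i + \eps\Delta$, so the target probability is a standard one-dimensional tail bound. The plan is to apply Bernstein's inequality (Theorem~\ref{bern}).

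For that I need a uniform bound $b$ on each $|Y_j/p_j\gs|$ and a variance bound $\sigma^2$. Terms with $p_j\gs = 1$ contribute nothing to the variance or the deviation, so I restrict attention to $j$ with $p_j\gs < 1$. The \vertexsample procedure ensures $p_j \ge h_j \ge \max(d_j,\eps\Delta)/(\alpha_\eps \Delta^2)$ (Definition~\ref{score}), and a direct calculation from Lemma~\ref{strategy1-2} shows $p_j\gs \ge p_j/2$ (since $q_j/p_j \le \alpha_\eps$ is tiny). Combining,
\[
  \frac{1}{p_j\gs} \;\le\; \frac{2 \alpha_\eps \Delta^2}{\max(d_j,\, \eps \Delta)} \;\le\; \frac{2 \alpha_\eps \Delta}{\eps},
\]
so $b \le 2\alpha_\eps \Delta/\eps$, and summing the first inequality over $j \in \Gamma(i)$ yields $\sigma^2 \le 2\alpha_\eps \Delta d_i/\eps$.

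Now I plug $t = d_i + \eps\Delta$ into Bernstein and split on the size of $d_i$. If $d_i \ge \eps\Delta$, then $t \ge d_i$ and $t^2/(2\sigma^2) \ge d_i \eps/(4\alpha_\eps \Delta) \ge \eps^2/(4\alpha_\eps)$. If $d_i < \eps\Delta$, then $\sigma^2 \le 2\alpha_\eps \Delta^2$ and $t \ge \eps\Delta$, again giving $t^2/(2\sigma^2) \ge \eps^2/(4\alpha_\eps)$. In both cases $bt$ is $O(\sigma^2)$, so the subexponential correction $bt/(3\sigma^2)$ in Bernstein's denominator is just a constant, and the resulting exponent is $\Omega(\eps^2/\alpha_\eps) = \Omega(\log n)$ by the choice $\alpha_\eps = \eps^4/(C\log n)$; taking $C$ large enough drives the probability below $n^{-6}$.

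The main bookkeeping obstacle is that the deviation threshold $d_i + \eps\Delta$ is a sum of two terms of possibly very different magnitudes, and depending on which dominates, a different variance estimate is tight -- the case split above resolves this cleanly. Otherwise the lemma is essentially a direct Bernstein application, and the parameter $\alpha_\eps$ was designed precisely so that the resulting exponent exceeds $6\log n$.
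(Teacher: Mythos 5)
Your proof is correct and follows essentially the same route as the paper: factor out $1/(p_i\gs \Delta^2)$, note the inner sum has mean $d_i$, bound $1/p_j\gs$ by $O(\alpha_\eps \Delta/\eps)$ via the importance scores, and apply Bernstein with $t = d_i + \eps\Delta$. The only cosmetic difference is that you resolve the final exponent bound via a case split on $d_i \gtrless \eps\Delta$, whereas the paper collapses both cases with the single algebraic step $(d_i + \eps\Delta)^2/(2d_i + \eps\Delta) \ge (d_i + \eps\Delta)/2 \ge \eps\Delta/2$; your version is also slightly more explicit about the convention $p_j\gs = 1$ for $j \in S$ and about why $p_j\gs$ is within a constant of $p_j$, which the paper leaves implicit.
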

\begin{proof}
	Fix some $i \in V$, and consider $\sum_{j \in \Gamma(i)} w_{ij}\gs Y_j = \frac{1}{p_i\gs \Delta^2} \left( \sum_{j \in \Gamma(i)} \frac{Y_j}{p_j\gs} \right)$. The term in the parenthesis has expectation precisely $d_i$. Thus, applying Bernstein using $\max_j \frac{1}{p_j\gs} \le \frac{\alpha_\eps \Delta}{\eps}$, together with $\sum_{j \in \Gamma(i)} \frac{p_j\gs(1-p_j\gs)}{(p_j\gs)^2} \le d_i \max_j \frac{1}{p_j\gs}$, we have
	\[  \Pr \big[ \sum_{j \in \Gamma(i) \cap V_H} \frac{Y_j}{p_j\gs} > d_i + t \big] \le \exp \left( - \frac{\eps t^2}{(d_i + t) \alpha_\eps \Delta} \right). \]
	By setting $t = (d_i + \eps \Delta)$, the RHS above can be bounded by
	\[ \exp \left( - \frac{\eps (d_i + \eps \Delta)^2}{(2d_i + \eps \Delta) \alpha_\eps \Delta} \right) \le \exp\left( - \frac{\eps^2}{2 \alpha_\eps} \right) < \frac{1}{n^6}.\]
	
	This completes the proof, using our choice of $\alpha_\eps$.
\end{proof}

Conditioning on the $Y$ being good, we show the following concentration theorem.
\begin{theorem}\label{thm:conc:condition}
	Let $Y_i$'s be independent random variables, that are $1$ w.p. $p_i\gs$ and $0$ otherwise, and let $f(Y)$ be defined as in~\eqref{eq:def-f}. Then we have
	\[ \Pr \big[ f(Y) \ge \frac{\eps n}{8 \Delta} ~\big|~ Y \text{ is good} \big] \le \frac{1}{n^5} \cdot e^{-20n \log n/\eps^2}. \]
\end{theorem}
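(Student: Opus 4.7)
The plan is to apply the Boucheron--Massart--Lugosi exponential Efron--Stein inequality (Theorem~\ref{concentration-main}) to $Z = f(Y)$, after first showing that $\E[f(Y)]$ is negligible and that the functional $L_+$ is controlled on good $Y$.

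\textbf{Controlling the mean.} Write $A_i(Y) = \sum_{j \in \Gamma(i)} w_{ij}\gs c_j (Y_j - p_j\gs)$ so that $f(Y) = \sum_i Y_i |A_i(Y)|$. Since $\E[Y_j - p_j\gs] = 0$, Jensen gives $\E[|A_i(Y)|] \le \sqrt{\Var(A_i(Y))}$, and the variance is $\sum_{j \in \Gamma(i)} (w_{ij}\gs)^2 c_j^2 p_j\gs(1-p_j\gs)$. Using $c_j \le 2$, the definition $w_{ij}\gs = 1/(p_i\gs p_j\gs \Delta^2)$, and the lower bound $1/p_j\gs \le \alpha_\eps \Delta/\eps$ available whenever $p_j\gs<1$, this variance is at most $O\big(\alpha_\eps d_i/(\eps (p_i\gs)^2 \Delta^3)\big)$. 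Summing $p_i\gs \E[|A_i(Y)|]$ over $i$ and applying Cauchy--Schwarz together with $\sum_i d_i = n\Delta$ yields $\E[f(Y)] = O(n \sqrt{\alpha_\eps/\eps}/\Delta)$, which is $\ll \eps n/(16\Delta)$ by our choice of $\alpha_\eps$. Thus it suffices to bound the deviation of $f(Y)$ above its mean by $\eps n/(16\Delta)$.

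\textbf{Bounding $L_+$ on good $Y$.} Since $A_k(Y)$ does not depend on $Y_k$ and $A_i(Y)$ depends on $Y_k$ only for $i\in\Gamma(k)$, replacing $Y_k$ by an independent copy $Y_k'$ yields
\begin{equation*}
f(Y) - f(Y^{(k)}) = (Y_k - Y_k')|A_k(Y)| + \sum_{i \in \Gamma(k)} Y_i\big(|A_i(Y)| - |A_i(Y^{(k)})|\big).
\end{equation*}
Applying $\bigl||a|-|b|\bigr| \le |a-b|$ and $|A_i(Y) - A_i(Y^{(k)})| = w_{ik}\gs c_k |Y_k - Y_k'|$, the $k$-th difference is bounded by $|A_k(Y)| + 2 \sum_{i \in \Gamma(k)} Y_i w_{ik}\gs$. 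Squaring, summing in $k$, and taking expectation over the $Y_k'$ gives
\begin{equation*}
L_+ \;\le\; 2\sum_k A_k(Y)^2 \;+\; 8 \sum_k \Big(\sum_{i \in \Gamma(k)} Y_i w_{ik}\gs\Big)^2.
\end{equation*}
On good $Y$, the inner sum of the second term is at most $(\eps\Delta + 2d_k)/(p_k\gs \Delta^2)$, so after using the lower bound on $p_k\gs$ and $\sum_k d_k = n\Delta$, the second term is $O(\alpha_\eps n/\Delta^2)$. For the first term, $A_k(Y)^2 \le 2(\sum_j w_{kj}\gs c_j Y_j)^2 + 2(\sum_j w_{kj}\gs c_j p_j\gs)^2$; the first piece is bounded via goodness and the second is a deterministic tail that uses the lower bound on $p_j\gs$, giving the same order $O(\alpha_\eps n/\Delta^2)$.

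\textbf{Putting it together.} With $L_+ = O(\alpha_\eps n/\Delta^2)$ pointwise on good $Y$, applying Theorem~\ref{concentration-main} with $\theta$ a small constant and $\lambda = \Theta(\Delta^2 t/(\alpha_\eps n))$ produces a sub-Gaussian tail $\Pr[f(Y) - \E f(Y) \ge t] \le \exp\big(-\Omega(t^2 \Delta^2/(\alpha_\eps n))\big)$. Plugging in $t = \eps n/(16\Delta)$ gives an exponent of order $\eps^2 n/(\alpha_\eps \Delta) = \Omega(n \log n/(\eps^2 \Delta))$ after substituting $\alpha_\eps = \eps^4/(C\log n)$ (or the analogous value for correlation clustering). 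Choosing $C$ large absorbs the constants and comfortably dominates $5\log n + 20 n \log n/(\eps^2 \Delta)$, yielding the claimed bound. To pass from an unconditional BML statement to the conditional one, we use that goodness fails with probability at most $1/n^5$ by Lemma~\ref{lem:allgood} and a union bound; one replaces $f$ by its truncation to the good event (which agrees with $f$ on good $Y$ and inherits the same pointwise $L_+$ bound) and absorbs the $O(1/n^5)$ loss into the final tail.

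\textbf{Principal obstacle.} The main difficulty is the interaction of the absolute value $|A_i(Y)|$ with the prefactor $Y_i$: this prevents a direct variance or Bernstein-style argument, since $f$ is neither a quadratic form nor a sum of independent contributions. The BML inequality is the right tool, but it forces us to bound $L_+$ pointwise on good $Y$, and the key technical maneuver is that the goodness condition on the weighted neighborhood sums $\sum_j w_{kj}\gs Y_j$ is precisely what is needed to turn the random sums appearing in $L_+$ into deterministic bounds of order $\alpha_\eps n/\Delta^2$.
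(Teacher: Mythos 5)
Your plan is a genuinely different route from the paper's: you apply Boucheron--Massart--Lugosi directly to $f$, whereas the paper first decouples $f$ into $f_\delta$ (with a random partition $(V^+,V^-)$), integrates out the $V^+$ variables under a weaker ``$\delta$-good'' conditioning, and only then applies the BML inequality to the \emph{residual} function $g(Y^-)=\sum_{i\in V^+}p_i\gs R_i$. The distinction matters, and your shortcut has a real gap in the conditioning step.

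The decoupling in the paper is not cosmetic: its purpose is precisely to make the bounded differences of the function to which BML is applied \emph{deterministic}. After decoupling and integrating out $Y^+$, the residual $g$ satisfies $|g-g\suu{r}|\le |Y_r-Y_r'|\sum_{i\in\Gamma(r)\cap V^+}2p_i\gs w_{ir}\gs \le |Y_r-Y_r'|\cdot \frac{2d_r}{p_r\gs\Delta^2}$, a bound that does not depend on $Y$ at all, so Theorem~\ref{concentration-main} can be applied unconditionally after one clean use of Lemma~\ref{lem:condition}. In your computation, the $k$-th difference of $f$ contains the factor $\sum_{i\in\Gamma(k)}Y_i w_{ik}\gs$, which is random and is controlled only on the good event. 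You propose to fix this by replacing $f$ with a truncation that agrees with $f$ on good $Y$ and ``inherits the same pointwise $L_+$ bound.'' This is the missing step: no construction is given, and the obvious candidate $\bar f = f\cdot\mathbf{1}_{\text{good}}$ does \emph{not} inherit the bound, because flipping a single $Y_k$ can move a good configuration to a bad one, in which case $Z-Z\suu{k}=f(Y)-0$ can be as large as $\Theta(\alpha_\eps n/\Delta)$, far exceeding the per-coordinate bound you need. Capping the inner sums $A_i$ does not help either, because the outer coefficient $Y_i$ in $\sum_i Y_i|\bar A_i(Y)|$ is what makes the $k$-th difference depend on $\sum_{i\in\Gamma(k)}Y_i w_{ik}\gs$ in the first place. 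The decoupled function replaces that $Y_i$ by $p_i\gs$, which is the whole point. Until you exhibit a truncation whose $L_+$ is pointwise small, the application of Theorem~\ref{concentration-main} is not justified.

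A secondary issue: your $L_+$ computation drops the $(Y_k-Y_k')^2$ factor before taking expectation, which is what would contribute the crucial $Y_k+p_k\gs$ weight; keeping it gives roughly $\sum_k(Y_k+p_k\gs)\cdot O(\alpha_\eps^2)=O(\alpha_\eps^2 n/\Delta)$ on the good event, not the $O(\alpha_\eps n/\Delta^2)$ you state. Plugging $L_+=O(\alpha_\eps^2 n/\Delta)$ into your sub-Gaussian form gives an exponent of order $\eps^2 n/(\alpha_\eps^2\Delta)$, which with $\alpha_\eps=\eps^4/(C\log n)$ only dominates $n\log n/(\eps^2\Delta)$ when $\Delta=O(\log n/\eps^4)$, so the tail would fail for large $\Delta$. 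This arithmetic issue is less fundamental than the truncation gap, but it suggests that the factor of $p_i\gs$ produced by the paper's decoupling is load-bearing and cannot be recovered by your accounting.
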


We observe that the theorem implies Lemma~\ref{lem:helper-maxcut}. This is because by the Theorem and the preceeding discussions, $\Pr[ \dual_H\gs - \dual_G \le \eps n/\Delta~|~ Y \text{ good} ] \ge 1- \frac{\exp(-|S|)}{n^5}$, for any $A \subseteq S$.  Thus by union bound over $A$, $ \Pr[ \max_{A} \dual_H\gs - \max_{A} \dual_G \le \eps n/\Delta~|~ Y \text{ good} ] \ge 1-\frac{1}{n^5}$.  Since the probability of the good event is at least $1-\frac{1}{n^5}$ (by Lemma~\ref{lem:allgood}), the desired conclusion follows.


\subsection{Concentration bound for quadratic functions}
\label{sec:blm}

To conclude our proof, it suffices to show Theorem~\ref{thm:conc:condition}.  To bound the quadratic function $f$, we bound the moment generating function (MGF), $\E[ e^{\lambda f}~|~\text{good} ]$. This is done via a decoupling argument, a standard tool for dealing with quadratic functions. While decoupling is immediate for `standard' quadratic forms, the proof also works for our $f$ (which has additional absolute values). The rest of the proof has the following outline.

\noindent {\em Proof outline.} The main challenge is the computation of the MGF under conditioning (which introduces dependencies among the $Y_i$, albeit mild ones). The decoupling allows us to partition vertices into two sets, and only consider edges that go across the sets. We then show that it suffices to bound the MGF under a ``weakened'' notion of conditioning (a property we call $\delta$-good). Under this condition, all the vertices in one of the sets of the partition become independent, thus allowing a bound on the moment --- in terms of quantities that depend on the variables on the other set of the partition. Finally, we appeal to a strong concentration bound of Boucheron et al.~\cite{boucheron2003concentration} to obtain an overall bound, completing the proof.



We now expand the proof outline above.

\npara{Decoupling.}
Consider independent Bernoulli random variables $\delta_i$ that take values $0$ and $1$ w.p. 1/2 each, and consider the function
\[ f_\delta := \sum_{i} \delta_i  Y_i  | \sum_{j \in \Gamma(i)} (1- \delta_j) w_{ij}\gs c_j (Y_j -p_j\gs) | \]
Using the fact that $\mathbb{E}[|g(x)|] \geq |\mathbb{E}[g(x)]|$ for any function $g$, and defining $\mathbb{E}_\delta$ as the expectation with respect to the $\delta_i$'s, we have 

\begin{align*}
\E_\delta f_\delta &= \E_\delta \sum_{i} \delta_i  Y_i  | \sum_{j \in \Gamma(i)} (1- \delta_j) w_{ij}\gs c_j (Y_j -p_j\gs)| 
=  \sum_{i} \frac{1}{2} \cdot Y_i  \E_\delta | \sum_{j \in \Gamma(i)} (1- \delta_j) w_{ij}\gs c_j (Y_j -p_j\gs)|  \\
&\geq \sum_{i} \frac{1}{2} \cdot Y_i | \sum_{j \in \Gamma(i)} \E_\delta (1- \delta_j) w_{ij}\gs c_j (Y_j -p_j\gs)| \\
&= \sum_{i} \frac{1}{4} \cdot Y_i | \sum_{j \in \Gamma(i)}  w_{ij}\gs c_j (Y_j -p_j\gs)| = \frac{1}{4} f
\end{align*}

We used the fact that $i$ never appears in the summation term involving $Y_i$ to obtain the first equality. Next, using Jensen's inequality, we have:
\begin{align*}
\E_Y [e^{\lambda f}~|~\text{good}] \le \E_Y[ e^{4 \lambda \E_\delta f_\delta}~|~\text{good}] \le \mathbb{E}_{Y,\delta} [e^{4 \lambda f_\delta}~|~\text{good}]
\end{align*}
where $\mathbb{E}_{Y, \delta}$ means the expectation with respect to both random variables $Y$ and $\delta$. Now, the {\em interpretation} of $f_\delta$ is simply the following. Consider the partitioning $(V^+, V^-)$ of $V$ defined by $V^+ = \{i \in [n]: \delta_i =1\} $ and $V^- = \{i \in [n]: \delta_i =0\}$, then
\begin{align*}
f_\delta = \sum_{i \in V^+} Y_i \big| \sum_{j \in \Gamma(i)\cap V^-} w_{ij}\gs c_j (Y_j -p_j\gs) \big|.
\end{align*}

For convenience, define $R_i = | \sum_{j \in \Gamma(i)\cap V^-} w_{ij}\gs c_j (Y_j -p_j\gs)|$, for $i \in V^+$. Thus we can write $f_\delta = \sum_{i \in V^+} Y_i  R_i$. The condition that $Y$ is good now gives us a bound on $R_i$. For any $c_j$ (it is important to note that the good condition does not involve the constants $c_j$, as those depend on the LP solution; all we know is that $0 \le c_j \le 2$), we have
\begin{align*}
R_i &\le \big| \sum_{j \in \Gamma(i) \cap V^-} 2 w_{ij}\gs (Y_j + p_j\gs) \big| \\
&\le \frac{2(\eps \Delta + 2d_i)}{p_i\gs \Delta^2} + \frac{2d_i}{p_i\gs \Delta^2} \le \frac{2 \eps \Delta + 6 d_i}{p_i\gs \Delta^2}.
\end{align*}

Now, the quantity we wish to bound can be written as 
\begin{equation}\label{eq:moment-tobound} 
\Exp_Y [e^{\lambda f}~|~\text{good}] \le \Exp_{Y,\delta}~[e^{4 \lambda f_\delta}~|~\text{good}] \le \Exp_{\delta} \Exp_{Y^-} \Exp_{Y^+}~[e^{4 \lambda \sum_{i \in V^+} Y_i R_i}~|~\text{good}].
\end{equation}
The key advantage that decoupling gives us is that we can now {\em integrate over} $Y_i \in V^+$, i.e., evaluate the innermost expectation, for any given choice of $\{Y_i: i \in V^-\}$ (which define the $R_i$). The problem with doing this in our case is that the good condition introduces dependencies on the $Y_i$, for $i \in V^+$.

Fortunately, weakening conditioning does not hurt much in computing expectations. This is captured by the following simple lemma. 
\begin{lemma} \label{lem:condition}
Let $\Omega$ be a space with a probability measure $\mu$. Let $Q_1$ and $Q_2$ be any two events such that $Q_1 \subset Q_2$, and let $Z : \Omega \mapsto \mathbb{R^+}$ be a {\em non-negative} random variable. Then,
\begin{align*}
\E[Z | Q_1] \leq \frac{\E[Z|Q_2]}{\Pr[Q_1]}.
\end{align*}
\end{lemma}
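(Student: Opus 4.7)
The plan is to prove this by a direct unfolding of the definition of conditional expectation, followed by a single monotonicity inequality. Since both sides involve conditioning on sets of nonzero measure (otherwise the statement is vacuous or both sides are undefined), I can safely write $\E[Z \mid Q_k] = \E[Z \mathbf{1}_{Q_k}]/\Pr[Q_k]$ for $k = 1, 2$.

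The key observation is that $Q_1 \subset Q_2$ implies the pointwise inequality $\mathbf{1}_{Q_1}(\omega) \le \mathbf{1}_{Q_2}(\omega)$ for every $\omega \in \Omega$. Because $Z$ is non-negative, multiplying by $Z$ preserves the inequality, giving $Z \mathbf{1}_{Q_1} \le Z \mathbf{1}_{Q_2}$ pointwise, and hence $\E[Z \mathbf{1}_{Q_1}] \le \E[Z \mathbf{1}_{Q_2}]$ by monotonicity of the expectation. This is the only substantive step; it is also where non-negativity of $Z$ is used (if $Z$ could be negative, the inequality would flip on the part of $\Omega$ where $Z < 0$).

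Plugging in, I get
\[
\E[Z \mid Q_1] \;=\; \frac{\E[Z \mathbf{1}_{Q_1}]}{\Pr[Q_1]} \;\le\; \frac{\E[Z \mathbf{1}_{Q_2}]}{\Pr[Q_1]} \;=\; \frac{\E[Z \mid Q_2] \cdot \Pr[Q_2]}{\Pr[Q_1]} \;\le\; \frac{\E[Z \mid Q_2]}{\Pr[Q_1]},
\]
where the final inequality uses $\Pr[Q_2] \le 1$. I do not anticipate any obstacle; the statement is essentially a one-line consequence of monotonicity, and is being recorded here so that it can be applied in the decoupling argument to pass from conditioning on the strong ``good'' event to a weaker ``$\delta$-good'' event, absorbing the price of the weakening into a $1/\Pr[Q_1]$ factor (which will be negligible since $\Pr[\text{good}] \ge 1 - n^{-5}$ by Lemma~\ref{lem:allgood}).
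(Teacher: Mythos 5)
Your proof is correct and follows the same approach as the paper: write out the conditional expectations, use $Q_1 \subseteq Q_2$ together with non-negativity of $Z$ to get the monotonicity inequality $\E[Z\mathbf{1}_{Q_1}] \le \E[Z\mathbf{1}_{Q_2}]$ (the paper writes this as an inequality between integrals over $\Omega_1 \subseteq \Omega_2$), and then drop the $\Pr[Q_2] \le 1$ factor. The only difference is notational (indicator functions versus explicit integrals).
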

\begin{proof}
Let $\Omega_1$ (resp. $\Omega_2$) be the subset of $\Omega$ in which $Q_1$ (resp. $Q_2$) is satisfied. By hypothesis, $\Omega_1 \subseteq \Omega_2$. Now by the definition of conditional expectation, and the non-negativity of $Z$, we have
\[ \E[ X | Q_1 ] = \frac{1}{\mu(Q_1)} \int_{x \in \Omega_1} Z(x) \mu(x) dx \le \frac{1}{\mu(Q_1)} \int_{x \in \Omega_2} Z(x) \mu(x) dx = \frac{\mu(Q_2)}{\mu(Q_1)} \E[ Z | Q_2]. \]
Since $\mu(Q_2) \le 1$, the conclusion follows.
\end{proof}

\npara{Weaker {\em good} property.} The next crucial notion we define is a property ``$\delta$-good''. Given a $\delta \in \{0,1\}^n$ (and corresponding partition $(V^+, V^-)$), a set of random variables $Y$ is said to be $\delta$-good if for all $i \in V^+$, we have
\begin{equation}\label{eq:def-ri}
\sum_{j \in \Gamma(i) \cap V^-} w_{ij}\gs Y_j \le \frac{\eps\Delta + 2d_i}{p_i\gs \Delta^2}.
\end{equation}

We make two observations. First, the good property implies the $\delta$-good property, for any choice of $\delta$.  Second, and more crucial to our proof, conditioning on $\delta$-good does not introduce any dependencies on the variables $\{ Y_i:~i \in V^+\}$.  Now, continuing from~\eqref{eq:moment-tobound}, and using the fact that the good condition holds with probability $> 1/2$, we have
\[ \Exp_{\delta} \Exp_{Y^-} \Exp_{Y^+}~[e^{4 \lambda \sum_{i \in V^+} Y_i R_i}~|~\text{good}] \le \Exp_{\delta} \Exp_{Y^-} \Exp_{Y^+}~[2 e^{4 \lambda \sum_{i \in V^+} Y_i R_i}~|~\text{$\delta$-good}]. \]

Now for any $0/1$ choices for variables $Y^-$, the $R_i$'s get fixed for every $i \in V^+$, and we can bound $\E_{Y^+} [ e^{4\lambda \sum_{i \in V^+} Y_i R_i}~|~R_i ]$ easily. 
\begin{lemma}\label{lem:wt-bernoulli}
Let $Y_i$ be independent random $0/1$ variables taking value $1$ w.p. $p_i\gs$, and let $R_i$ be given, for $i \in V^+$. Suppose $\lambda >0$ satisfies $|\lambda R_i| \le 1$ for all $i$. Then
\[ \E_{Y^+} [ e^{\lambda \sum_{i \in V^+} Y_i R_i} ] \le e^{\sum_{i \in V^+} \lambda p_i\gs R_i + \lambda^2 p_i\gs R_i^2}  .\]
\end{lemma}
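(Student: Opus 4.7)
The plan is to use independence to factorize the MGF, then bound each factor using the standard Bernoulli MGF inequality together with the hypothesis $|\lambda R_i| \le 1$.

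First I would write, using independence of the $Y_i$ for $i \in V^+$,
\[ \E_{Y^+}\bigl[e^{\lambda \sum_{i \in V^+} Y_i R_i}\bigr] = \prod_{i \in V^+} \E\bigl[e^{\lambda Y_i R_i}\bigr] = \prod_{i \in V^+} \bigl(1 + p_i\gs(e^{\lambda R_i} - 1)\bigr), \]
since $Y_i$ is Bernoulli with parameter $p_i\gs$.

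Next I would bound each factor. The key analytic fact is that $e^x \le 1 + x + x^2$ whenever $|x| \le 1$ (this follows from grouping the Taylor tail: $\sum_{k \ge 3} x^k/k! \le \sum_{k \ge 3}|x|^k/k! \le |x|^2 \cdot \sum_{k \ge 1} 1/((k+2)!) $ and checking the constant, or directly by a one-variable convexity argument). Applied with $x = \lambda R_i$, which is in $[-1,1]$ by hypothesis, this yields $e^{\lambda R_i} - 1 \le \lambda R_i + \lambda^2 R_i^2$. Hence each factor is at most $1 + p_i\gs \lambda R_i + p_i\gs \lambda^2 R_i^2$, and then applying $1 + y \le e^y$ (valid for all real $y$) gives
\[ \E\bigl[e^{\lambda Y_i R_i}\bigr] \le \exp\bigl(p_i\gs \lambda R_i + p_i\gs \lambda^2 R_i^2\bigr). \]

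Taking the product over $i \in V^+$ and combining the exponents yields the claimed bound. There is no real obstacle here; the only point to be careful about is that $R_i$ can be negative (nothing in the statement prevents it), so the inequality $e^x \le 1 + x + x^2$ must be invoked in its two-sided form on $[-1,1]$ rather than the one-sided version on $[0,1]$. The hypothesis $|\lambda R_i| \le 1$ is exactly what makes this legitimate.
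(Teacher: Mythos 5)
Your proof is correct and takes essentially the same route as the paper: both arguments rest on the inequality $e^x \le 1 + x + x^2$ for $|x| \le 1$ together with factorization via independence; you apply that inequality to the deterministic quantity $\lambda R_i$ after writing out the Bernoulli MGF explicitly, whereas the paper applies it directly to the random variable $\lambda Y_i R_i$ and uses $Y_i^2 = Y_i$, but this is a cosmetic rearrangement of the same calculation.
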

\begin{proof}
Since the lemma only deals with $i \in V^+$, we drop the subscript for the summations and expectations. One simple fact we use is that for a random variable $Z$ with $|Z|\le 1$,
\[ \E[e^Z] \le \E[ 1+Z+Z^2 ] \le e^{\E[Z] + \E[Z^2]}.\]
Using this, and the independence of $Y_i$ together with $Y_i^2 = Y_i$,
\begin{equation}
\E[ e^{\lambda \sum Y_i R_i} ] = \prod \E [ e^{\lambda Y_i R_i} ] \le \prod e^{\E[\lambda Y_i R_i] + \E[\lambda^2 R_i^2 Y_i]}.
\end{equation}
As $\E[Y_i] = p_i\gs$, this completes the proof of the lemma.
\end{proof}

Using the lemma, replacing $\lambda$ with $4\lambda$ yields the following
\begin{equation}
\Exp_{\delta} \Exp_{Y^-} \Exp_{Y^+}~[e^{4 \lambda \sum_{i \in V^+} Y_i R_i}~|~\text{$\delta$-good}] \le \Exp_{\delta} \Exp_{Y^-} \big[ e^{\sum_{i \in V^+} 4 \lambda p_i\gs R_i + 16 \lambda^2 p_i\gs R_i^2}~|~\text{$\delta$-good} \big].
\end{equation}

The second term in the summation is already small enough. I.e., using \eqref{eq:def-ri}
\[ \sum_{i \in V^+} \lambda^2 p_i\gs R_i^2 \le 2 \lambda^2 \alpha_\eps \frac{n}{\Delta}. \]

While the bound on $R_i$ can be used to bound the first term, it turns out that this is not good enough. We thus need a more involved argument. Thus the focus is now to bound
\[ \Exp_\delta \Exp_{Y^-} \big[ e^{\lambda g(Y)}~|~ \text{$\delta$-good} \big], \text{ where } g(Y) := \sum_{i \in V^+} p_i\gs \left| \sum_{j \in \Gamma(i) \cap V^-} w_{ij}\gs c_j (Y_j -p_j\gs) \right|. \]

\npara{Outline: concentration bound for $g$.}  To deduce a concentration bound for $g$, we first remove the conditioning (again appealing to Lemma~\ref{lem:condition}). This then gives us independence for the $Y_j$, for $j \in V^-$. We can then appeal to the fact that changing a $Y_j$ only changes $g$ by a small amount, to argue concentration. However, the standard ``bounded differences'' concentration bound (\cite{boucheron2003concentration}, Theorem ...) will not suffice for our purpose, and we need more sophisticated results~\cite{boucheron2003concentration} (restated as Theorem~\ref{concentration-main}).  

To use the same notation as the theorem, define $Z = g(Y)$, where we only consider $Y_r$, $r \in V^-$. Now, for any such $r$, consider $Z - Z\suu{r}$. Since $Z\suu{r}$ is obtained by replacing $Y_r$ by an independent $Y_r'$ and re-computing $g$, we can see that the only terms $i$ which could possibly be affected are $i \in \Gamma(r) \cap V^+$. Further, we can bound the difference $|Z - Z\suu{r}|$ by
\[|Z- Z^{(r)}| \le |Y_r - Y_r'| \sum _{i \in \Gamma(r) \cap V^+} 2p_i\gs w_{ir}\gs,\]
where we have used $|c_r| \le 2$. The summation can be bounded by $\frac{2 d_r}{p_r\gs \Delta^2}$. Denote this quantity by $\theta_r$. Then, to use the theorem, we need
\[ \E_{Y_r'} |Z - Z\suu{r}|^2 \le |Y_r - Y_r'|^2 \theta_r^2 \le |Y_r - Y_r'| \theta_r^2 \le Y_r \theta_r^2 + p_r\gs \theta_r^2. \]
We used the fact that $|Y_r - Y_r'| \in \{0,1\}$. Now applying Theorem~\ref{concentration-main} by setting $\theta = \frac{1}{2 \lambda}$, we have:
\[ \mathbb{E}_{V^-}[e^{\lambda (g - \E[g])}] \leq \mathbb{E}_{V^-}[e^{\frac{\lambda^2}{2}\sum_{r \in V^-} Y_r \theta_r^2 + p_r\gs \theta_r^2}] \]
Once again, since $\lambda \theta_r$ will turn out to be $<1$, we can use the bound $\E[ e^{\lambda^2 Y_r \theta_r^2/2}] \le e^{\lambda^2 \theta_r^2 p_r\gs}$, and conclude that
\[ \mathbb{E}_{V^-}[e^{\lambda (g - \E[g])}] \le e^{2 \lambda^2 \sum_{r \in V^-} p_r\gs \theta_r^2} \le e^{4\lambda^2 \alpha_\eps \frac{n}{\Delta}}.\]

The last inequality is due to a reasoning similar to earlier.

We are nearly done. The only step that remains for proving Theorem~\ref{thm:conc:condition} is to obtain a bound on $\E[g]$. For this, we need to bound, for any $i$, the term 
\[ \E[ |R_i| ] = \E \huge[ \large| \sum_{j \in \Gamma(i) \cap V^-} w_{ij}\gs c_j (Y_j - p_j\gs) \large| \huge].\]
By Cauchy-Schwartz and the fact that $Y_j$ are independent, we have
\begin{align*}
\E[ |R_i| ]^2 &\le \E[ R_i^2 ] = \sum_{j \in \Gamma(i) \cap V^-} (w_{ij}\gs)^2 c_j p_j\gs (1-p_j\gs) \\
&\le \frac{1}{(p_i\gs)^2 \Delta^4} \sum_{j \in \Gamma(i)} \frac{1-p_j\gs}{p_j\gs} \le \frac{\alpha_\eps d_i}{\eps (p_i\gs)^2 \Delta^3}.
\end{align*}
We have used the fact that $(1-p_j\gs)/p_j\gs \le \alpha_\eps \Delta/ \eps$. 
Thus we have
\[ \E[ \sum_{i \in V^+} p_i\gs |R_i| ] \le \left(\frac{\alpha_\eps}{\eps \Delta^3} \right)^{1/2} \sum_i d_i^{1/2} \le \left(\frac{\alpha_\eps}{\eps \Delta^3} \right)^{1/2} n \Delta^{1/2} \le \frac{n}{\Delta} \left(\frac{\alpha_\eps}{\eps} \right)^{1/2}. \]
From our choice of $\alpha_\eps$, this is $< \frac14 \cdot \frac{\eps n}{\Delta}$.

Putting everything together, we get that the desired moment 
\[ \le \exp \left( \lambda \frac{\eps n}{4\Delta} + \lambda^2 \alpha_\eps \frac{n}{\Delta} \right). \]

To complete the bound, we end up setting $\lambda = \frac{\eps}{\alpha_\eps}$.  For this value of $\lambda$, we must ensure that 
\[  \frac{\eps}{\alpha_\eps}  \frac{(d_r + \eps \Delta)}{p_r\gs \Delta^2} \le 1, \]
which is indeed true.

%
%

\section{Sparse Core-set for Max-Cut}
\label{sec:edge-sample}
In Theorem~\ref{sampling}, we have shown that there is a core-set (i.e., a smaller weighted graph with the same MAXCUT value) with a small number of {\em vertices}.  We now show that the number of edges can also be made small (roughly $n/\Delta$).  This will prove Theorem~\ref{coreset}.

We start with a lemma about sampling edges in graphs with edge weights $\le 1$. (We note that essentially the same lemma is used in several works on sparsifiers for cuts.)
\begin{lemma}\label{edge-sample}
Consider the weighted graph $H$ resulted from \vertexsample, defined on set of vertices $S'$ in which $w_{ij} < 1$. If we apply \edgesample on $H$, then the resulted graph $H'$ have 
\[ \maxcut (H') \pm (1+\epsilon) \maxcut (H)\]
with probability at least $1- \frac{1}{n^2}$.
\end{lemma}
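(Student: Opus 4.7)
\textbf{Proof plan for Lemma~\ref{edge-sample}.}

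My plan is to follow the standard Karger-style sparsification argument, adapted to the weighted graph $H$. Fix a partition $(A, S' \setminus A)$ and let $C$ denote its cut value in $H$ and $C'$ its cut value in $H'$. Writing $X_e$ for the indicator that edge $e$ is retained by \edgesample, we have $C' = \sum_{e \in \text{cut}} (w_e/p_e) X_e$, so $\E[C'] = C$ by the reweighting rule. The two key parameters for Bernstein's inequality are: a uniform bound on each term, $w_e/p_e \le \eps^2 W / (8|S'|)$ (whenever $p_e < 1$; edges with $p_e = 1$ contribute deterministically and can be ignored in the variance), and a bound on the variance
\[
\sum_{e \in \text{cut}} \frac{w_e^2}{p_e}(1-p_e) \;\le\; \frac{\eps^2 W}{8|S'|} \sum_{e \in \text{cut}} w_e \;\le\; \frac{\eps^2 W \cdot C}{8|S'|}.
\]

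Next I would apply Theorem~\ref{bern} with deviation $t = \eps \max(C, W/2)$. For any cut with $C \ge W/2$ (which includes $\maxcut(H)$), this yields
\[
\Pr\bigl[|C' - C| > \eps C\bigr] \;\le\; \exp\bigl(-\Omega(|S'|)\bigr),
\]
while for cuts with smaller $C$ the same computation gives $|C' - C| \le \eps W$ with probability $1 - \exp(-\Omega(|S'|))$. Choosing the constant $8$ in the definition of $p_e$ large enough ensures the exponent dominates $|S'|\ln 2$ by a wide margin, so a union bound over all $2^{|S'|}$ partitions of $S'$ shows that simultaneously for every cut, $|C' - C| \le \eps \max(C, W/2)$ with probability at least $1 - 1/n^2$.

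Finally I would combine the two bounds as follows. Since any cut satisfies $\maxcut(H) \ge W/2$, an additive error of $\eps W$ on every cut translates into a multiplicative $(1 \pm 2\eps)$ error on the maximum, after rescaling $\eps$. In particular, if $(A^\star, S'\setminus A^\star)$ is the maximum cut of $H$, its value in $H'$ is at least $(1-\eps)\maxcut(H)$, and conversely the maximum cut of $H'$ (viewed as a cut of $H$) has value at most $(1+\eps)\maxcut(H)$ in $H$ by the same union bound applied in the reverse direction. This gives $\maxcut(H') \in (1 \pm \eps)\maxcut(H)$ as desired.

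The only mildly delicate step will be keeping track of the two regimes ($C \ge W/2$ versus smaller cuts) so that the single union bound works uniformly; the main obstacle is simply verifying that the constant $8$ in $p_e$ is large enough to beat the $2^{|S'|}$ union bound factor while still keeping the sample size $\widetilde{O}(|S'|/\eps^2)$, which amounts to a routine constant-chasing exercise in Bernstein's inequality.
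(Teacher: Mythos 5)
Your proof is correct and follows essentially the same route as the paper's: fix a cut, apply Bernstein with the per-edge bound $w_e/p_e \le \eps^2 W/(8|S'|)$ and the corresponding variance bound, set the deviation to order $\eps W$ to get a tail of $\exp(-\Omega(|S'|))$, and union-bound over all $2^{|S'|}$ cuts; the paper simply fixes $t = \eps W$ for every cut and then implicitly uses $\maxcut(H) \ge W/2$ to convert the uniform additive guarantee into a multiplicative one, whereas you spell out the two regimes ($C \ge W/2$ versus smaller) explicitly. The extra case-split adds clarity but is not a different argument.
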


Let us first see that the lemma gives us Theorem~\ref{coreset}.
\begin{proof}[Proof of Theorem~\ref{coreset}]
We only need to verify the bound on the number of edges. As every edge is sampled with probability $p_{ij} = \min(1, \frac{8 w_{ij}}{\eps^2})$, and since the total edge weight is normalized to be $|S'|$, we have that the expected number of edges is $\le \frac{8|S'| }{\eps^2}$, and w.p. at least $1-\frac{1}{n^4}$, this is at most $\frac{16 |S'| }{\eps^2}$, completing the proof.
\end{proof}


\subsection{Proof of Edge Sampling}\label{app:edge-sample}
We now prove Lemma~\ref{edge-sample}.

\begin{proof}
Recall that in \edgesample algorithm we first rescale the edge weights so that they sum up to $|S'|$, and then sample each edge $ij$ in the resulted graph (also denoted $H$, as we can assume it to be a pre-processing) with probability $p_{ij} = \min (w_{ij} \frac{8}{\epsilon^2}, 1)$ and reweigh the edges to $w'_{ij} = \frac{w_{ij}}{p_{ij}}$ to obtain the graph $H'$. Define the indicator variable $X_{ij}$ for each edge $e_{ij}$ in graph $H$, where $X_{ij} =1$ if the corresponding edge $e_{ij}$ is selected by \edgesample and $X_{ij} = 0$ otherwise. 

Our goal is to show that all cuts are preserved, w.h.p.  Consider any cut $(A, B)$ in $H$.  Call the set of all the edges on this cut as $C_{A, B}$ and the set of in the cut on the sampled graph $H'$ as $C'_{A, B}$. 
Set $w(C_{A, B}) = \sum_{e_{ij} \in C_{A, B}} w_{ij}$ and $w'(C'_{A, B}) = \sum_{e_{ij} \in C'_{A, B}} w'_{ij}$. Then we have,
\[ \mathbb{E}[w'(C'_{A, B})] = \mathbb{E}[\sum w'_{ij} X_{ij}] = \sum w'_{ij} \Pr (X_{ij} = 1) = \sum p_{ij} w'_{ij} = \sum w_{ij} = w(C_{A, B}).\]

We will now apply Bernstein's inequality to bound the deviation. For this, the variance is first bounded as follows.
\[ \Var[\sum w'_{ij} X_{ij}] = \sum {w'_{ij}}^2 \Var (X_{ij}) \leq  \sum \frac{w_{ij}^2}{p_{ij}} (1-p_{ij}) \le \frac{\epsilon^2}{8} \sum w_{ij} = \frac{\epsilon^2}{8} w(C_{A, B})\]
We used the inequality that unless $p_{ij}=1$ (in which case the term drops out), we have $w_{ij}/p_{ij} \le \eps^2/8$.  

By the observation on $w_{ij}/p_{ij}$ above, we can use Bernstein's inequality \ref{bern} with $b = \eps^2/8$, to obtain
\[ 
\Pr[|\sum w'_{ij} X_{ij} - w(C_{A, B})| \ge t ] \le \exp \left( - \frac{t^2}{ \frac{\eps^2 w(C_{A, B})}{4} + \frac{t \eps^2}{8} } \right). 
\]

Setting $t = \eps W$, where $W$ is the sum of all the edge weights (which is equal to $|S'|$ after the pre-processing), the bound above simplifies to $\exp(-2|S'|)$, and thus we can take a union bound over all cuts. This completes the proof.
\end{proof}

\section{A 2-pass streaming algorithm}
\label{sec:2-pass-streaming}

We now show how our main core set result can be used to design a streaming
algorithm for \maxcut. The algorithm works in two passes: the first pass builds
a core-set $S$ of size $\otilde(n/\Delta)$ as prescribed by
Theorem~\ref{sampling} and the second pass builds the induced weighted graph
$G[S]$ and computes its max cut. This algorithm works under edge insertion/deletion.

\subsection{Pass 1: Building a core set}
\label{sec:pass-1:-building}

To construct $S$, Theorem~\ref{sampling} states that each vertex must be sampled
with probability $p_i$, where $p_i \ge h_i$ and
$h_i = \min(1, \frac{\max(d_i, \epsilon \Delta)}{\Delta^2\alpha_\epsilon})$ is
the importance score of a vertex. As the goal is to only choose a small number
of vertices, we will also make sure that $p_i \le 2 h_i$.  The challenge here is
two-fold: we need to sample (roughly) proportional to the degree $d_i$, which
can only be computed \emph{after} the stream has passed, and we also need the
actual value of $p_i$ (or a close enough estimate of it) in order to correctly
reweight edges in the second pass.

The degree $d_i$ of a vertex $i$ is the ``count'' of the number of times $i$ appears in the edge stream. To sample with probability proportional to $d_i$ we will therefore make use of streaming algorithms for $\ell_1$-sampling \cite{monemizadeh20101, andoni2011streaming, jowhari2011tight}. We borrow some notation from \cite{andoni2011streaming}. 

\begin{definition}
 Let $\rho > 0, f \in [1,2]$. A $(\rho, f)$-approximator to $\tau > 0$ is a quantity $\hat{\tau}$ such that 
$ \tau/f - \rho \le \hat{\tau} \le f \tau + \rho $
\end{definition}


\begin{lemma}[\cite{jowhari2011tight} (rephrased from \cite{andoni2011streaming})]
\label{lemma:jst}
Given a vector $x \in \reals^n$ and parameters $\epsilon, \delta > 0, c > 0$ there exists an algorithm $A$ that uses space $O(\log(1/\epsilon)\epsilon^{-1} \log^2 n\log(1/\delta))$ and generates a pair $(i, v)$ from a distribution $D_x$ on $[1\dots n]$ such that with probability $1-\delta$
\begin{itemize}
\item $D_x(i)$ is a $(\frac{1}{n^c}, 1+\epsilon)$-approximator to $|x_i|/\|x\|_1$
\item $v$ is a $(0, 1+\epsilon)$-approximator to $x_i$
\end{itemize}
where $c$ is a fixed constant.
\end{lemma}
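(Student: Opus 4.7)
My plan is to prove this by the standard precision-sampling framework of Andoni--Krauthgamer--Onak, refined as in Jowhari--Saglam--Tardos for tightness. The high-level idea is to reduce $\ell_1$-sampling to $\ell_2$ heavy-hitter recovery on a randomly rescaled copy of $x$, so that the whole algorithm is a linear sketch that can be maintained under turnstile updates.

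First, using shared randomness, for each coordinate $i \in [n]$ I would draw an independent $u_i$ uniform in $(0,1]$ and implicitly define the rescaled vector $z$ with $z_i = x_i/u_i$. A direct calculation shows that for any threshold $T>0$, $\Pr[|z_i| \ge T] = \min(1, |x_i|/T)$, so conditioned on some coordinate exceeding $T \approx \|x\|_1$, the index of the largest $|z_i|$ is distributed proportionally to $|x_i|$. This is the exact source of the target distribution $D_x$.

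Second, since $z = D x$ for the diagonal matrix $D = \mathrm{diag}(1/u_1,\dots,1/u_n)$ determined by the shared $u_i$'s, any linear sketch on $z$ can be maintained in the turnstile model: an arriving update $(i, \Delta)$ to $x_i$ induces an update $(i, \Delta/u_i)$ to $z_i$. I would feed $z$ into a Count-Sketch (or similar $\ell_2$-heavy-hitter structure) with $O(\epsilon^{-1})$ columns and $O(\log(n/\delta))$ rows, amplified by $O(\log(1/\delta))$ independent repetitions. The recovery procedure then reports the top coordinate of $z$ together with a $(1\pm\epsilon)$-estimate of $z_i$, which, by multiplying back by $u_i$, yields the claimed $(0, 1+\epsilon)$-approximator $v$ for $x_i$. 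Standard tail bounds on the maximum of $|x_i|/u_i$ show that with probability $1-\delta$ the rescaled vector has a coordinate dominating its $\ell_2$-tail by the factor Count-Sketch needs.

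The main obstacle will be controlling the bias in the output distribution. Rescaling alone gives an \emph{exact} distribution proportional to $|x_i|/\|x\|_1$, but the failure modes of the heavy-hitter recovery (wrong index reported, poor value estimate, or multiple near-heavy coordinates after rescaling) perturb $D_x$. Translating these perturbations into the $(1/n^c, 1+\epsilon)$-approximation guarantee requires: (i) conditioning on a ``good'' event that holds with probability $1-1/n^{c+1}$ by choosing constants inside the Count-Sketch accordingly, (ii) showing that on this event the reported index has probability mass within a $(1+\epsilon)$ multiplicative factor of the target, and (iii) absorbing the remaining $\le 1/n^{c+1}$ probability mass into the additive slack $1/n^c$. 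Keeping all of this inside $O(\epsilon^{-1}\log(1/\epsilon)\log^2 n\log(1/\delta))$ space is where the multi-level refinement of Jowhari--Saglam--Tardos beats vanilla precision sampling, and is the subtle quantitative point of the lemma.
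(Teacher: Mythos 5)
The paper does not prove this lemma; it is a black-box citation of the $\ell_1$-sampler of Jowhari--Saglam--Tardos (rephrased using Andoni--Krauthgamer--Onak's notation), so there is no in-paper proof to compare against. Your sketch is a faithful high-level account of the argument in those sources: the uniform rescaling $z_i = x_i/u_i$ yields the exact $\ell_1$ distribution via $\Pr[|z_i|\ge T]=\min(1,|x_i|/T)$, linearity lets one maintain a sketch of $z$ under turnstile updates, and a Count-Sketch on $z$ recovers the maximizer and a $(1\pm\epsilon)$ estimate of its value, with heavy-hitter failure modes folded into the $1/n^c$ additive slack. Two points worth tightening if you were to flesh this out. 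First, the lemma's phrase ``with probability $1-\delta$'' modifies a property of the \emph{distribution} $D_x$, which is logically awkward; what is actually meant (and what your ``good event'' implicitly captures) is that the algorithm declares success with probability $\ge 1-\delta$, and conditioned on success the output index has a distribution that $(1/n^c, 1+\epsilon)$-approximates $|x_i|/\|x\|_1$ while the reported value $(0,1+\epsilon)$-approximates $x_i$. Second, the specific space bound $O(\epsilon^{-1}\log(1/\epsilon)\log^2 n\log(1/\delta))$ is not a consequence of ``vanilla'' precision sampling plus a union bound; it relies on JST's sharper analysis of how many Count-Sketch buckets are needed so that the rescaled max dominates the residual $\ell_2$ tail, and this is precisely where the extra $\log(1/\epsilon)$ factor (rather than, say, $\log n$) comes from. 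Your outline correctly identifies this as the subtle quantitative step, but a complete proof would need to reproduce that bucket-count calculation rather than gesture at it.
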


We will also need to maintain heavy hitters: all vertices of degree at least $\Delta^2$ (up to constants). To do this, we will make use of the standard \textsc{CountMin} sketch \cite{cormode2005improved}. For completeness, we state its properties here.

\begin{lemma}[\cite{cormode2005improved}]
\label{lemma:cm}
Fix parameters $k, \delta > 0$. Then given a stream of $m$ updates to a
vector $x \in \reals^n$ there is a sketch CM of size $O(k\log \delta^{-1}(\log m
+ \log n))$ and a reconstruction procedure $f:[n] \to \reals$ such that with
probability $1-\delta$, for any $x_i$,  
$ |x_i - f(i)| \le  \|x\|_1/k $
\end{lemma}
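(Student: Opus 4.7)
The plan is to exhibit the classical CountMin sketch construction and analyze it. I would set up $d = \lceil \ln(1/\delta) \rceil$ rows of $k$ counters each, arranged as a table $C[1\ldots d][1\ldots k]$, together with pairwise-independent hash functions $h_1,\ldots,h_d : [n] \to [k]$ drawn from a suitable family; the hash functions are stored in $O(d \log n)$ bits. On update $(i,c)$ (i.e., $x_i \gets x_i + c$), I increment $C[j][h_j(i)]$ by $c$ for every $j\in[d]$. For the reconstruction procedure, define $f(i) = \mathrm{median}_{j \in [d]} C[j][h_j(i)]$ (using the median rather than the minimum, since the lemma as stated allows arbitrary real updates and thus cancellation).

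The key step is the per-row deviation bound. For each row $j$, by construction
\begin{equation*}
C[j][h_j(i)] \;=\; x_i + \sum_{i' \ne i} x_{i'} \cdot \mathbf{1}[h_j(i') = h_j(i)].
\end{equation*}
By pairwise independence, $\Pr[h_j(i')=h_j(i)] = 1/k$ for $i' \ne i$, so the absolute error $E_{j,i} := |C[j][h_j(i)] - x_i|$ satisfies $\mathbb{E}[E_{j,i}] \le \|x\|_1/k$. Markov's inequality then gives $\Pr[E_{j,i} > \|x\|_1/k] \le 1/2$ after rescaling $k$ by a small constant factor (alternatively, by a direct Markov step at threshold $2\|x\|_1/k$ and then redefining $k$).

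Next I would boost this constant-probability guarantee by taking the median across the $d$ independent rows. Let $Z_j = \mathbf{1}[E_{j,i} > \|x\|_1/k]$; the $Z_j$ are independent Bernoulli with mean at most $1/2$, so a Chernoff bound gives $\Pr[\sum_j Z_j > d/2] \le e^{-\Omega(d)}$. Choosing $d = \Theta(\log(1/\delta))$ makes this at most $\delta$, and when fewer than half of the rows are ``bad'' the median estimator $f(i)$ satisfies $|f(i) - x_i| \le \|x\|_1/k$ as required. Finally, I would account for space: each counter stores an integer of magnitude at most $m$ times the maximum per-update magnitude, so $O(\log m + \log n)$ bits per counter suffices (absorbing $\log n$ for the universe size into the hash-function storage), yielding total space $O(k\log(1/\delta)(\log m + \log n))$.

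The main obstacle, and the only really subtle point, is handling the turnstile (signed-update) setting: the classical ``minimum'' estimator only works in the strict-turnstile (non-negative $x$) case, because cancellations can make individual counters underestimate $x_i$. Switching to the median handles this, but forces the per-row bound to be two-sided — hence the use of $|C[j][h_j(i)] - x_i|$ rather than one-sided deviation. The rest is a routine Markov $+$ Chernoff argument. If, alternatively, the intended regime is strict-turnstile, I would simply take $f(i) = \min_j C[j][h_j(i)]$, keep $d = \lceil \ln(1/\delta) \rceil$, and note that each counter now overestimates, so the one-sided Markov bound combined with independence across rows directly yields the desired high-probability guarantee with identical space.
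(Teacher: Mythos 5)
The paper does not prove Lemma~\ref{lemma:cm}; it is stated purely as a black-box citation to Cormode and Muthukrishnan. Your reconstruction is the standard CountMin/Count-Median argument and is essentially correct: the per-row error decomposition, the pairwise-independence expectation bound $\mathbb{E}[E_{j,i}] \le \|x\|_1/k$, the Markov step, the Chernoff boost across $d = \Theta(\log(1/\delta))$ independent rows via the median, and the space accounting all go through. You also correctly flag the one genuine subtlety -- that the minimum estimator is only sound in the strict-turnstile regime (which is actually the regime the paper needs, since $x_i$ here is a vertex degree and hence nonnegative under edge insertions and deletions), while the median estimator handles general signed updates at the cost of a two-sided error analysis. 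One small point to tighten: when you apply Chernoff you need the per-row failure probability to be bounded \emph{strictly} below $1/2$ (say $\le 1/4$), not merely $\le 1/2$, in order to get the $e^{-\Omega(d)}$ decay; your remark about rescaling $k$ by a constant factor already absorbs this, but it is worth making explicit that the rescaling is what buys the slack that Chernoff requires, rather than leaving the constant implicit.
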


\noindent {\bf \emph{Outline.}}  We will have a collection of $r$, roughly $n/\Delta$ $\ell_1$-samplers. These samplers will together give a good estimate ($(1+\eps)$-approximation) of the importance $h_i$ for all the vertices that have a {\em small} degree (which we define to be $< \alpha_\eps \Delta^2$).  Then, we use the CM sketch to maintain the degrees of all the `high degree' vertices, i.e., those with degrees $\ge \alpha_\eps \Delta^2$. Taken together, we obtain the desired sampling in the first pass.  

\begin{definition}
  Given two sets of pairs of numbers $S, S'$, let 
$S \cup_{\max} S' = \{ (x, \max_{(x', y) \in S \cup S', x' = x} y) \}$
\end{definition}



\begin{algorithm}
\caption{Given $n$ and average degree $\Delta$\label{alg:pass1}}
  \begin{algorithmic}
    \STATE Initialize $S_l, S_m, S_h \leftarrow \emptyset$, and $\zeta =
    \alpha_\eps$.
    \STATE Sample elements from $[1\dots n]$ each with probability
    $\epsilon/\Delta\alpha_\epsilon$. For each sampled $i$ add $(i,
    \epsilon/\Delta\alpha_\epsilon)$ to $S_l$.
    \STATE Fix $\zeta > 0$. Initialize a \textsc{CountMin} sketch CM with size parameter $k =
    n/\Delta \zeta^2$. Let $f$ be the associated reconstruction procedure.
    \STATE Initialize $r = O(n/\Delta \alpha_\epsilon)$ copies $A_1\dots A_r$ of the algorithm $A$ from Lemma~\ref{lemma:jst}. 
    \FOR{each stream update $(i, w)$ (a vertex to which current edge is incident, and weight)}
    \STATE Update each $A_j, 1 \le j \le r$.
    \STATE Update CM. 
    \ENDFOR
    \FOR{$j = 1$ to $r$}
    \STATE Sample $(i, v)$  from $A_j$. $S_m = S_m \cup_{\max} \{(i, v)\}$
    \ENDFOR
    \STATE $S_h = \{ (i, 1) \mid f(i) \ge (1-\zeta)\Delta^2\alpha_\epsilon \}$ 
    \RETURN $S_l \cup_{\max} S_m \cup_{\max} S_h$
  \end{algorithmic}
\end{algorithm}

\begin{lemma}
  \label{lemma:pass1}
Let $S = \{ (i, v_i) \}$ be the set returned by Algorithm \ref{alg:pass1}. Then 
\begin{itemize}
\item $S$ has size $\widetilde{O}\left( \frac{n}{\Delta} \right)$.
\item Each $i \in [n]$ is sampled with probability $p_i$ that is $(0, 1+\epsilon)$ approximated by $v_i$ and that $(n^{-c}, 1+\epsilon)$-approximates $h_i$. 
\end{itemize}
\end{lemma}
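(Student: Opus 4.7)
The plan is to split the analysis along the three regimes defined by $h_i$, and show that each regime is handled by one of the three substructures $S_l, S_m, S_h$, while the other two contribute only negligibly to both the cardinality of $S$ and to the realized sampling probability.

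For the size bound, the expected size of $S_l$ is $n \cdot \epsilon/(\Delta \alpha_\epsilon) = \widetilde O(n/\Delta)$ and concentrates by a standard Chernoff bound; $|S_m| \le r = O(n/(\Delta \alpha_\epsilon))$ deterministically. For $S_h$, Lemma \ref{lemma:cm} with $k = n/(\Delta \zeta^2)$ gives additive error $\|d\|_1/k = \zeta^2 \Delta^2$, so only vertices with true degree at least $(1 - 2\zeta)\Delta^2 \alpha_\epsilon$ can pass the threshold $(1-\zeta)\Delta^2 \alpha_\epsilon$. Since the total degree is $n\Delta$, there are at most $O(n/(\Delta \alpha_\epsilon))$ such vertices. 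Summing the three contributions gives $|S| = \widetilde O(n/\Delta)$.

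The probability analysis proceeds by cases on $d_i$. In the low-degree regime $d_i \le \epsilon \Delta$ one has $h_i = \epsilon/(\Delta \alpha_\epsilon)$; the uniform sampler $S_l$ produces exactly this probability and records $v_i = \epsilon/(\Delta \alpha_\epsilon)$, while the $\ell_1$-samplers include $i$ with probability at most $r (1+\epsilon) d_i/(n\Delta) \le (1+\epsilon) h_i$, which under $\cup_{\max}$ perturbs the final probability by at most a factor $1+O(\epsilon)$. In the medium-degree regime $\epsilon \Delta < d_i < \Delta^2 \alpha_\epsilon$ one has $h_i = d_i/(\Delta^2 \alpha_\epsilon)$; by Lemma \ref{lemma:jst} each $\ell_1$-sampler picks $i$ with probability that $(n^{-c}, 1+\epsilon)$-approximates $d_i/(n\Delta)$, and taking $r$ independent copies together with the bound $1-(1-p)^r = rp(1 \pm O(rp))$ shows that the inclusion probability of $i$ in $S_m$ is a $(n^{-c}, 1+O(\epsilon))$-approximator of $h_i$. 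The returned value $v$ is a $(0, 1+\epsilon)$-approximator of $d_i$, from which the recorded estimate of the sampling probability is a $(0, 1+\epsilon)$-approximator of $p_i$. Finally in the high-degree regime $d_i \ge \Delta^2 \alpha_\epsilon$ one has $h_i = 1$; by the CountMin guarantee, $f(i) \ge d_i - \zeta^2 \Delta^2 \ge (1-\zeta) \Delta^2 \alpha_\epsilon$ (since $\zeta = \alpha_\epsilon$), so $i \in S_h$ deterministically with $v_i = 1$. A union bound over the $O(n/\Delta)$ independent $\ell_1$-samplers together with the $n^{-c}$ failure probability of CountMin gives the claimed bounds with probability $1 - 1/\operatorname{poly}(n)$.

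The main obstacle is the boundary behavior between regimes: when $d_i$ lies near $\epsilon \Delta$ or near $\Delta^2 \alpha_\epsilon$, two samplers can realistically both fire, and the effective probability of $i$ in $S$ is the $\cup_{\max}$-combination of contributions with heterogeneous approximation guarantees (multiplicative from Lemma \ref{lemma:jst} and additive-then-deterministic from Lemma \ref{lemma:cm}). The challenge is to certify that the value $v_i$ actually stored for such an $i$ tracks the realized inclusion probability, rather than just one of the individual contributions, so that the $(0, 1+\epsilon)$ and $(n^{-c}, 1+\epsilon)$ guarantees hold uniformly over vertices. This requires a careful case analysis at the boundaries but introduces no ideas beyond what Lemmas \ref{lemma:jst} and \ref{lemma:cm} already provide.
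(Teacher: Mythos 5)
Your proposal follows the same three-regime case analysis on $d_i$ as the paper's proof sketch (CountMin heavy hitters for $d_i \ge \Delta^2\alpha_\eps$, the uniform sample $S_l$ for $d_i < \eps\Delta$, and the $r$ independent $\ell_1$-samplers for the middle range), and your treatment of the CountMin threshold, the inclusion probability $1-(1-d_i/n\Delta)^r \approx r d_i/(n\Delta) = h_i$, and the size bounds for $S_l$, $S_m$, $S_h$ all match the paper's argument. The added detail on the $\cup_{\max}$ boundary behavior and the explicit cardinality accounting are reasonable elaborations of what the paper treats as straightforward, so this is the same proof, not a different route.
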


\begin{proof}[Proof sketch.]
Consider any vertex $i$ with $d_i \ge \Delta^2 \alpha_\epsilon$. By Lemma \ref{lemma:cm}, such a
vertex will report a count of at least $f(i) =(1-\zeta)\Delta^2\alpha_\epsilon$ and thus is
guaranteed to be included in $S_h$. Its reported score $v_i = 1$ satisfies the
requirement of the Lemma. Secondly, consider any vertex with degree $d_i <
\epsilon\Delta$. For such a vertex, $h_i = \epsilon/\Delta\alpha_\epsilon$ and
thus it is included in $S_l$ with the desired probability and $v_i$.

Finally, consider a vertex $i$ with $\epsilon\Delta \le d_i < \Delta^2
\alpha_\epsilon$. 
The probability that none of the
$\ell_1$-samplers yield $i$ is $(1-d_i/n\Delta)^r$, and since $d_i/n\Delta \ll
1$, this can be approximated as $(1- r d_i/n\Delta)$. Thus, the probability of
seeing $i$ is $r d_i/n\Delta = d_i/\Delta^2 \alpha_\epsilon$ as desired.
\end{proof}

\begin{corollary}
\label{corollary:pass1}
For each $(i, v_i) \in S, h_i \le v_i \le 2h_i$. 
\end{corollary}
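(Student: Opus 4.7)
The plan is to prove the corollary by case analysis on the three disjoint degree regimes corresponding to the branches $S_l$, $S_m$, $S_h$ of Algorithm~\ref{alg:pass1}. In each regime the bound $h_i \le v_i \le 2h_i$ will either hold with equality or can be forced by choosing the internal accuracy parameters of the sub-routines small enough to absorb the composed approximation errors into the factor of $2$.

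First I would dispatch the two easy cases. For a high-degree vertex with $d_i \ge \Delta^2 \alpha_\epsilon$ we have $h_i = 1$. Lemma~\ref{lemma:cm} with $k = n/(\Delta \zeta^2)$ gives w.h.p.\ that $f(i) \ge d_i - \|d\|_1/k \ge \Delta^2 \alpha_\epsilon - \Delta^2\zeta^2 \ge (1-\zeta)\Delta^2\alpha_\epsilon$ when $\zeta = \alpha_\epsilon$, so $i \in S_h$ with $v_i = 1 = h_i$. For a low-degree vertex with $d_i \le \epsilon \Delta$ we have $h_i = \epsilon/(\Delta \alpha_\epsilon)$, and the uniform sampling step inserts $(i, \epsilon/(\Delta \alpha_\epsilon))$ into $S_l$, giving $v_i = h_i$ exactly. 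In both cases the inequality is tight.

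The substantive case is the medium-degree regime $\epsilon \Delta < d_i < \Delta^2 \alpha_\epsilon$, where $h_i = d_i/(\Delta^2 \alpha_\epsilon)$. Here the pair $(i, v)$ comes from one of the $r = \Theta(n/(\Delta \alpha_\epsilon))$ independent $\ell_1$-samplers, with $v$ a $(0, 1+\epsilon)$-approximator to $d_i$ (so $v/(\Delta^2 \alpha_\epsilon)$ estimates $h_i$). Taking the $\cup_{\max}$ across the $r$ samplers, the probability that $i$ enters $S_m$ is a $(1+\epsilon)$-approximation of $r \cdot d_i/(n\Delta) = d_i/(\Delta^2 \alpha_\epsilon) = h_i$, up to the additive slack $n^{-c}$ in Lemma~\ref{lemma:jst}. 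By selecting the internal sampler precision as $\epsilon' := \epsilon/C$ for a large absolute constant $C$, and choosing $c$ in Lemma~\ref{lemma:jst} large enough that $n^{-c} \ll h_i \ge \epsilon/(\Delta \alpha_\epsilon)$, the combined multiplicative error stays well below $2$ and the additive error is negligible.

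The one delicate point, and the main obstacle, is ensuring the \emph{lower} bound $v_i \ge h_i$ rather than merely $v_i \gtrsim h_i$: the $(1+\epsilon')$-approximation guarantee is two-sided, so the raw estimate could lie slightly below $h_i$. This is handled by an explicit upward rescaling of the reported value by a factor of $(1+\epsilon')$ before storing it in $v_i$, which forces $v_i \ge h_i$ while keeping $v_i \le (1+\epsilon')^3 h_i \le 2 h_i$ for $\epsilon$ sufficiently small. Threading these parameter choices together, and taking a union bound over the failure probabilities in Lemmas~\ref{lemma:jst}, \ref{lemma:cm} and in Lemma~\ref{lemma:pass1}, then yields the corollary.
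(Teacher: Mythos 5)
The paper never actually proves this corollary --- it is stated as an immediate consequence of Lemma~\ref{lemma:pass1}, whose proof sketch does precisely the three-regime case analysis (low, medium, high degree, corresponding to $S_l$, $S_m$, $S_h$) that you carry out. So your approach is the intended one, and your second paragraph reproduces the paper's reasoning faithfully.

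Your main added value is the observation about the \emph{lower} bound $v_i \ge h_i$. You are correct that a $(0,1+\epsilon)$-approximation to $p_i$ composed with a $(n^{-c},1+\epsilon)$-approximation of $h_i$ only gives $v_i \ge h_i/(1+\epsilon)^2 - O(n^{-c})$, which is strictly weaker than the stated $v_i \ge h_i$; the paper silently ignores this. Your fix (deterministically rescale the reported estimate upward by $(1+\epsilon')$, and choose $\epsilon'$ and $c$ so that $(1+\epsilon')^2 h_i + O(n^{-c}) \le 2h_i$, using $h_i \ge \epsilon/(\Delta\alpha_\epsilon)$ to absorb the additive slack) is a clean and correct repair, and is worth making explicit since Theorem~\ref{sampling} is stated for $p_i \in [h_i, 2h_i]$ with a hard lower endpoint.

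One gap you should close, which the paper is also informal about: the three cases do not correspond to \emph{disjoint} branches, because $S = S_l \cup_{\max} S_m \cup_{\max} S_h$ takes a maximum, so a single vertex can contribute a pair in more than one $S_\bullet$. Your low- and medium-degree cases survive this (the competing branch's value is dominated, up to the same $(1+\epsilon')$ slack, by the branch you analyze), but the high-degree case does not as written: for $d_i \gg \Delta^2\alpha_\epsilon$ the $\ell_1$-sampler's estimate of $d_i$, scaled by $1/(\Delta^2\alpha_\epsilon)$, can greatly exceed $1 = h_i$, and the $\cup_{\max}$ would then record a $v_i \gg 2h_i$. You need to say that $v_i$ is capped at $1$ (this is natural since $h_i = \min\{1,\cdot\}$ and the $v_i$ are meant to be sampling probabilities, but neither the algorithm nor your proof states it), after which your claim ``$v_i = 1 = h_i$'' in the high-degree case becomes correct.

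Finally, a small clarity point: in the medium-degree paragraph you interleave a statement about the \emph{value} $v$ being a $(0,1+\epsilon)$-approximator to $d_i$ with a statement about the \emph{probability} that $i$ enters $S_m$ approximating $h_i$. Only the former is what the corollary is about (the value $v_i$ stored alongside $i$); the latter is the content of the ``sampled with probability $p_i$'' part of Lemma~\ref{lemma:pass1} and is not needed here. Separating the two would make the argument cleaner.
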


\subsection{Pass 2: Building the induced weighted graph}
\label{sec:pass-2:-building}

The first pass produces a set $S$ of $\otilde(n/\Delta)$ vertices together with estimates $v_i$ for their importance score $h_i$. If we weight each edge $ij$ in $G[S]$ by $w_{ij} = 1/v_iv_j \Delta^2$, Theorem \ref{sampling}, along with Corollary~\ref{corollary:pass1} guarantee that a \maxcut in the resulting weighted graph is a good approximation of the true max cut. 

Thus, knowing $S$, constructing the re-weighted $G[S]$ in the second pass is trivial if we had space proportional to the number of edges in $G[S]$. Unfortunately this can be quadratic in $|S|$, so our goal is to implement the edge sampling of Theorem~\ref{coreset} in the second pass. This is done as follows:
%
we maintain a set of edges $E'$. Every time
we encounter an edge $ij$ with $i, j \in S$, we check to see if it is already in $E'$. If not, we toss a coin and  with probability $p_{ij} = \min(1, w_{ij}\log n/\epsilon^2)$ we insert $(i,j,w_{ij}/p_{ij})$
into $E'$. By Lemma~\ref{edge-sample}, the size of $E'$ is $\widetilde{O}(n/\Delta)$, and
the resulting graph yields a $(1+\epsilon)$ approximation to the \maxcut.


\section{Correlation Clustering}\label{sec:correlation}

Our argument for correlation clustering parallels the one for \maxcut.  The MAX-AGREE variant of correlation clustering, while not a CSP (as the number of clusters is arbitrary), almost behaves as one.  

We start with two simple observations.  The first is that we can restrict the number of clusters to $1/\eps$, for the purposes of a $(1+\eps)$ approximation (Lemma~\ref{lem:corr-smallk}) .  Next, observe that the optimum objective value is at least $\max \{C^+, C^- \} \ge n\Delta/2$.  This is simply because placing all the vertices in a single cluster gives a value $C^+$, while placing them all in different clusters gives $C^-$.  Thus, it suffices to focus on additive approximation of $\eps n \Delta$. 

\begin{lemma}\label{lem:corr-smallk}
Let $\cC$ be the optimal clustering, and let OPT be its max-agree cost. Then there exists a clustering $\cC'$ that has cost $\ge (1-O(\eps)) OPT$, and has at most $1/\eps$ clusters. 
\end{lemma}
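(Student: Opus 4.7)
The plan is to prove the lemma by a simple greedy merging argument: starting from the optimal clustering, we repeatedly merge pairs of clusters with small inter-cluster $|\eta|$-weight until only $1/\eps$ clusters remain. The losses telescope to $O(\eps n\Delta)$, which is $O(\eps)\cdot$OPT because OPT is $\Omega(n\Delta)$.

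More concretely, start with the optimal clustering $\cC = \{C_1,\dots,C_k\}$ and assume $k > 1/\eps$ (else we are done). For any two current clusters $C_a, C_b$, define the ``cut weight''
\[
W_{ab} := \sum_{i\in C_a,\, j\in C_b} |\eta_{ij}|.
\]
Since each edge appears in at most one inter-cluster pair, $\sum_{a<b} W_{ab} \le \sum_{ij}|\eta_{ij}| = C^{+}+C^{-} = n\Delta/2$ (using that exactly one of $c^+_{ij}, c^-_{ij}$ is nonzero, so $|\eta_{ij}|=c^+_{ij}+c^-_{ij}$). The key merging step: if there are $k'$ clusters, some pair satisfies
\[
W_{ab} \;\le\; \frac{n\Delta/2}{\binom{k'}{2}} \;=\; \frac{n\Delta}{k'(k'-1)}.
\]
When we merge $C_a$ and $C_b$, the MAX-AGREE objective changes by exactly $\sum_{i\in C_a, j\in C_b}\eta_{ij}$, whose absolute value is at most $W_{ab}$; hence the merging loses at most $n\Delta/(k'(k'-1))$.

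Applying this greedily, we decrement $k'$ from $k$ down to $\lceil 1/\eps\rceil$. The total loss telescopes:
\[
\sum_{k'= \lceil 1/\eps\rceil +1}^{k} \frac{n\Delta}{k'(k'-1)} \;=\; n\Delta \sum_{k'=\lceil 1/\eps\rceil +1}^{k} \Bigl(\tfrac{1}{k'-1}-\tfrac{1}{k'}\Bigr) \;\le\; n\Delta\cdot\eps.
\]
Combined with the lower bound $\mathrm{OPT}\ge \max(C^{+},C^{-})\ge n\Delta/4$ (achieved by the one-cluster and all-singleton trivial clusterings), we get loss $\le 4\eps\cdot\mathrm{OPT}$. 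The resulting clustering $\cC'$ has at most $1/\eps$ clusters and cost $\ge (1-O(\eps))\,\mathrm{OPT}$, as desired.

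The main subtlety to handle carefully is confirming that the bound $\sum_{a<b}W_{ab}\le n\Delta/2$ is preserved at every iteration, not just at the start. This is immediate: merging only moves pairs from the ``between-cluster'' tally to the ``within-cluster'' tally, so the between-cluster total is monotone non-increasing, and hence the averaging bound $W_{ab}^{\min}\le n\Delta/(k'(k'-1))$ applies throughout. Everything else is a routine telescoping calculation and the observation that, since $\mathrm{OPT}=\Omega(n\Delta)$, an additive error of $\eps n\Delta$ is within a multiplicative $(1+O(\eps))$ of OPT.
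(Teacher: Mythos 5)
Your proof is correct, but takes a genuinely different route from the paper's. You argue deterministically: greedily merge the pair of clusters with smallest inter-cluster $|\eta|$-weight (guaranteed $\le n\Delta/(k'(k'-1))$ by averaging), and the losses telescope to $\eps n\Delta$, which is then converted to a multiplicative bound using $\mathrm{OPT} = \Omega(n\Delta)$. The paper instead uses the probabilistic method: randomly color the $k$ optimal clusters with $1/\eps$ colors and merge clusters of the same color, then compute the expected new objective $C^- + S + \eps S'$ (where $S = \sum \eta_{ij}\chi_{ij}$, $S' = \sum \eta_{ij}(1-\chi_{ij})$) and observe that $C^- + S + \eps S' \ge (1-\eps)(C^- + S)$ reduces via $S + S' = C^+ - C^-$ to the trivially true $C^+ \ge 0$. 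The paper's argument is shorter and, notably, self-contained in the sense that it does not invoke the lower bound $\mathrm{OPT}\ge \Omega(n\Delta)$; your argument leans on that bound to turn the additive $\eps n\Delta$ loss into a relative loss, though this bound is available in context (it is stated in the same paragraph in the paper). Your argument has the minor advantage of being constructive and deterministic, and the telescoping identity is a clean mechanism. Both are valid; the constant in your $O(\eps)$ is slightly worse ($4\eps$ vs.\ $\eps$), which is immaterial for the lemma as stated.
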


The lemma is folklore in the correlation clustering literature.
\begin{proof}
Let $A_1, A_2, \dots, A_k$ be the clusters in the optimal clustering $\cC$.   Now, suppose we randomly color the clusters with $t = 1/\eps$ colors, i.e., each cluster $A_i$ is colored with a random color in $[t]$. We then merge all the clusters of a given color into one cluster, thus obtaining the clustering $\cC'$. Clearly, $\cC'$ has at most $t$ colors.

Now, we observe that if $u, v \in A_i$ to begin with, then $u, v$ are still in the same cluster in $\cC'$. But if $u \in A_i$ and $v \in A_j$, and the colors of $A_i$ and $A_j$ are the same, then $u$ and $v$ are no longer separated in $\cC'$.  Let us use this to see what happens to the objective. Let $\chi_{uv}$ be an indicator for $u,v$ being in the same cluster in the optimal clustering $\cC$, and let $\chi_{uv}'$ be a similar indicator in $\cC'$. The original objective is
\[ C^- + \sum_{ij} \eta_{ij} \chi_{ij}. \] 

From the above reasoning, if $\chi_{uv} = 1$, then $\chi_{uv}' = 1$.  Also, if $\chi_{uv} = 0$, $\E[ \chi_{uv}' ] = 1/t$,  i.e., there is a probability precisely $1/t = \eps$ that the clusters containing $u, v$ now get the same color.  Thus, we can write the expected value of the new objective as
\[ C^- + \sum_{ij} \eta_{ij} \chi_{ij} + \eps \sum_{ij} \eta_{ij} (1-\chi_{ij}). \]

Let us  denote $S = \sum_{ij} \eta_{ij} \chi_{ij}$ and $S' = \sum_{ij} \eta_{ij} (1-\chi_{ij})$.  Then by definition, we have $S+S' = \sum_{ij} \eta_{ij} = C^+ - C^-$.  Now, to show that we have a $(1-\eps)$ approximation, we need to show that $C^- + S + \eps S' \ge (1-\eps) (C^- + S)$. This simplifies to $C^- + S + S' \ge 0$, or equivalently, $C^+ \ge 0$, which is clearly true.

This completes the proof.
\end{proof}

Once we fix the number of clusters $k$, we can write correlation clustering as a quadratic program in a natural way: for each vertex $i$, have $k$ variables $x_{i\ell}$, which is supposed to indicate if $i$ is given the label $\ell$.  We then have the constraint that $\sum_{\ell} x_{i\ell} = 1$ for all $i$.  The objective function then has a clean form:
\begin{align*}
\sum_{ij}[\sum_{\ell=1}^{k} x_{i \ell} (1 - x_{j\ell}) c^{-}_{ij} + x_{i \ell} x_{j \ell} c^{+}_{ij}] = \sum_{ij} \sum_{\ell} x_{i \ell} c^{-}_{ij} + x_{i \ell} x_{j \ell} \eta_{ij} 
=  \sum_{i, \ell} x_{i \ell} (\rho_{i \ell} + d^{-}_i),
\end{align*}
where $x_{i \ell} =1$ iff vertex $i \in C_{\ell}$, and $\rho_{i \ell} = \sum_{j \in \Gamma(i)} x_{j \ell} \eta_{ij}$ and $d^{-}_i = \sum_j c^{-}_{ij}$.

Note the similarity with the program for \maxcut. We will show that the framework from Section~\ref{sec:paper-outline} carries over with minor changes.  The details of the new $\akkest$ procedure can be found in Section~\ref{akk-est-corr} (it requires one key change: we now need to consider $k$-partitions of the seed set in order to find $\rho$).  The duality based proof is slightly more involved; however we can use the same rough outline.  The proof is presented in Section~\ref{inducedLP-CC}.  

\subsection{LP Estimation Procedure for Correlation Clustering}\label{akk-est-corr}
Let $H = (V, E, c)$ be a weighted, undirected graph with edge weights $c^{+}_{ij}$ and $c^{-}_{ij}$ as before, and let $\gamma: V \rightarrow [0,1]$ denote sampling probabilities.  We define
$\akkest_{C} (H, \gamma)$ to be the output of the following randomized algorithm: sample a set $S$ by including each vertex $i$ in it w.p. $\gamma_i$ independently; next, for each partition $(A_1, \cdots, A_k)$ of $S$, solve the LP defined below, and output the largest objective value. 

$LP_{A_1, \cdots, A_k}(V)$ is the following linear program. (As before, we use constants $\rho_{i \ell} := \sum_{j \in \Gamma(i) \cap A_{\ell}} \frac{\eta_{ij}}{\gamma_j}$.)
\begin{align*}\label{eq:akk-CC}
\text{maximize} &\quad \sum_i x_{i \ell} ( \rho_{i \ell} + d^{-}_i) - (s_{i \ell} + t_{i \ell}) \\
\text{subject to} &\quad \rho_{i \ell} - t_{i \ell} \le \sum_{j \in \Gamma(i)} \eta_{ij} x_{j \ell} \le
  \rho_{i \ell} + s_{i \ell} \quad \forall i, \ell\\ 
&\quad  \sum_{\ell} x_{i \ell} = 1 \quad \forall i \in [n]\\
&\quad s_{i \ell}, t_{i \ell} \geq 0 \quad \forall i, \ell
\end{align*}

Once again, the best choice of $s_{i \ell}, t_{i \ell}$ for each pair $i, \ell$ are so that $s_{i \ell}+ t_{i \ell}= |\rho_{i \ell} - \sum_{j \in \Gamma(i)} \eta_{ij} x_{j \ell}|$. 

We now show a result analogous to the one earlier -- that under appropriate conditions, $\akkest_C$ is approximately equal to the optimal correlation clustering objective value.

\begin{theorem} \label{estimation-cc}
Let $H$ be a weighted graph on $n$ vertices with edge weights $c_{ij}^+, c_{ij}^-$ that add up to $W$.  Suppose the sampling probabilities $\gamma_i$ satisfy the condition
\begin{equation}
\label{eq:w-condition-CC}w_{ij} \le \frac{W \eps^2}{8 k^2 \log n} \frac{\gamma_i \gamma_j}{\sum_u \gamma_u} \quad \text{for all $i,j$.}
\end{equation}
Then, we have $\akkest_C (H, \gamma) \in CC (H) \pm \eps W$, with probability at least $1 - 1/n^2$ (where the probability is over the random choice of $S$).
\end{theorem}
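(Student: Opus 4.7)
\textbf{Proof proposal for Theorem~\ref{estimation-cc}.} The plan is to follow the same two-claim structure used for Theorem~\ref{estimation}, adapted to handle the $k$-labeled LP (where $k = 1/\eps$, by Lemma~\ref{lem:corr-smallk}). Claim~1 will show that with high probability over the choice of $S$, there is a $k$-partition $(A_1,\dots,A_k)$ of $S$ such that $LP_{A_1,\dots,A_k}(V) \ge CC(H) - \eps W$; Claim~2 will show that any feasible LP solution (for any guesses $\rho_{i\ell}$) can be rounded, without loss, to an integral clustering whose value is at least the LP objective. Together these give $\akkest_C(H,\gamma) \in CC(H) \pm \eps W$.

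For Claim~1, fix an optimal $k$-clustering $\cC^* = (C_1^*,\dots,C_k^*)$ of $H$, and take the induced partition $A_\ell = C_\ell^* \cap S$. For each pair $(i,\ell)$ the quantity $\rho_{i\ell} = \sum_{j \in \Gamma(i) \cap A_\ell} \eta_{ij}/\gamma_j$ is a sum of independent random variables with $\E[\rho_{i\ell}] = \sum_{j \in \Gamma(i)\cap C_\ell^*} \eta_{ij}$. Writing $f_i = W\gamma_i/\sum_u \gamma_u$, the hypothesis~\eqref{eq:w-condition-CC} gives $|\eta_{ij}|/\gamma_j \le \eps^2 f_i/(8k^2\log n)$, so the variance of $\rho_{i\ell}$ is at most $\eps^2 d_i^\ell f_i /(8k^2 \log n)$ where $d_i^\ell = \sum_{j \in \Gamma(i)\cap C_\ell^*} |\eta_{ij}|$. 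Applying Bernstein with deviation $t = \eps(d_i^\ell + f_i)/k$ gives a failure probability $\le 1/n^4$ per $(i,\ell)$, so by a union bound over the $nk \le n^2$ pairs, with probability $\ge 1 - 1/n^2$ we have $|\rho_{i\ell} - \E[\rho_{i\ell}]| \le \eps(d_i^\ell + f_i)/k$ simultaneously. Setting $x_{i\ell^*(i)} = 1$ for the cluster $\ell^*(i)$ assigned by $\cC^*$ and choosing $s_{i\ell} + t_{i\ell} = |\rho_{i\ell} - \sum_j \eta_{ij} x_{j\ell}|$ then gives an LP objective at least $CC(H) - O(\eps \sum_{i,\ell}(d_i^\ell/k + f_i/k)) = CC(H) - O(\eps W)$, since $\sum_\ell d_i^\ell = d_i$ and $\sum_i (d_i + k f_i) = O(W)$.

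For Claim~2, given any feasible $(x, s, t)$ with LP objective $L$, define the quadratic form
\[ Q(x) := \sum_i d_i^- + \sum_{i,j,\ell} \eta_{ij} x_{i\ell} x_{j\ell} = C^- + \sum_{ij}\eta_{ij}\chi_{ij}(x), \]
so that $Q$ evaluated at any integral $x$ corresponding to a $k$-clustering equals the MAX-AGREE score. By the same decomposition used in the \maxcut proof (writing $x_{i\ell}\sum_{j}\eta_{ij}x_{j\ell} = x_{i\ell}\rho_{i\ell} + x_{i\ell}(\sum_j \eta_{ij}x_{j\ell} - \rho_{i\ell})$, and bounding the second factor by $s_{i\ell}+t_{i\ell}$ via $x_{i\ell}\in[0,1]$), one gets $Q(x) \ge L$. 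It remains to round $x$ to a vertex of the product of simplices without decreasing $Q$. Fix all variables except $\{x_{i\ell}\}_\ell$ for one vertex $i$; since there is no self-loop, $Q$ becomes linear in $(x_{i\ell})_\ell$, and the constraints $x_{i\ell} \ge 0$, $\sum_\ell x_{i\ell} = 1$ cut out a simplex. A linear function on a simplex attains its maximum at a vertex, so we may replace $(x_{i\ell})_\ell$ by an indicator without decreasing $Q$. Iterating over $i$ produces an integral clustering with value $\ge L$, as desired.

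The main obstacle is the Claim~1 union bound: unlike the \maxcut proof, we now have $nk$ concentration events rather than $n$, and per-label deviations must telescope correctly so that the total error stays $O(\eps W)$. This is exactly why the hypothesis~\eqref{eq:w-condition-CC} carries an extra $k^2$ factor: it gives us enough variance reduction to afford both the union bound over labels and the factor $1/k$ budget per label needed to keep $\sum_\ell d_i^\ell/k \le d_i/k$ rather than $d_i$. Everything else (Bernstein, the constant slack in $W \ge n\Delta/2$ from the observation preceding Lemma~\ref{lem:corr-smallk}, and the simplex-vertex rounding) is routine.
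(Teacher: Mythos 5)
Your proposal is correct and matches the paper's own approach: the same two claims, the same Bernstein-plus-union-bound argument for Claim~1 (with deviation $\Theta(\eps(d_i + f_i)/k)$ per $(i,\ell)$ pair, which is where the $k^2$ in~\eqref{eq:w-condition-CC} is consumed), and the same quadratic-form/simplex-vertex rounding for Claim~2. The only difference is cosmetic --- the paper's published text merely sketches Claim~1 as ``mimics the \maxcut{} proof,'' whereas you spell out the per-label concentration and the budgeting $\sum_\ell d_i^\ell = d_i$, which is a welcome expansion rather than a divergence.
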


Note that the only difference is the $k^2$ term in~\eqref{eq:w-condition-CC}.
\begin{proof}
As before, the proof follows from two complementary claims.

\noindent {\bf Claim 1.} W.h.p. over the choice of $S$, there exists a partitioning $(A_1, \cdots, A_k)$ of $S$ such that $LP_{A_1, \cdots, A_k}(H) \ge CC(H) - \eps W$. 

\noindent {\bf Claim 2.} Consider any feasible solution to the LP above (for some values $\rho_{i \ell}, s_{i \ell}, t_{i \ell}$). There exists a partitioning in $H$ of objective value at least the LP objective.

The proof of Claim 1 mimics the proof in the case of MAXCUT. We use Bernstein's inequality for every $\ell \in [k]$ with deviation being bounded by $\eps (d_i + \Delta)/k$ in each term.  This is why we need an extra $k^2$ term in the denominator of~\eqref{eq:w-condition-CC}. We omit the details.

\begin{proof}[Proof of Claim 2.]
Suppose we have a feasible solution $x$ to the LP of objective value $\sum_{i, \ell} x_{i \ell} (d^{-}_i + \rho_{i \ell}) - \sum_{i, \ell} |\rho_{i \ell} - \sum_{j \in \Gamma(i)} \eta_{ij} x_{j \ell}|$, and we wish to move to a partitioning of at least this value. To this end, define the quadratic form
\[ Q(x) := \sum_{i, \ell} x_{i \ell} \big(d^{-}_i + \sum_{j \in \Gamma(i)} \eta_{ij} x_{j \ell} \big). \]

The first observation is that for any $x \in [0,1]^{n k}$, and any real numbers $\rho_i$, we have
\[ Q(x) \ge \sum_{i, \ell} x_{i \ell} (d^{-}_i + \rho_{i \ell}) - \sum_{i, \ell} |\rho_{i \ell} - \sum_{j \in \Gamma(i)} \eta_{ij} x_{j \ell}|.\]
This is true simply because $Q(x) = \sum_{i, \ell} x_{i \ell} \big( d^{-}_i + \rho_{i \ell} \big) - x_{i \ell} \big( \rho_{i \ell} - \sum_{j \in \Gamma(i)} \eta_{ij} x_{j \ell} \big)$, and the fact that the second term is at least $- |\rho_{i \ell} - \sum_{j \in \Gamma(i)} \eta_{ij} x_{j \ell}|$, as $x_i \in [0,1]$. 

Next, note that the maximum of the form $Q(x)$ over $[0,1]^{n k}$ has to occur at a boundary point, since for any fixing of variables other than the $i$th group of variables $x_{i 1}, \cdots, x_{i k}$ for a given $i$, the form reduces to a linear function of $x_{i \ell}$, $1 \leq \ell \leq k$, which attains maximum at one of the boundaries when subject to the constraint $\sum_{\ell} x_{i \ell} = 1$. Using this observation repeatedly for $i \in [n]$ lets us conclude that there is a $y \in \{0,1\}^{nk}$ such that $Q(y) \ge Q(x)$.  Since any such $y$ corresponds to a partitioning, and $Q(y)$ corresponds to its objective value, the claim follows. 
\end{proof}

This completes the proof of Theorem~\ref{estimation-cc}.
\end{proof}

As in the case of MAXCUT, we can show that the estimation procedure can be used for estimating both in the original graph (with uniform probabilities $q_i$), and with the graph $H$, with sampling probabilities $q_i/p_i$. 
We thus skip stating these claims formally.

\subsection{Induced Linear Programs for Correlation Clustering}\label{inducedLP-CC}
We next need to prove that the $\akkest_C$ procedures have approximately the same values on $G$ and $H$  (with appropriate $\gamma$'s).  To show this, we consider a sample $(S, S')$ drawn as before, and show that
\begin{equation}\label{eq:to-show-cc2}
\max_{(A_1, \cdots, A_k):S} LP_{A_1, \cdots, A_k}^{\gamma}(V) \ge \Delta^2 \max_{(A_1, \cdots, A_k): S} LP_{A_1, \cdots, A_k}^{\alpha}(S') - \eps n \Delta,
\end{equation}
where $\gamma_i = q_i$ and $\alpha_i = q_i/p_i$. As before, we consider the duals of the two programs. This is the main place in which our correlation clustering analysis differs from the one for MAXCUT.   
The dual is as follows
\begin{align*}
\text{minimize} \quad  \sum_i u_i + \sum_{i, \ell} \rho_{i \ell} z_{i \ell}\\
\text{subject to} \quad u_i + \sum_{j \in \Gamma(i)} \eta_{j \ell} z_{j \ell}  \geq d^{-}_i + \rho_{i \ell} \quad \forall i , \ell\\ 
\quad \quad -1 \le z_{i \ell} \leq 1 \quad \forall i \in [n], \ell \in [k]
\end{align*}

The difference now is that for any vector $z$, the optimal choice of $u_i$ is $\max_{\ell} \{d^{-}_i + \rho_{i \ell} - \sum_{j \in \Gamma(i)} \eta_{j \ell} z_{j \ell} \}$.  This is now a maximum of $k$ terms, as opposed to the max of $0$ and one other term in the case of MAXCUT.  But once again, we can think of the dual solution as being the vector $z$, and we again have $u_i \le 2d_i$.  The dual of $LP_{A_1, \cdots, A_k}^{\alpha}(H)$ can be written down similarly. 

As we did earlier, we take a solution $z$ to the dual of the LP on $G$, and use the same values (for the vertices in $S'$) as the solution to the dual on $H$.  The objective values are now as follows.
\begin{align}
\dual_G &= \sum_{i, \ell} \rho_{i \ell} z_{i \ell} + \sum_i \max_{\ell} \{ d^{-}_i + \rho_{i \ell} - \sum_{j \in \Gamma(i)} \eta_{ij} z_{j \ell} \}  \label{eq:dual-g-CC} \\
\dual_H &\le \sum_{i \in S', \ell} \rhotil_{i \ell} z_{i \ell} + \sum_i \max_{\ell} \{ ~ \dtil_{i}^{-} + \rhotil_{i \ell} - \sum_{j \in \Gamma(i) \cap S'} w_{ij} \eta_{ij} z_{j \ell} \}  \label{eq:dual-h-CC}
\end{align}
Here also, we have $\rhotil_{i\ell} = \frac{\rho_{i\ell}}{p_i \Delta^2}$.  We now show that w.p. at least $1-\frac{1}{n^4}$,
\begin{equation}\label{eq:to-show-cc}
\max_{(A_1, \dots, A_k):S} \dual_H \le \frac{1}{\Delta^2} \max_{(A_1, \dots, A_k):S} \dual_G + \frac{\eps n}{\Delta}. \end{equation}

We can use the same trick as in the MAXCUT case, and move to $\dual_H\gs$, in which we use $\rhotil_{i\ell}\gs := \frac{\rho_{i\ell}}{p_i \Delta^2}$, weights $w_{ij}\gs$ as before.  The proof of Lemma~\ref{lem:dual-error} applies verbaitm. Thus it suffices to show that  w.h.p. (assuming $S$ satisfies conditions analogous to Lemma~\ref{lem:helper-maxcut}),
\[ \max_{(A_1, \dots, A_k):S} \dual_H\gs \le \frac{1}{\Delta^2} \max_{(A_1, \dots, A_k):S} \dual_G + \frac{\eps n}{2\Delta}. \]

Consider the expression
\begin{equation}
\label{eq:dual-diff-CC}
\dual^*_H - \frac{1}{\Delta^2} \dual_G = \sum_{i} \left( \sum_{\ell} (Y_i \rhotil^*_{i \ell} z_{i \ell} - \frac{1}{\Delta^2} \cdot \rho_{i \ell} z_{i \ell}) \right) + \left( Y_i \tilde{u}^*_i - \frac{1}{\Delta^2} \cdot u_i \right).
\end{equation} 

We view this as two summations (shown by the parentheses), and bound them separately. The first is relatively easy. We observe that by definition, 
\begin{equation}\label{eq:rhotil-CC} \rhotil^*_{i \ell} = \sum_{j \in \Gamma(i) \cap C_\ell} \frac{w^*_{ij} \eta_{ij} p^*_j}{q_j} = \frac{\eta_{ij}}{p^*_i \Delta^2} \sum_{j \in\Gamma(i) \cap C_{\ell}} \frac{1}{q_i} = \frac{1}{p^*_i \Delta^2} \cdot \rho_{i \ell}.
\end{equation}

This implies that we can write the first term as $\sum_{i \ell} \frac{\rho_{i \ell} z_{i \ell}}{p^*_{i} \Delta^2} \big( Y_i  - p^*_i \big)$.
This will then be bounded via Bernstein's inequality. 

For bounding the second term, we start with the trick of splitting it into two terms by adding a ``hybrid'' term, as follows:
\[ \sum_i Y_i \tilde{u}^*_i - \frac{1}{\Delta^2} \cdot u_i = \sum_i \left( Y_i \tilde{u}^*_i - Y_i \frac{u_i}{p^*_i \Delta^2}\right) + \sum_i \left( Y_i \frac{u_i}{p^*_i \Delta^2} - \frac{1}{\Delta^2} \cdot u_i\right). \]

The second term can again be bounded using Bernstein's inequality, using the fact that $u_i = O(d_i)$.
Let us thus consider the first term. We can appeal to the fact $|\max \{ P_1, \dots, P_k \} - \max \{Q_1, \dots, Q_k\}| \le \sum_i |P_i - Q_i|$, to bound it by
\[ \sum_i Y_i \sum_{\ell} \left| ~ \dtil_{i}^{* -} + \rhotil^*_{i \ell} - \sum_{j \in \Gamma(i) \cap S'} w^*_{ij} \eta_{ij} z_{j \ell} - \frac{1}{p^*_i \Delta^2} \big( d^{-}_i + \rho_{i \ell} - \sum_{j \in \Gamma(i)} \eta_{ij} z_{j \ell}  \big) \right|. \]

Using \eqref{eq:rhotil-CC} and $w^*_{ij} = \frac{1}{p^*_i p^*_j \Delta^2}$ we can bound the above by
\begin{align*}
 &\sum_i Y_i \sum_{\ell} \left| \sum_{j \in \Gamma(i): \eta_{ij} <0} Y_j |\eta_{ij}| w^*_{ij} (1- z_{j \ell}) - \frac{|\eta_{ij}|}{p^*_i \Delta^2} c_{j \ell} \right|  + \sum_i Y_i \sum_{\ell} \left| \sum_{j \in \Gamma(i): \eta_{ij} > 0} |\eta_{ij}| w^*_{ij} z_{j \ell} (Y_j - p^*_j) \right| \qquad  \\  
& =
\sum_i Y_i  \sum_{\ell} \left| \sum_{j \in \Gamma(i): \eta_{ij} <0} |\eta_{ij}| w^*_{ij} (1 - z_{j\ell})(Y_j - p^*_j) \right|  +  \sum_i Y_i \sum_{\ell} \left| \sum_{j \in \Gamma(i): \eta_{ij} > 0} |\eta_{ij}| w^*_{ij} z_{j \ell} (Y_j - p^*_j) \right|  
\end{align*}

This leads to a sum over $\ell$ of terms of the form
\begin{equation}\label{eq:quadratic-CC}
 \sum_i \left| \sum_{j \in \Gamma(i):\eta_{ij} \ne 0} |\eta_{ij}| w_{ij}\gs  (1-z_{j\ell}) (Y_j - p_j\gs) \right|.
\end{equation}

The nice thing now is that we can appeal (in a black-box manner, using the boundedness of $\eta$ and $z$) to the concentration bound for quadratic functions in Section~\ref{sec:blm}, with $\eps$ replaced by $\eps/k$, to conclude the desired concentration inequality. For this goal, we again
use a similar conditioning as for \maxcut, which we state it here.

\npara{``Good'' conditioning.} We say that a choice of $Y$'s is {\em good} if for all $i \in V$, we have 
\[  \sum_{j \in \Gamma(i)} w^*_{ij} |\eta_{ij}| Y_j \leq \frac{\eps \Delta + 2d_i}{p^*_i \Delta^2} \]
\begin{lemma}\label{lem:allgoodCC}
Let $H$ be the weighted graph obtained after sampling with probabilities $p^*_i$. For any vertex $i \in V$, we have 
\[ \Pr \left[ \sum_{j \in \Gamma(i)} w^*_{ij} |\eta_{ij}| Y_j > \frac{\eps \Delta + 2d_i}{p^*_i \Delta^2} \right] < \frac{1}{n^4}. \]
\end{lemma}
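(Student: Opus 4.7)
\textbf{Proof proposal for Lemma~\ref{lem:allgoodCC}.}  The plan is to mimic the proof of Lemma~\ref{lem:allgood}, replacing the unit edge weights used there by the bounded quantities $|\eta_{ij}| \le 1$, and checking that the extra factor never hurts the Bernstein estimates.  The key observation is that, by definition of $w_{ij}^{*} = 1/(p_i^{*} p_j^{*} \Delta^2)$, we can write
\[
\sum_{j \in \Gamma(i)} w_{ij}^{*} |\eta_{ij}| Y_j \;=\; \frac{1}{p_i^{*} \Delta^2} \sum_{j \in \Gamma(i)} \frac{|\eta_{ij}| Y_j}{p_j^{*}},
\]
so we need only control the inner sum.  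Its expectation is precisely $\sum_{j \in \Gamma(i)} |\eta_{ij}| = d_i$, reducing the lemma to an upper tail bound for this linear form in the independent Bernoullis $Y_j$.

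Next I would apply Bernstein's inequality (Theorem~\ref{bern}) to $X := \sum_j |\eta_{ij}| Y_j / p_j^{*}$.  Each summand is bounded in absolute value by $1/p_j^{*} \le \alpha_{\eps} \Delta / \eps$ (ignoring the trivial case $p_j^{*} = 1$), and the variance can be bounded using $|\eta_{ij}|^2 \le |\eta_{ij}|$ by
\[
\Var[X] \;\le\; \sum_{j \in \Gamma(i)} \frac{|\eta_{ij}|^2}{p_j^{*}} \;\le\; \max_j \frac{1}{p_j^{*}} \cdot d_i \;\le\; \frac{\alpha_{\eps} \Delta d_i}{\eps}.
\]
With the deviation $t = d_i + \eps \Delta$, Bernstein gives
\[
\Pr[X - d_i > t] \;\le\; \exp\!\left( -\frac{\eps t^2}{(d_i + t)\, \alpha_{\eps} \Delta} \right) \;\le\; \exp\!\left( -\frac{\eps^2}{2\alpha_{\eps}} \right),
\]
which by our choice $\alpha_{\eps} = \eps^8/(C \log n)$ is at most $1/n^4$ for a suitably large constant $C$.

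Translating back through the normalization $1/(p_i^{*} \Delta^2)$ and noting that $d_i + (d_i + \eps\Delta) = 2d_i + \eps\Delta$ yields exactly the advertised threshold $(\eps\Delta + 2d_i)/(p_i^{*}\Delta^2)$, completing the proof.  There is no real obstacle: the argument is structurally identical to that of Lemma~\ref{lem:allgood}, and the only check is that introducing $|\eta_{ij}| \le 1$ in the weights does not enlarge either the variance or the per-term bound, which is immediate.  The slight slack between $\eps^4$ (used for \maxcut) and $\eps^8$ (used for correlation clustering) in the definition of $\alpha_{\eps}$ is precisely what will later accommodate the extra factors of $k \le 1/\eps$ appearing elsewhere in the correlation clustering analysis; here it only makes the tail bound comfortably stronger than needed.
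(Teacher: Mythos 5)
Your proof is correct and follows essentially the same route as the paper's: pull out the $1/(p_i^*\Delta^2)$ normalization, observe the inner sum has mean $d_i$, bound the variance via $|\eta_{ij}|^2 \le |\eta_{ij}|$ and $\max_j 1/p_j^* \le \alpha_\eps \Delta/\eps$, and invoke Bernstein with deviation $t$ chosen so that $d_i + t$ equals the stated threshold. The only cosmetic divergence is that the paper sets $t = d_i + \eps\Delta/k$ (proving a slightly stronger bound than claimed, and then silently absorbing the extra factor of $k$ in the final simplification, which the choice $\alpha_\eps = \eps^8/(C\log n)$ and $k \le 1/\eps$ accommodates), whereas your $t = d_i + \eps\Delta$ matches the lemma's threshold exactly and makes the resulting bound $\exp(-\eps^2/(2\alpha_\eps)) < n^{-4}$ line up more cleanly; both are fine.
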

\begin{proof}
Fix some $i \in V$, and consider $\sum_{j \in \Gamma(i)} w^*_{ij}  |\eta_{ij}| Y_j = \frac{1}{p^*_i \Delta^2} \left( \sum_{j \in \Gamma(i)} \frac{Y_j  |\eta_{ij}|}{p^*_j} \right)$. The term in the parenthesis has expectation precisely $d_i$. Thus, applying Bernstein using $\max_j \frac{1}{p^*_j} \le \frac{\alpha_\eps \Delta}{\eps}$, together with $\sum_{j \in \Gamma(i)} \frac{|\eta_{ij}| p^*_j (1-p^*_j)}{p^{*2}_j} \le d_i \max_j \frac{1}{p^*_j}$, we have
\[  \Pr \big[ \sum_{j \in \Gamma(i) \cap V_H} \frac{ |\eta_{ij}| Y_j}{p^*_j} > d_i + t \big] \le \exp \left( - \frac{\eps t^2}{(d_i + t) \alpha_\eps \Delta} \right). \]
By setting $t = (d_i + \frac{\eps \Delta}{k})$, the RHS above can be bounded by
\[ \exp \left( - \frac{\eps (d_i + \frac{\eps \Delta}{k})^2}{(2d_i + \frac{\eps \Delta}{k}) \alpha_\eps \Delta} \right) \le \exp\left( - \frac{\eps^2}{2 \alpha_\eps} \right) < \frac{1}{n^4}.\]

This completes the proof, using our choice of $\alpha_\eps$.
\end{proof}

Conditioning on the $Y$ being good, we can obtain the concentration bound for the quadratic function \eqref{eq:quadratic-CC}.

This lets us take a union bound over all possible partitions of $S$ (of which there are at most $k^n$), to obtain~\eqref{eq:to-show-cc}, which then completes the proof of the main result (Theorem~\ref{sampling}), for correlation clustering.

\section*{Acknowledgement}
We thank Michael Kapralov for helpful discussions as we embarked upon this project.
\bibliography{thesis}

\begin{thebibliography}{10}

\bibitem{agarwal2005geometric}
Pankaj~K Agarwal, Sariel Har-Peled, and Kasturi~R Varadarajan.
\newblock Geometric approximation via coresets.
\newblock {\em Combinatorial and computational geometry}, 52:1--30, 2005.

\bibitem{ahn2009graph}
Kook~Jin Ahn and Sudipto Guha.
\newblock Graph sparsification in the semi-streaming model.
\newblock In {\em International Colloquium on Automata, Languages, and
  Programming}, pages 328--338. Springer, 2009.

\bibitem{ahn2012graph}
Kook~Jin Ahn, Sudipto Guha, and Andrew McGregor.
\newblock Graph sketches: sparsification, spanners, and subgraphs.
\newblock In {\em Proceedings of the 31st symposium on Principles of Database
  Systems}, pages 5--14. ACM, 2012.

\bibitem{ahn2015correlation}
KookJin Ahn, Graham Cormode, Sudipto Guha, Andrew McGregor, and Anthony Wirth.
\newblock Correlation clustering in data streams.
\newblock In {\em International Conference on Machine Learning}, pages
  2237--2246, 2015.

\bibitem{ailon2012note}
Nir Ailon and Zohar Karnin.
\newblock A note on: No need to choose: How to get both a ptas and sublinear
  query complexity.
\newblock {\em arXiv preprint arXiv:1204.6588}, 2012.

\bibitem{alon2003random}
Noga Alon, W~Fernandez De~La~Vega, Ravi Kannan, and Marek Karpinski.
\newblock Random sampling and approximation of max-csps.
\newblock {\em Journal of computer and system sciences}, 67(2):212--243, 2003.

\bibitem{alon2009combinatorial}
Noga Alon, Eldar Fischer, Ilan Newman, and Asaf Shapira.
\newblock A combinatorial characterization of the testable graph properties:
  it's all about regularity.
\newblock {\em SIAM Journal on Computing}, 39(1):143--167, 2009.

\bibitem{andoni2011streaming}
Alexandr Andoni, Robert Krauthgamer, and Krzysztof Onak.
\newblock Streaming algorithms via precision sampling.
\newblock In {\em Foundations of Computer Science (FOCS), 2011 IEEE 52nd Annual
  Symposium on}, pages 363--372. IEEE, 2011.

\bibitem{andoni2014sketching}
Alexandr Andoni, Robert Krauthgamer, and David~P Woodruff.
\newblock The sketching complexity of graph cuts.
\newblock {\em arXiv preprint arXiv:1403.7058}, 2014.

\bibitem{arora1995polynomial}
Sanjeev Arora, David Karger, and Marek Karpinski.
\newblock Polynomial time approximation schemes for dense instances of np-hard
  problems.
\newblock In {\em Proceedings of the twenty-seventh annual ACM symposium on
  Theory of computing}, pages 284--293. ACM, 1995.

\bibitem{bansal2004correlation}
Nikhil Bansal, Avrim Blum, and Shuchi Chawla.
\newblock Correlation clustering.
\newblock {\em Machine Learning}, 56(1-3):89--113, 2004.

\bibitem{barak2011}
Boaz Barak, Moritz Hardt, Thomas Holenstein, and David Steurer.
\newblock Subsampling mathematical relaxations and average-case complexity.
\newblock In {\em Proceedings of the Twenty-second Annual ACM-SIAM Symposium on
  Discrete Algorithms}, SODA '11, pages 512--531, Philadelphia, PA, USA, 2011.
  Society for Industrial and Applied Mathematics.
\newblock URL: \url{http://dl.acm.org/citation.cfm?id=2133036.2133077}.

\bibitem{boucheron2003concentration}
St{\'e}phane Boucheron, G{\'a}bor Lugosi, and Pascal Massart.
\newblock Concentration inequalities using the entropy method.
\newblock {\em Annals of Probability}, pages 1583--1614, 2003.

\bibitem{cormode2005improved}
Graham Cormode and Shan Muthukrishnan.
\newblock An improved data stream summary: the count-min sketch and its
  applications.
\newblock {\em Journal of Algorithms}, 55(1):58--75, 2005.

\bibitem{dubhashi2009concentration}
Devdatt~P Dubhashi and Alessandro Panconesi.
\newblock {\em Concentration of measure for the analysis of randomized
  algorithms}.
\newblock Cambridge University Press, 2009.

\bibitem{feige2002optimality}
Uriel Feige and Gideon Schechtman.
\newblock On the optimality of the random hyperplane rounding technique for max
  cut.
\newblock {\em Random Structures \& Algorithms}, 20(3):403--440, 2002.

\bibitem{fotakis2015sub}
Dimitris Fotakis, Michael Lampis, and Vangelis~Th Paschos.
\newblock Sub-exponential approximation schemes for csps: from dense to almost
  sparse.
\newblock {\em arXiv preprint arXiv:1507.04391}, 2015.

\bibitem{frieze1996regularity}
Alan Frieze and Ravi Kannan.
\newblock The regularity lemma and approximation schemes for dense problems.
\newblock In {\em Foundations of Computer Science, 1996. Proceedings., 37th
  Annual Symposium on}, pages 12--20. IEEE, 1996.

\bibitem{giotis2006correlation}
Ioannis Giotis and Venkatesan Guruswami.
\newblock Correlation clustering with a fixed number of clusters.
\newblock In {\em Proceedings of the seventeenth annual ACM-SIAM symposium on
  Discrete algorithm}, pages 1167--1176. Society for Industrial and Applied
  Mathematics, 2006.

\bibitem{goel2010graph}
Ashish Goel, Michael Kapralov, and Sanjeev Khanna.
\newblock Graph sparsification via refinement sampling.
\newblock {\em arXiv preprint arXiv:1004.4915}, 2010.

\bibitem{goel2012single}
Ashish Goel, Michael Kapralov, and Ian Post.
\newblock Single pass sparsification in the streaming model with edge
  deletions.
\newblock {\em arXiv preprint arXiv:1203.4900}, 2012.

\bibitem{goldreich1998property}
Oded Goldreich, Shari Goldwasser, and Dana Ron.
\newblock Property testing and its connection to learning and approximation.
\newblock {\em Journal of the ACM (JACM)}, 45(4):653--750, 1998.

\bibitem{jowhari2011tight}
Hossein Jowhari, Mert Sa{\u{g}}lam, and G{\'a}bor Tardos.
\newblock Tight bounds for lp samplers, finding duplicates in streams, and
  related problems.
\newblock In {\em Proceedings of the thirtieth ACM SIGMOD-SIGACT-SIGART
  symposium on Principles of database systems}, pages 49--58. ACM, 2011.

\bibitem{kapralov2013better}
Michael Kapralov.
\newblock Better bounds for matchings in the streaming model.
\newblock In {\em SODA}, pages 1679--1697. SIAM, 2013.

\bibitem{kapralov2014approximating}
Michael Kapralov, Sanjeev Khanna, and Madhu Sudan.
\newblock Approximating matching size from random streams.
\newblock In {\em SODA}, pages 734--751. SIAM, 2014.

\bibitem{kapralov2015streaming}
Michael Kapralov, Sanjeev Khanna, and Madhu Sudan.
\newblock Streaming lower bounds for approximating max-cut.
\newblock In {\em SODA}, pages 1263--1282. SIAM, 2015.

\bibitem{kapralov20171}
Michael Kapralov, Sanjeev Khanna, Madhu Sudan, and Ameya Velingker.
\newblock (1+ $\omega$ (1))-approximation to max-cut requires linear space.
\newblock In {\em Proceedings of the Twenty-Eighth Annual ACM-SIAM Symposium on
  Discrete Algorithms}, pages 1703--1722. SIAM, 2017.

\bibitem{kapralov2014single}
Michael Kapralov, Yin~Tat Lee, Cameron Musco, Christopher Musco, and Aaron
  Sidford.
\newblock Single pass spectral sparsification in dynamic streams.
\newblock In {\em Foundations of Computer Science (FOCS), 2014 IEEE 55th Annual
  Symposium on}, pages 561--570. IEEE, 2014.

\bibitem{kelner2013spectral}
Jonathan~A Kelner and Alex Levin.
\newblock Spectral sparsification in the semi-streaming setting.
\newblock {\em Theory of Computing Systems}, 53(2):243--262, 2013.

\bibitem{bhm4}
Dmitry Kogan and Robert Krauthgamer.
\newblock Sketching cuts in graphs and hypergraphs.
\newblock In {\em Proc. ITCS}, pages 367--376. ACM, 2015.

\bibitem{mathieu2008yet}
Claire Mathieu and Warren Schudy.
\newblock Yet another algorithm for dense max cut: go greedy.
\newblock In {\em Proceedings of the nineteenth annual ACM-SIAM symposium on
  Discrete algorithms}, pages 176--182. Society for Industrial and Applied
  Mathematics, 2008.

\bibitem{monemizadeh20101}
Morteza Monemizadeh and David~P Woodruff.
\newblock 1-pass relative-error lp-sampling with applications.
\newblock In {\em Proceedings of the twenty-first annual ACM-SIAM symposium on
  Discrete Algorithms}, pages 1143--1160. SIAM, 2010.

\bibitem{rudelson2007}
Mark Rudelson and Roman Vershynin.
\newblock Sampling from large matrices: An approach through geometric
  functional analysis.
\newblock {\em J. ACM}, 54(4), July 2007.
\newblock URL: \url{http://doi.acm.org/10.1145/1255443.1255449}, \href
  {http://dx.doi.org/10.1145/1255443.1255449}
  {\path{doi:10.1145/1255443.1255449}}.

\end{thebibliography}







\end{document}